\definecolor{ForestGreen}{rgb}{0.1333,0.5451,0.1333}
\definecolor{DarkRed}{rgb}{0.80,0,0}
\definecolor{Red}{rgb}{1,0,0}
\theoremstyle{plain}
\newtheorem{theorem}{Theorem}[section]
\newtheorem{lemma}[theorem]{Lemma}
\newtheorem{corollary}[theorem]{Corollary}
\newtheorem{claim}[theorem]{Claim}
\newtheorem{fact}[theorem]{Fact}
\theoremstyle{definition}
\newtheorem{definition}[theorem]{Definition}
\newcommand{\R}{\mathbb{R}}
\newcommand{\E}{\mathbb{E}}
\newcommand{\D}{\mathbb{D}}
\newcommand{\U}{\mathbb{U}}
\newcommand{\calH}{\mathcal{H}}
\newcommand{\calF}{\mathcal{F}}
\newcommand{\calT}{\mathcal{T}}
\newcommand{\calX}{\mathcal{X}}
\newcommand{\calA}{\mathcal{A}}
\newcommand{\bfp}{\mathbf{p}}
\newcommand{\bfq}{\mathbf{q}}
\newcommand{\bfw}{\mathbf{w}}
\newcommand{\norm}[1]{\left\|{#1} \right\|}
\newcommand{\indc}[1]{{\mathbf{1}\left\{{#1}\right\}}}
\newcommand{\cost}{\mathrm{cost}}
\newcommand{\opt}{\mathrm{OPT}}
\newcommand{\ber}{\mathrm{Ber}}
\newcommand{\bin}{\mathrm{Bin}}
\newcommand{\std}{\mathrm{std}}
\newcommand{\pb}{\mathrm{PB}}
\newcommand{\RS}{\mathrm{RS}}
\newcommand{\BBGN}{\mathrm{BBGN}}
\newcommand{\diam}{\mathrm{diam}}
\newcommand{\floor}[1]{\left \lfloor #1 \right \rfloor}
\title{Smoothed Analysis of Online Metric Matching with a Single Sample: Beyond Metric Distortion}
\author{
    Yingxi Li \\
    Stanford University \\
    \texttt{yingxi@stanford.edu}
    \and
    Ellen Vitercik \\
    Stanford University \\
    \texttt{vitercik@stanford.edu}
    \and
    Mingwei Yang \\
    Stanford University \\
    \texttt{mwyang@stanford.edu}
}
\date{\today}
\begin{document}
\maketitle

\begin{abstract}
    In the online metric matching problem, $n$ servers and $n$ requests lie in a metric space.
    Servers are available upfront, and requests arrive sequentially.
    An arriving request must be matched immediately and irrevocably to an available server, incurring a cost equal to their distance. 
    The goal is to minimize the total matching cost.

    We study this problem in $[0, 1]^d$ with the Euclidean metric, when servers are adversarial and requests are independently drawn from distinct distributions that satisfy a mild smoothness condition. Our main result is an $O(1)$-competitive algorithm for $d \neq 2$ that requires no distributional knowledge, relying only on a single sample from each request distribution.
    To our knowledge, this is the first algorithm to achieve an $o(\log n)$ competitive ratio for non-trivial metrics beyond the i.i.d. setting.
    Our approach bypasses the $\Omega(\log n)$ barrier introduced by probabilistic metric embeddings: instead of analyzing the embedding distortion and the algorithm separately, we directly bound the cost of the algorithm on the target metric space of a simple deterministic embedding.
    We then combine this analysis with lower bounds on the offline optimum for Euclidean metrics, derived via majorization arguments, to obtain our guarantees.
\end{abstract}
\section{Introduction}

The online metric matching problem is a classic topic in the design of online algorithms and has been studied extensively for decades.
In this problem, $n$ servers are available upfront and $n$ requests arrive sequentially, with all servers and requests lying in a common metric space.
Each request must be immediately and irrevocably matched to an unmatched server, incurring a cost equal to their distance.
The goal is to minimize the total matching cost.

The online metric matching problem captures a variety of practical scenarios. In ride-hailing, for example, servers and requests correspond to drivers and passengers, and the cost of a match reflects the pickup distance. The Euclidean special case is also natural for applications such as kidney exchange, where patients and donors are represented by high-dimensional feature vectors and the Euclidean distance between them captures compatibility.

When servers and requests are adversarially chosen, \cite{DBLP:conf/soda/MeyersonNP06} give an $O(\log^3 n)$-competitive algorithm for general metrics, later improved to $O(\log^2 n)$~\cite{DBLP:journals/algorithmica/BansalBGN14}.
For Euclidean metrics, an $O(\log n)$-competitive algorithm is given by~\cite{DBLP:conf/icalp/GuptaL12}.
On the other hand, there exists an $\Omega(\log n)$ competitive-ratio lower bound for uniform metrics\footnote{A metric is \emph{uniform} if every pair of distinct points has distance $1$.}~\cite{DBLP:conf/soda/MeyersonNP06}, which holds even in the random-order arrival model~\cite{DBLP:conf/approx/Raghvendra16}. Moreover, no algorithm can achieve an $o(\sqrt{\log n})$ competitive ratio for line metrics~\cite{DBLP:journals/talg/PesericoS23}.

The fully adversarial model is often considered overly pessimistic, since worst-case scenarios rarely arise in practice.
A more realistic assumption is that requests are i.i.d. sampled, while servers remain adversarially chosen.
In this setting, \cite{DBLP:conf/icalp/GuptaGPW19} give an $O((\log \log \log n)^2)$-competitive algorithm for general metrics and distributions, and their algorithm is $O(1)$-competitive for tree metrics.
Recently, \cite{DBLP:journals/ior/YangY26} show that, at the cost of a constant factor in the competitive ratio, the above setting further reduces to the case where all servers and requests are i.i.d. sampled from the same distribution. Building on this, they obtain an $O(1)$-competitive algorithm for Euclidean metrics with \emph{smooth} distributions, where a distribution is smooth if it admits a density with respect to the uniform measure upper bounded by a constant.

The i.i.d. model yields nearly tight guarantees, but it is often unrealistic: in many applications, different requests follow markedly different distributions.
A more promising approach is \emph{smoothed analysis}, which has been widely successful in both theory and practice, and is frequently used to explain the strong empirical performance of heuristics with poor worst-case guarantees~\cite{DBLP:journals/cacm/SpielmanT09,DBLP:conf/soda/ChenGVY24,DBLP:conf/stoc/Anastos0M25}.
In this model, the adversary first selects an adversarial input, which is then randomly perturbed by nature via adding a small noise, and the algorithm is required to perform well in expectation.
Since the perturbed input always follows a smooth distribution, in the more modern and general formulation of smoothed analysis, the adversary directly specifies a smooth distribution over inputs~\cite{DBLP:journals/jacm/HaghtalabRS24,DBLP:conf/esa/CoesterU25}.
In this paper, we adopt the smoothed analysis framework to study online metric matching and obtain improved competitive ratios under significantly weaker assumptions than those in the i.i.d. setting.

\subsection{Our Results}

\paragraph{Main result.} We focus on $[0, 1]^d$ with the Euclidean metric, assuming that servers are adversarial and requests are independently drawn from distinct smooth distributions.
As our main result, we give an algorithm which requires no knowledge of the distributions and is $O(1)$-competitive for $d \neq 2$, given access to one sample from each of the $n$ request distributions (\Cref{thm:cp-euc}).
In particular, the algorithm does not need to know the correspondence between distributions and samples, and its competitive ratio depends polynomially on $d$ and the smoothness parameter, which are usually treated as constants that do not grow with $n$~\cite{DBLP:conf/icalp/GuptaL12,DBLP:conf/sigecom/Kanoria22,DBLP:journals/ior/YangY26}.
To the best of our knowledge, this is the first algorithm to achieve an $o(\log n)$ competitive ratio for non-trivial metrics in a setting beyond the i.i.d. assumption, even when the request distributions are fully known to the algorithm (see \Cref{tab:comparison} for a thorough comparison with prior work).

\begin{table}[t]
\centering
\small
\begin{tabularx}{\textwidth}{
  @{}>{\raggedright\arraybackslash}m{2.2cm}
  >{\raggedright\arraybackslash}m{2cm}
  >{\raggedright\arraybackslash}m{2.2cm}
  >{\raggedright\arraybackslash}m{2.8cm}
  >{\raggedright\arraybackslash}m{2.8cm}
  >{\raggedright\arraybackslash}m{2cm}@{}}
\toprule
\textbf{Requests} & \textbf{Servers} & \textbf{Metric Space} & \textbf{Distributional Knowledge} & \textbf{Competitive Ratio} & \textbf{Reference} \\
\midrule
Adversarial
& Adversarial
& General
& None
& $O(\log^2 n)$
& \cite{DBLP:journals/algorithmica/BansalBGN14} \\
\addlinespace[2pt]
Adversarial
& Adversarial
& Euclidean
& None
& $O(\log n)$
& \cite{DBLP:conf/icalp/GuptaL12} \\
\addlinespace[2pt]
i.i.d.
& Adversarial
& General
& Full knowledge
& $O((\log \log \log n)^2)$ 
& \cite{DBLP:conf/icalp/GuptaGPW19} \\
\addlinespace[2pt]
i.i.d. 
& Adversarial
& Tree
& Full knowledge
& $O(1)$
& \cite{DBLP:conf/icalp/GuptaGPW19} \\
\addlinespace[2pt]
i.i.d. smooth
& Adversarial
& Euclidean
& $O(n^2)$ samples
& $O(1)$ \textbf{for} $d \neq 2$
& \cite{DBLP:journals/ior/YangY26} \\
\addlinespace[2pt]
i.i.d. uniform
& Adversarial
& Euclidean
& Full knowledge
& $O(1)$
& \cite{DBLP:journals/ior/YangY26} \\
\midrule
Independent, non-identical, smooth
& Adversarial
& Euclidean
& One sample per distribution
& \textbf{$O(1)$ for $d\neq 2$}
& \textbf{Our work} \\
\bottomrule
\end{tabularx}
\caption{Comparison with prior work on online metric matching {with the state-of-the-art results}.  Our result is the first to achieve an $o(\log n)$ competitive ratio for non-trivial metrics under distributional assumptions strictly weaker than i.i.d.}
\label{tab:comparison}
\end{table}

\paragraph{A complementary algorithm.} In addition, we present a second algorithm under the same assumptions whose guarantees are not directly comparable (\Cref{thm:cp-euc}).
It is $O(1)$-competitive for $d \ge 3$ and, while still does not improve upon the $O(\log n)$ competitive ratio for $d = 2$ achieved in the adversarial setting, it outperforms the first algorithm in the regime of $d = 2$.

\paragraph{Discussion.} 
Our sample-based assumption---one sample per request distribution---is a natural and practical assumption, widely adopted in recent work on online algorithms~\cite{DBLP:conf/soda/AzarKW14,DBLP:conf/innovations/RubinsteinWW20,DBLP:conf/soda/KaplanNR22,DBLP:conf/focs/DuttingKLR024,DBLP:conf/stoc/Ghuge0W25}.
This assumption is also necessary to some extent, as it is unclear how an algorithm could leverage the distributional properties of requests without some information, even in simpler i.i.d. settings.
In comparison, our assumptions are weaker than those in prior work for the i.i.d. model: the algorithm of \cite{DBLP:conf/icalp/GuptaGPW19} requires full knowledge of the distribution, and the algorithm of \cite{DBLP:journals/ior/YangY26} needs $O(n^2)$ samples.

Finally, we turn to the role of the dimension $d$.
Two-dimensional space is a critical case for Euclidean matching: prior works highlight that the plane behaves fundamentally differently from both the line and higher dimensions~\cite{talagrand2022upper,DBLP:conf/sigecom/Kanoria22,DBLP:journals/ior/YangY26}. Consistent with this, our analysis does not yield a constant-competitive bound in $d = 2$. Nevertheless, our second algorithm achieves strictly better performance in two dimensions than our main algorithm, offering partial progress. Closing the gap in two dimensions remains an important direction for future research.

\subsection{Technical Overview}
Our approach builds on the classical paradigm of metric embeddings but departs in a crucial way.
Previous algorithms embed the input metric space into \emph{Hierarchically well-Separated Trees (HST)} and analyze two pieces separately: the distortion of the embedding and the competitive ratio of the algorithm on HSTs~\cite{DBLP:conf/soda/MeyersonNP06,DBLP:conf/icalp/GuptaL12,DBLP:journals/algorithmica/BansalBGN14}.
This separation is what forces the $\Omega(\log n)$ barrier. Our key idea is to bypass the distortion step entirely: we analyze the algorithm's cost directly in the resulting HST, using the non-contractivity of the embedding to argue about the original metric space.
This avoids the logarithmic loss while retaining the algorithmic utility of the HST. With this perspective in place, we now recall the definitions of metric embeddings and HSTs.

A \emph{(non-contractive) deterministic embedding} of a metric space $(\calX, \delta)$ into another metric space $(\calX', \delta')$ is a mapping $f: \calX \to \calX'$ satisfying $\delta(x, y) \leq \delta'(f(x), f(y))$ for all $x, y \in \calX$, and the \emph{distortion} of an embedding $f$ is the smallest $\kappa \geq 1$ such that $\delta'(f(x), f(y)) \leq \kappa \cdot \delta(x, y)$ for all $x, y \in \calX$.
A \emph{probabilistic embedding} is a distribution over deterministic embeddings, and the distortion of a probabilistic embedding $f$ is the smallest $\kappa \geq 1$ such that $\E[\delta'(f(x), f(y))] \leq \kappa \cdot \delta(x, y)$ for all $x, y \in \calX$.
It is not hard to see that given an embedding of the input metric space to another (usually simpler) metric space with distortion $\kappa_1$, and an algorithm for the latter metric space with competitive ratio $\kappa_2$, we get an algorithm for the input metric space with competitive ratio $\kappa_1 \cdot \kappa_2$.
In other words, a metric embedding with distortion $\kappa$ reduces the problem for a complicated metric to the same problem for a simple metric with the cost of an additional factor $\kappa$ in the competitive ratio.

One of the most popular simple metrics is the HST metric, which is induced by the distance function on an HST.
In particular, an $\alpha$-HST for $\alpha \geq 1$ is a rooted tree where all the leaf nodes are at the same depth. Its edge lengths are defined hierarchically: assuming root-adjacent edges have length $1$, and for every internal node $v$, the edge to its parent is $\alpha$ times longer than each edge to its children (see \Cref{def:hst} for a formal definition and \Cref{fig:hst} for an illustration).
It is known that every $n$-point metric space can be randomly embedded into an $\alpha$-HST with distortion $O(\alpha \log n)$~\cite{DBLP:journals/jcss/FakcharoenpholRT04}, which, when used as a reduction for online problems, contributes an $O(\log n)$ factor to the competitive ratio.

To bypass the $O(\log n)$ distortion factor, we avoid the usual two-step analysis that handles the distortion of the metric embedding and the competitive ratio for HSTs separately.
Instead, we bound directly the cost the algorithm incurred on the resulting HST; by the non-contractivity of the embedding, this also upper-bounds the algorithm’s cost in the original metric space.
We then compare this upper bound to a lower bound for the offline optimum in the original metric space, yielding the desired competitive ratio. Notably, a simple deterministic embedding---despite having unbounded worst-case distortion---suffices for our analysis.

In more detail, we use the canonical dyadic partition of $[0, 1]^d$ to define a $2^d$-ary $2$-HST of height $h$ (a tunable parameter we optimize later), where an HST is $\Delta$-ary if every internal node has exactly $\Delta$ children.
To describe the embedding, the root of the HST corresponds to $[0, 1]^d$, each node is partitioned into $2^d$ subcubes with half the side-length, and the leaves at depth $h$ correspond to disjoint subcubes of side-length $2^{-h}$ (see \Cref{fig:hg} for an illustration).
Then, we map each point $x \in [0, 1]^d$ to the unique leaf node whose corresponding cube contains $x$.

Equipped with the metric embedding, the problem reduces to designing algorithms for HST metrics, where we adopt the \emph{Random-Subtree} algorithm of \cite{DBLP:conf/icalp/GuptaL12} and the algorithm of \cite{DBLP:journals/algorithmica/BansalBGN14}.
Our key observation is that the expected cost analysis for both algorithms reduces to bounding the fluctuations in the number of requests in each subtree.
By controlling these fluctuations (via standard deviations of Poisson–binomial counts) and applying concavity, we show that the worst case occurs when requests are uniformly distributed across children. This yields distribution-free upper bounds for the cost (Theorems~\ref{thm:cost-random-subtree} and~\ref{thm:cost-bbgn}), so no smoothness assumption is needed for the upper bounds.

Smoothness enters only in the lower bound on the offline optimum (\Cref{lem:lb-smooth}).
For $d \geq 2$, the bound follows directly from standard nearest-neighbor distance estimates.
The proof for $d = 1$ is subtler.
Any imbalance between the numbers of servers and requests in a subinterval of $[0, 1]$ forces a proportional number of matches to cross the subinterval's boundary, contributing to the total cost.
Inspecting all subintervals of length $L \in (0, 1)$ yields a lower bound for the offline optimum, which we refer to as \emph{obstacle to matching at length scale $L$}.
It is known that, when all servers and requests are uniformly distributed, the largest obstacle to matching for $d = 1$ occurs at the scale of a constant length~\cite{DBLP:conf/sigecom/Kanoria22}.
To generalize this reasoning to the case where requests are drawn from distinct smooth distributions, we derive a lower bound for the obstacle to matching at a length scale proportional to the smoothness parameter. Using majorization and concavity arguments, we show that this lower bound is minimized when requests are as concentrated as possible.
The desired lower bound then follows from the anti-concentration properties of smooth distributions.

Finally, we generalize the result in \cite{DBLP:journals/ior/YangY26} to incorporate the provided samples from the request distributions in a black-box manner (\Cref{lem:stochastic-to-semi}).
Combining the above upper bounds for the algorithms with the lower bounds for the offline optimum yields our main result.

\subsection{Related Work}

\paragraph{Further related work on online metric matching.}
For adversarial servers and requests, \cite{DBLP:journals/jal/KalyanasundaramP93,DBLP:journals/tcs/KhullerMV94,DBLP:conf/approx/Raghvendra16} give $(2n-1)$-competitive deterministic algorithms for general metrics, which is optimal.
\cite{DBLP:conf/approx/Raghvendra16} provides a primal-dual deterministic algorithm that is $O(\log n)$-competitive for general metrics in the random-order arrival model, which is later shown to exhibit near instance-optimality~\cite{DBLP:conf/focs/NayyarR17} and an $O(\log n)$ competitive ratio for line metrics~\cite{DBLP:conf/compgeom/Raghvendra18}.

When all servers and requests are uniformly distributed on the Euclidean space, \cite{DBLP:conf/sigecom/Kanoria22} gives an $O(1)$-competitive algorithm, which applies the same deterministic metric embedding as ours.
Nevertheless, they employ a different algorithm for the HST metric, and it is unclear how to generalize their analysis to non-identical request distributions.
When all servers and requests are i.i.d. drawn from a general distribution, \cite{DBLP:conf/sigecom/ChenKKZ23} present an algorithm for Euclidean metrics with nearly optimal \emph{regret}, which is defined as the difference between the cost of the algorithm and the offline optimum.
\cite{DBLP:conf/sigecom/BalkanskiFP23} show that the \emph{Greedy} algorithm is $O(1)$-competitive for line metrics when all servers and requests are uniformly distributed.

\cite{DBLP:conf/sigecom/AkbarpourALS22} initiate the study of \emph{unbalanced markets}, where servers outnumber requests by a constant factor, and they show that Greedy is $O(\log^3 n)$-competitive when all servers and requests are uniformly distributed on a line, which is subsequently improved to $O(1)$~\cite{DBLP:conf/sigecom/BalkanskiFP23}.
\cite{DBLP:conf/sigecom/Kanoria22} gives an $O(1)$-competitive algorithm for unbalanced markets when all servers and requests are uniformly distributed on the Euclidean space.
When servers are adversarial and requests are i.i.d. drawn from a smooth distribution, \cite{DBLP:journals/ior/YangY26} achieve competitive-ratio and regret guarantees for unbalanced markets and Euclidean metrics.

Several variants of online metric matching have also been studied, which include online transportation~\cite{DBLP:conf/icalp/HaradaI25,arndt2025competitive}, online metric matching with recourse~\cite{DBLP:conf/approx/0004KS20,DBLP:journals/algorithmica/MegowN25}, online min-cost perfect matching with delays~\cite{DBLP:conf/stoc/EmekKW16,DBLP:conf/approx/AshlagiACCGKMWW17}, online metric matching with stochastic arrivals and departures~\cite{DBLP:journals/ior/AouadS22,DBLP:conf/stoc/AmaniHamedaniAS25}, online min-weight perfect matching~\cite{DBLP:conf/soda/BhoreFT24}, and online matching in geometric random graphs~\cite{DBLP:journals/corr/abs-2306-07891}.

\paragraph{Smoothed analysis of online problems.}
For smoothed analysis of other online problems, prior studies primarily focus on online learning~\cite{balcan2018dispersion, DBLP:conf/nips/KannanMRWW18,balcan2020semi,DBLP:conf/nips/HaghtalabHSY22,DBLP:conf/colt/BlockDGR22,DBLP:conf/sigecom/DurvasulaHZ23,DBLP:journals/jacm/HaghtalabRS24} and online discrepancy minimization~\cite{DBLP:conf/innovations/BansalJM0S22,DBLP:conf/icalp/0001JM0S22,DBLP:journals/jacm/HaghtalabRS24}, whose goal is to minimize regret.
Regarding smoothed competitive analysis, \cite{DBLP:journals/mor/BecchettiLMSV06} consider the multi-level feedback algorithm for non-clairvoyant scheduling, and \cite{DBLP:journals/tcs/SchaferS05} analyze the work function algorithm for metrical task systems.
More recently, \cite{DBLP:conf/esa/CoesterU25} conduct smoothed analysis of classic online metric problems including $k$-server, $k$-taxi, and chasing small sets, where requests are drawn from smooth distributions, and they achieve significantly improved competitive ratios compared to the adversarial setting.
In particular, they allow the distribution followed by each request to depend on the realizations of the past requests and the decisions made by the algorithm thus far, and their algorithms require no distributional knowledge.
However, their technique does not results in improved algorithms for online metric matching.

\paragraph{Sample complexity of online problems.}
A growing line of work studies online algorithms that receive samples from the underlying distributions to go beyond the strong assumption of knowing the distributions in full.
Pioneered by \cite{DBLP:conf/soda/AzarKW14}, competitive guarantees are achieved for sample-based prophet inequalities under various combinations of the arrival model, the number of samples, and combinatorial constraints~\cite{DBLP:conf/ec/CorreaDFS19,DBLP:conf/innovations/RubinsteinWW20,DBLP:conf/soda/CaramanisDFFLLP22,DBLP:journals/mor/0001C0S24,DBLP:conf/stoc/CristiZ24,DBLP:conf/sigecom/0001LTWW024}.
This paradigm has also been adopted by literature on online resource allocation~\cite{DBLP:conf/focs/DuttingKLR024,DBLP:conf/stoc/Ghuge0W25} and online weighted matching~\cite{DBLP:conf/soda/KaplanNR22}.
\section{Preliminaries}

Let $(\calX, \delta)$ be a metric space.
There are $n$ servers $S = \{s_1, \ldots, s_n\}$ available at time $t = 0$, whose locations are known to the algorithm, and $n$ requests $R = (r_1, \ldots, r_n)$ that arrive sequentially.
At each time step $t \in [n]$, the location of the request $r_t$ is revealed, and the algorithm must immediately and irrevocably match it to an unmatched server.
The cost of matching a request $r$ and a server $s$ is $\delta(r, s)$.
We assume that servers are adversarial, and requests are independently drawn from distributions $\D_1, \ldots, \D_n$ support on $\calX$.
Let $\D := \prod_{i=1}^n \D_i$ be the joint request distribution, which is not known to the algorithm.

Given a (randomized) algorithm $\calA$, let $\cost(\calA; S, R)$ be the expected cost of $\calA$ for server set $S$ and request sequence $R$.
For a distribution $\D$ over request sequences, define $\cost(\calA; S, \D) := \E_{R \sim \D}[\cost(\calA; S, R)]$.
Given $S$ and $R$, let $\opt(S, R)$ be the minimum cost of a perfect matching between $S$ and $R$.
Similarly, let $\opt(S, \D) := \E_{R \sim \D} [\opt(S, R)]$ be the expected optimal cost when requests are drawn from $\D$, and $\opt(\D, \D) := \E_{S \sim \D}[\opt(S, \D)]$ be the expected optimal cost when servers and requests are all drawn from $\D$.
We say that an algorithm $\calA$ is \emph{$\alpha$-competitive} for $\alpha \geq 1$ if for all $S$ and $\D$, $\cost(\calA; S, \D) \leq \alpha \cdot \opt(S, \D)$.
Finally, given a matching $M$ between two sets $S$ and $R$, we denote the element in $R$ that is matched to $s \in S$ as $M(s)$.

\begin{definition}[Smoothness]
    We say that a measure $\mu$ over $\calX$, which supports a uniform distribution $\U$, is \emph{$\sigma$-smooth} for $\sigma \in (0, 1]$ if for every measurable subset $\calX' \subseteq \calX$, $\mu(\calX') \leq \U(\calX')/\sigma$.
\end{definition}

\subsection{Majorization and Poisson Binomial}

For a vector $\bfp \in \R^n$, we use $p_{[i]}$ for $i \in [n]$ to denote the $i$-th largest element among $p_1, \ldots, p_n$.

\begin{definition}[Majorization]
    For $\bfp, \bfq \in \R^n$, we say that $\bfp$ \emph{majorizes} $\bfq$, denoted as $\bfp \succ \bfq$, if
    \begin{itemize}
        \item $\sum_{i=1}^k p_{[i]} \geq \sum_{i=1}^k q_{[i]}$ for $k \in [n - 1]$, and
    
        \item $\sum_{i=1}^n p_{[i]} = \sum_{i=1}^n q_{[i]}$.
    \end{itemize}
\end{definition}

For $\bfp \in [0, 1]^n$, let $\pb (\bfp)$ denote the \emph{Poisson binomial} random variable $\sum_{i=1}^n X_i$ with independent $X_i \sim \ber (p_i)$.
For $\bfp, \bfq \in [0, 1]^n$ with $\bfp \succ \bfq$, it is known that $\pb(\bfp)$ is ``less spread out'' than $\pb(\bfq)$ in the sense of \emph{convex order}, which we formally present in the following lemma.

\begin{lemma}[Proposition 12.F.1 in \cite{marshall11}]\label{lemma:convex_order}
    For $\bfp, \bfq \in [0, 1]^n$, if $\bfp \succ \bfq$, then $\E[f(\pb(\bfp))] \leq \E[f(\pb(\bfq))]$ for every convex function $f: \{0, 1, \ldots, n\} \to \R$.
\end{lemma}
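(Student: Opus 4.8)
The plan is to reduce the statement to the case of a single ``Robin Hood'' transfer between two coordinates, and then dispatch that case by a short two-coordinate computation that uses convexity of $f$ only through its nonnegative second differences.

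First I would invoke the Hardy--Littlewood--P\'olya characterization of majorization: $\bfp \succ \bfq$ holds if and only if $\bfq$ can be obtained from $\bfp$ by a finite sequence of \emph{T-transforms}, where each T-transform selects coordinates $i \neq j$ with (after the current relabeling) $p_i \geq p_j$ and replaces the pair $(p_i, p_j)$ by $(p_i - \epsilon,\, p_j + \epsilon)$ for some $\epsilon \in [0,\, p_i - p_j]$, leaving every other coordinate unchanged. Each intermediate vector stays in $[0,1]^n$, since a T-transform only averages two coordinates toward each other; and the target inequality composes transitively along the chain. So it suffices to prove the lemma when $\bfq$ arises from $\bfp$ by one such transfer. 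If a self-contained argument is preferred over citing Muirhead/HLP, I would give the standard inductive construction of the chain — repeatedly equalize the largest over-weighted coordinate with the smallest under-weighted one — and check that all coordinates remain in $[0,1]$ throughout; this is the one step that requires real care.

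With the reduction in hand, fix the two active coordinates $i \neq j$, let $s := p_i + p_j = q_i + q_j$, and let $Y := \sum_{k \notin \{i,j\}} X_k$ denote the sum of the remaining independent Bernoullis, which is distributionally common to $\pb(\bfp)$ and $\pb(\bfq)$. Conditioning on $Y = y$ and writing $g(z) := f(y+z)$, a convex function on $\{0,1,2\}$, it is enough to show $\E[g(\ber(p_i) + \ber(p_j))] \leq \E[g(\ber(q_i) + \ber(q_j))]$ with the two Bernoullis independent. For independent $A \sim \ber(a)$, $B \sim \ber(b)$ with $a + b = s$ and $\pi := ab$, the sum $A + B$ has mass $1 - s + \pi$ on $0$, mass $s - 2\pi$ on $1$, and mass $\pi$ on $2$, so
\[
    \E[g(A+B)] = \bigl[(1-s)\,g(0) + s\,g(1)\bigr] + \pi\,\bigl(g(0) - 2g(1) + g(2)\bigr).
\]
The bracketed term depends only on $s$, hence is identical for the pairs $(p_i,p_j)$ and $(q_i,q_j)$; and $g(0) - 2g(1) + g(2) \geq 0$ is exactly the second-difference inequality for a convex $g$.

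Finally I would compare the products: since the T-transform moves the pair toward balance while preserving the sum, $\lvert q_i - q_j\rvert \leq \lvert p_i - p_j\rvert$, and for a fixed sum the product of two reals equals $(s^2 - (a-b)^2)/4$, a concave function of the difference maximized at equality; hence $q_i q_j \geq p_i p_j$, i.e.\ $\pi_{\bfq} \geq \pi_{\bfp}$. Therefore
\[
    \E\bigl[g(\ber(q_i)+\ber(q_j))\bigr] - \E\bigl[g(\ber(p_i)+\ber(p_j))\bigr] = (\pi_{\bfq} - \pi_{\bfp})\,\bigl(g(0) - 2g(1) + g(2)\bigr) \geq 0,
\]
and averaging over $y$ and chaining over the sequence of T-transforms yields $\E[f(\pb(\bfp))] \leq \E[f(\pb(\bfq))]$. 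The main obstacle, as noted, is the majorization-to-transfer-chain reduction; the rest is the elementary convexity computation above.
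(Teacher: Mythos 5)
Your proposal is correct, and it is worth noting that the paper itself offers no proof of this lemma at all---it is imported verbatim as Proposition 12.F.1 of \cite{marshall11}---so what you have written is a self-contained elementary derivation of a cited black box. Your argument is essentially the classical Hoeffding--Gleser proof: reduce to a single T-transform via the Hardy--Littlewood--P\'olya characterization of majorization, then verify the two-coordinate case by hand. All the delicate points check out. The intermediate vectors stay in $[0,1]^n$ because each T-transform replaces $(p_i,p_j)$ by two convex combinations of $p_i$ and $p_j$; the chain $\bfp = \bfp^{(0)} \succ \bfp^{(1)} \succ \cdots \succ \bfp^{(m)} = \bfq$ lets the inequality compose transitively; conditioning on $Y=y$ is legitimate by mutual independence, and $g(z)=f(y+z)$ is well-defined and convex on $\{0,1,2\}$ since $y \le n-2$, where convexity of $f$ on $\{0,\ldots,n\}$ is exactly nonnegativity of second differences. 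The direction of the inequality is also right: the transfer makes the pair more balanced, so $q_iq_j \ge p_ip_j$, and the difference of expectations is $(q_iq_j - p_ip_j)\bigl(g(0)-2g(1)+g(2)\bigr) \ge 0$, i.e.\ $\E[f(\pb(\bfp))] \le \E[f(\pb(\bfq))]$, matching the convention that $\bfp \succ \bfq$ makes $\pb(\bfp)$ \emph{less} spread out. The only external ingredient is the T-transform chain itself (Lemma 2.B.1 in \cite{marshall11}), which you correctly flag and which is safe to cite; if you want full self-containment, the standard induction on the number of coordinates where $\bfp$ and $\bfq$ differ does the job. In short: a complete and correct proof of a statement the paper merely cites.
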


The following corollary, which compares the mean absolute deviations of two Poisson binomial random variables, is a direct consequence of \Cref{lemma:convex_order} since $f(t) = |t - c|$ is convex.

\begin{corollary}\label{cor:convex-order-mean-abs-devia}
    For $\bfp, \bfq \in [0, 1]^n$ with $\bfp \succ \bfq$, $\E[|\pb(\bfp) - \sum_{i=1}^n p_i|] \leq \E[|\pb(\bfq) - \sum_{i=1}^n q_i|]$.
\end{corollary}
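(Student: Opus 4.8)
The plan is to reduce the claim to a single invocation of \Cref{lemma:convex_order} with a carefully chosen convex test function. The crucial observation is that the majorization hypothesis $\bfp \succ \bfq$ does not only control the partial sums of sorted coordinates; it also imposes the equality $\sum_{i=1}^n p_{[i]} = \sum_{i=1}^n q_{[i]}$, and hence $\sum_{i=1}^n p_i = \sum_{i=1}^n q_i$. Denote this common value by $c$. Consequently, the two centering terms appearing on the two sides of the inequality in the statement coincide, so it suffices to prove $\E[|\pb(\bfp) - c|] \leq \E[|\pb(\bfq) - c|]$ with one and the same constant $c$ on both sides.

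To finish, I would observe that the map $f \colon \{0, 1, \ldots, n\} \to \R$ defined by $f(t) = |t - c|$ is the restriction to the integers of the convex function $t \mapsto |t - c|$ on $\R$, and is therefore convex on $\{0, 1, \ldots, n\}$ in the sense required by \Cref{lemma:convex_order}. Applying that lemma to this $f$ gives $\E[f(\pb(\bfp))] \leq \E[f(\pb(\bfq))]$, which is precisely $\E[|\pb(\bfp) - c|] \leq \E[|\pb(\bfq) - c|]$; substituting back $c = \sum_{i=1}^n p_i = \sum_{i=1}^n q_i$ yields the corollary. There is no genuine obstacle here — the argument is a one-step specialization — and the only subtlety worth recording explicitly is that it relies on the equal-sum clause in the definition of majorization; without it the two centering constants would differ and this clean reduction would break down.
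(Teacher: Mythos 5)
Your proof is correct and matches the paper's argument exactly: the paper also notes that the corollary is a direct consequence of \Cref{lemma:convex_order} applied to the convex function $f(t) = |t - c|$, with the equal-sum clause of majorization guaranteeing that the centering constant is the same on both sides. Your write-up just makes the latter point explicit, which the paper leaves implicit.
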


\subsection{Algorithm with Sample Access to Request Distributions}\label{sec:adv_to_smooth}

When $\D_1, \ldots, \D_n$ are identical, \cite[Theorem 2]{DBLP:journals/ior/YangY26} shows that, given an arbitrary algorithm $\calA$ and $n$ samples from $\D_1$, we can construct another algorithm whose cost can be decomposed into the offline optimum and the cost of $\calA$ when the server set is also drawn from $\D$.
Following similar arguments, we generalize this result to the case with non-identical request distributions, where one sample from each request distribution is provided.
The proof of \Cref{lem:stochastic-to-semi} is deferred to \Cref{sec:proof-lem-stochastic-to-semi}.

\begin{lemma}\label{lem:stochastic-to-semi}
    Given an algorithm $\calA$ and one sample from each request distribution, there exists an algorithm $\calA'$ such that $\cost(\calA'; S, \D) \leq \opt(S, \D) + \cost(\calA; \D, \D)$.
    In particular, $\calA'$ does not need to know the correspondence between distributions and samples.
\end{lemma}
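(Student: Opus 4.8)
\section*{Proof proposal}

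The plan is to build $\calA'$ by treating the $n$ provided samples as a \emph{virtual server set} on which we run $\calA$, and then transporting $\calA$'s virtual assignments to the real servers along a precomputed optimal matching. Write $\tilde R = \{\tilde r_1, \ldots, \tilde r_n\}$ for the multiset of samples, where $\tilde r_i \sim \D_i$ is drawn independently across $i$ and independently of the request sequence; crucially, $\calA'$ will only ever refer to $\tilde R$ as an unordered set, so it needs no correspondence between samples and distributions. Before any request arrives, $\calA'$ computes a minimum-cost perfect matching $M^*$ between $S$ and $\tilde R$. When $r_t$ arrives, $\calA'$ feeds it to an internally simulated copy of $\calA$ initialized with ``server set'' $\tilde R$; if that copy matches $r_t$ to a virtual server $\tilde r \in \tilde R$, then $\calA'$ matches $r_t$ to the real server $M^*(\tilde r) \in S$. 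Because $\calA$ outputs a perfect matching between $R$ and $\tilde R$ and $M^*$ is a bijection between $\tilde R$ and $S$, this composition is a valid perfect matching between $R$ and $S$, produced online.

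For the cost, fix $S$, $R$, and $\tilde R$ and condition on the internal randomness of $\calA$. If $\calA$ assigns $r_t$ to $\tilde r \in \tilde R$, the triangle inequality gives $\delta(r_t, M^*(\tilde r)) \le \delta(r_t, \tilde r) + \delta(\tilde r, M^*(\tilde r))$. Summing over $t \in [n]$ and using that the map sending $r_t$ to its virtual match is a bijection onto $\tilde R$, the first terms sum to the cost of $\calA$ on the instance $(\tilde R, R)$ and the second terms sum to the cost of $M^*$, i.e.\ to $\opt(S, \tilde R)$. Taking the expectation over the randomness of $\calA$ then shows that, for every fixed $R$ and $\tilde R$, the cost incurred by $\calA'$ is at most $\cost(\calA; \tilde R, R) + \opt(S, \tilde R)$.

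Finally, take expectations over $R \sim \D$ and $\tilde R \sim \D$, which are independent. The first term becomes $\E_{\tilde R \sim \D}[\cost(\calA; \tilde R, \D)] = \cost(\calA; \D, \D)$ directly from the definitions. For the second term, $\opt(S, \tilde R)$ depends on $\tilde R$ only through its unordered multiset, and the multiset $\{\tilde r_1, \ldots, \tilde r_n\}$ with $\tilde r_i \sim \D_i$ has the same law as $\{r_1, \ldots, r_n\}$ with $r_i \sim \D_i$; hence $\E_{\tilde R \sim \D}[\opt(S, \tilde R)] = \E_{R \sim \D}[\opt(S, R)] = \opt(S, \D)$. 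Combining gives $\cost(\calA'; S, \D) \le \opt(S, \D) + \cost(\calA; \D, \D)$.

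The main point to get right is the exchangeability step in the last paragraph: one must verify that passing from the ordered tuple of samples to its unordered multiset costs nothing---both because $\opt(S, \cdot)$ is symmetric in the requests and because the simulation of $\calA$ consumes a server \emph{set}---so that the sampled multiset is distributionally interchangeable with the true request multiset; this is also exactly what makes $\calA'$ label-free. Everything else is the triangle inequality plus linearity of expectation, so I do not anticipate additional obstacles.
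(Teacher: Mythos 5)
Your proof is correct and follows essentially the same route as the paper's source for this lemma (the paper does not reprove it, deferring to Theorem~2 of \cite{yang2024online}): run $\calA$ on the sampled multiset as a proxy server set, transport its decisions to $S$ through a precomputed min-cost matching, and combine the triangle inequality with the fact that the proxy set is an independent draw from $\D$. The exchangeability point you highlight at the end is precisely the observation behind the paper's footnote that the i.i.d.\ proof generalizes verbatim to non-identical $\D_i$.
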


\subsection{Hierarchically Well-Separated Trees}\label{sec:hst}

In this subsection, we introduce Hierarchically well-Separated Trees.

\begin{definition}[HSTs]\label{def:hst}
    Given $\alpha \geq 1$, an \emph{$\alpha$-Hierarchically well-Separated Tree ($\alpha$-HST)} is a rooted tree $T = (V, E)$ along with a distance function $\delta: E \to \R_{\geq 0}$ on the edges that satisfies the following properties:
    \begin{enumerate}
        \item For each internal node $v$, all edges from $v$ to its children have the same length.
        \item For each node $v$, if $p(v)$ is the parent of $v$, and $c(v)$ is a child of $v$, then $\delta(p(v), v) = \alpha \cdot \delta(v, c(v))$.
        \item For all leaf nodes $v_1$ and $v_2$, let $p(v_1)$ and $p(v_2)$ be the parents of $v_1$ and $v_2$, respectively.
        Then, $\delta(p(v_1), v_1) = \delta(p(v_2), v_2)$.
    \end{enumerate}
    Moreover, an HST is \emph{$\Delta$-ary} if each internal node has exactly $\Delta$ children, and the \emph{height} of an HST is defined as the number of edges in the path from the root to any leaf node.
\end{definition}

\begin{figure}
    \centering
    \includegraphics[width=0.6\linewidth]{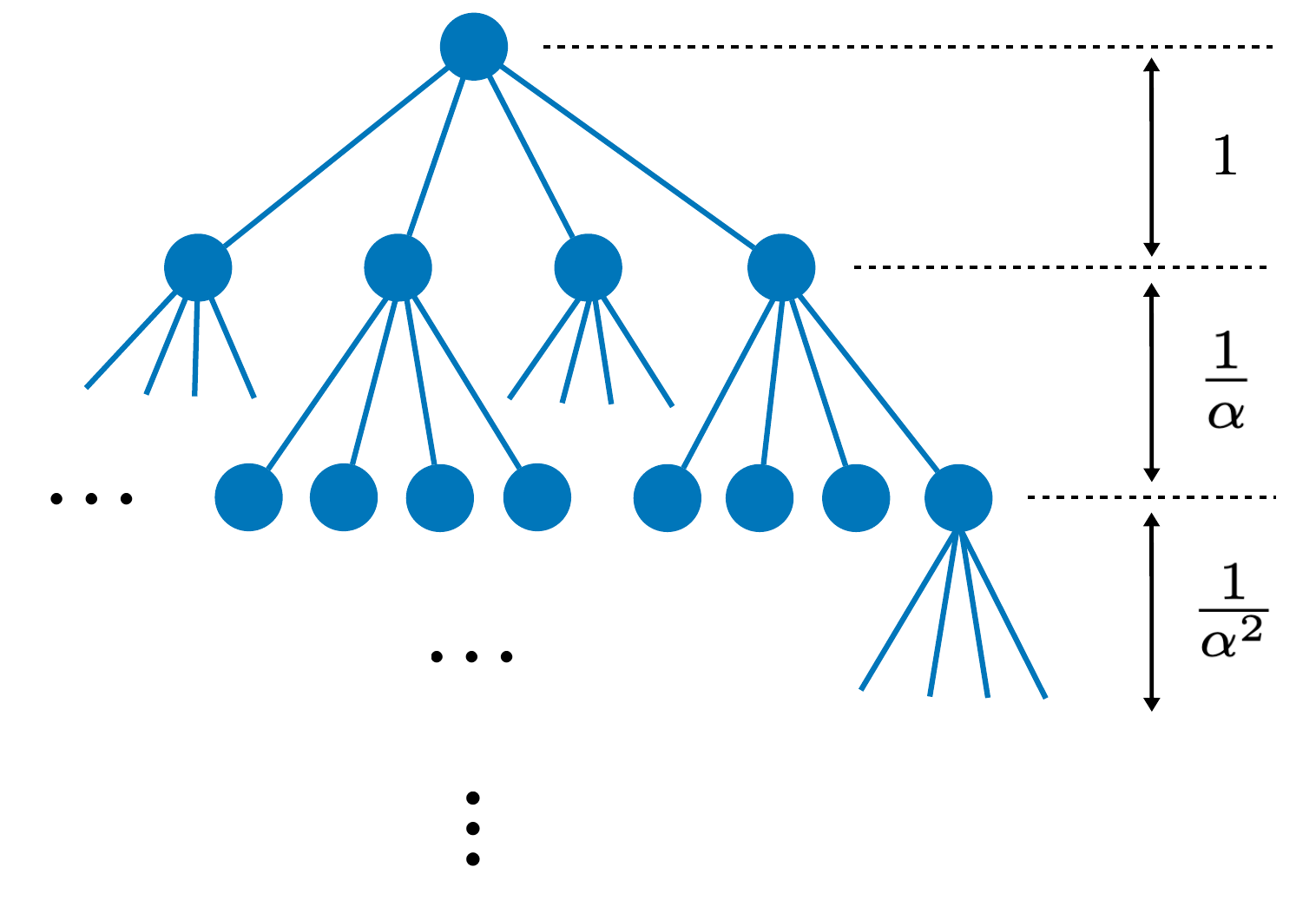}
    \caption{A 4-ary $\alpha$-HST.}
    \label{fig:hst}
\end{figure}

By normalization, we assume the length of each root-edge of an HST to be $1$. See Figure~\ref{fig:hst} for an illustration of a 4-ary $\alpha$-HST.
An HST naturally induces a metric over $V$, where the distance between two nodes is the length of the unique tree path between them.
For HST metrics, unless stated otherwise, we assume that servers and requests {are located at} the leaf nodes.

For each internal node $v$ of a $\Delta$-ary HST, let $c_i(v)$ be the $i$-th child of $v$ for $i \in [\Delta]$. Define $\hat{s}(v)$ and $\hat{r}(v)$ as the number of servers and requests, respectively, in the subtree rooted at $v$.
The height of a node $v$ is defined as the number of edges in the path between $v$ and any leaf node in the subtree rooted at $v$.
For an HST with height $h$, we use $V_j$ to denote the set of nodes with height $j$ for each $j \in \{0, 1, \ldots, h\}$.

When the request sequence is drawn from $\D = \prod_{i=1}^n \D_i$, for each node $v$, let $\mu_{\D_i}(v)$ be the probability that $r_i$ is in the subtree rooted at $v$, and let $\mu_{\D}(v) := \sum_{i=1}^n \mu_{\D_i}(v)$ be the expected number of requests in the subtree rooted at $v$.
Note that $\hat{r}(v) \sim \pb(\mu_{\D_1}(v), \ldots, \mu_{\D_n}(v))$.
\section{Random-Subtree Algorithm}

The \emph{Random-Subtree (RS)} algorithm proposed by \cite{DBLP:conf/icalp/GuptaL12} is a natural starting point for online matching on HST metrics. In the adversarial setting, it achieves an $O(\log \Delta)$ competitive ratio for $\Delta$-ary $\alpha$-HST metrics, provided $\alpha = \Omega(\log \Delta)$. Here we revisit the RS algorithm in the stochastic setting where servers and requests are independently drawn from a distribution $\D = \prod_{i=1}^n \D_i$.
Later on, we will translate the guarantees to the setting where servers are adversarial and requests are drawn from $\D$ by applying \Cref{lem:stochastic-to-semi}.
Rather than its competitive ratio, we upper bound the cost of the $\RS$ algorithm for arbitrary $\alpha \geq 2$.

We start by describing the $\RS$ algorithm, which applies to any HST metric.
For each arriving request $r$, let $v$ be the lowest ancestor of $r$ such that the subtree rooted at $v$ contains at least one available server.
Starting from $v$, the algorithm descends the tree toward a leaf guaranteed to contain an available server. At each internal node $v$, it selects uniformly at random among the children whose subtrees contain an available server, and continues this process until reaching such a leaf. The request $r$ is then matched to a server at this leaf.

We upper bound the cost of the $\RS$ algorithm in the following theorem, where we use $H_k := 1 + 1/2 + \ldots + 1 / k$ to denote the $k$-th harmonic number.

\begin{theorem}\label{thm:cost-random-subtree}
    For any $\Delta$-ary $\alpha$-HST with height $h$ and $\alpha \geq 2$, and for any $\D = \prod_{i=1}^n \D_i$,
    \begin{align*}
        \cost(\RS; \D, \D)
        \leq 6H_{\Delta} \sqrt{n} \sum_{j=0}^{h-1} \left(\frac{\sqrt{\Delta^{h-j}}}{\alpha^{h-j-1}} \sum_{\ell = 0}^{j} \xi^{\ell}\right),
    \end{align*}
    where $\xi := H_{\Delta} / \alpha$.
\end{theorem}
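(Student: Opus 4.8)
The plan is to analyze the cost incurred by the $\RS$ algorithm level by level in the HST, bounding the expected contribution of matches that ``cross'' each height level. Fix a height level $j \in \{0, 1, \ldots, h-1\}$ and consider an internal node $v \in V_{j+1}$ (so $v$ has height $j+1$ and its children have height $j$). A request routed through $v$ contributes to the cost via the edge from $v$ to whichever child it descends into; that edge has length $\alpha^{-(h-j-1)}$ (since root-edges have length $1$ and edge lengths shrink by $\alpha$ each level down). The key combinatorial fact, which I expect follows from the standard analysis of $\RS$ in \cite{DBLP:conf/icalp/GuptaL12}, is that the number of requests that are ever routed from $v$ into a particular child $c_i(v)$ but ``should not'' have gone there — i.e., the number of boundary-crossing matches at $v$ — is controlled by the discrepancy $|\hat r(c_i(v)) - \hat s(c_i(v))|$ between servers and requests in that subtree, up to the harmonic factor $H_\Delta$ coming from the uniform random choice among available children. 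Summing the edge length times this discrepancy over all children of $v$ and then over all $v \in V_{j+1}$ gives the level-$j$ contribution.

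Next I would take expectations. Since servers and requests are both drawn from $\D = \prod_i \D_i$, for each node $w$ we have $\hat r(w) \sim \pb(\mu_{\D_1}(w), \ldots, \mu_{\D_n}(w))$ and independently $\hat s(w)$ has the same distribution. So $\E\bigl[|\hat r(w) - \hat s(w)|\bigr]$ is the mean absolute deviation of a difference of two i.i.d.\ Poisson-binomials; by a standard symmetrization / triangle-inequality argument this is at most $2\,\E\bigl[|\hat r(w) - \mu_\D(w)|\bigr]$, the mean absolute deviation of a single $\pb$ around its mean. Now the crucial step uses \Cref{cor:convex-order-mean-abs-devia}: among all probability vectors with a fixed sum $\mu_\D(w) = m$, the mean absolute deviation $\E[|\pb(\bfp) - m|]$ is \emph{maximized} when $\bfp$ is as ``spread out'' as possible, i.e.\ majorized by everything — so I bound it by the case of $\lceil m\rceil$ coordinates equal to something close to $1$, or more cleanly by $\sqrt{\mathrm{Var}(\pb(\bfp))} \le \sqrt{m}$ via Cauchy–Schwarz, giving $\E[|\hat r(w) - \mu_\D(w)|] \le \sqrt{\mu_\D(w)}$. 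Then, summing $\sqrt{\mu_\D(w)}$ over the nodes $w$ at a fixed height and using concavity of $\sqrt{\cdot}$ together with $\sum_w \mu_\D(w) \le n$ (each request lies in at most one subtree per level, and actually the masses at one height sum to exactly the number of requests reaching that height, which is $\le n$) and the fact that there are at most $\Delta^{h-j}$ nodes at height $j$, Jensen gives $\sum_{w \in V_j} \sqrt{\mu_\D(w)} \le \sqrt{\Delta^{h-j}} \cdot \sqrt{n}$. Combining the edge length $\alpha^{-(h-j-1)}$, the harmonic factor $H_\Delta$, the factor-$2$ symmetrization, and this $\sqrt{\Delta^{h-j} n}$ bound, the level-$j$ term is $O(H_\Delta)\cdot \sqrt n \cdot \sqrt{\Delta^{h-j}}/\alpha^{h-j-1}$.

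The one subtlety that produces the inner geometric sum $\sum_{\ell=0}^{j}\xi^\ell$ with $\xi = H_\Delta/\alpha$ is that a single request does not cross only one level: when $\RS$ fails to find an available server in a small subtree and has to back up to an ancestor $v$ at height $j+1$, the descent from $v$ down to a leaf traverses \emph{all} the levels $0, 1, \ldots, j$ below $v$, and at each such level it again makes a uniform random choice among available children, incurring further cost. So the discrepancy at level $j$ feeds a descent whose cost, telescoping over the lower levels with each level contributing a factor $H_\Delta$ for the branching randomness and $1/\alpha$ for the geometric edge-shrinkage, sums to $\alpha^{-(h-j-1)}\sum_{\ell=0}^{j}(H_\Delta/\alpha)^\ell$. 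I expect the main obstacle to be exactly this bookkeeping: carefully charging each unit of subtree discrepancy at height $j+1$ to the full random root-to-leaf descent it triggers, making sure no match is double-counted across levels, and verifying that the per-level multiplicative factors are precisely $H_\Delta$ (from $\E[\text{steps to hit an available child}] \le H_\Delta$ when choosing uniformly at random among $\le \Delta$ children until success) and $1/\alpha$. Once that charging scheme is set up and the expectation/concavity steps above are applied level by level, summing over $j$ from $0$ to $h-1$ and absorbing the constants (the $2$ from symmetrization, the $\sqrt{\mathrm{Var}} \le \sqrt{\cdot}$ step, and a small slack) into the stated constant $6$ yields the claimed bound.
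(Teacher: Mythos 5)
Your second half---taking expectations of the per-instance discrepancy bound---matches the paper's proof essentially step for step: $\E[(\hat r(v)-\hat s(v))^+]\le 2\,\E[|\hat r(v)-\E\hat r(v)|]$ by the triangle inequality since $\hat r(v)$ and $\hat s(v)$ are identically distributed, then $\E[|\hat r(v)-\E\hat r(v)|]\le \std(\hat r(v))\le\sqrt{\mu_\D(v)}$ by Jensen and the Poisson-binomial variance bound, then $\sum_{v\in V_j}\sqrt{\mu_\D(v)}\le\sqrt{n\Delta^{h-j}}$ by concavity. (Your detour through \Cref{cor:convex-order-mean-abs-devia} is stated with the majorization direction backwards---concentrating mass near $1$ \emph{minimizes} the mean absolute deviation, and the uniform vector maximizes it---but this is harmless because the standard-deviation bound you fall back on is all that is needed, and it is what the paper uses; the convex-order tool only appears in the lower bound on $\opt$.)

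The genuine gap is the first half: the deterministic, per-instance statement that
\[
\cost(\RS;S,R)\;\le\;3H_\Delta\sum_{j=0}^{h-1}\sum_{v\in V_j}\frac{(\hat r(v)-\hat s(v))^+}{\alpha^{h-j-1}}\sum_{\ell=0}^{j}\xi^\ell ,
\]
which you defer to ``the standard analysis of $\RS$'' and yourself flag as the main obstacle. You cannot simply import it from \cite{DBLP:conf/icalp/GuptaL12}: their analysis assumes $\alpha=\Omega(H_\Delta)$, in which regime the cascading costs at lower levels are geometrically absorbed and no inner sum $\sum_\ell\xi^\ell$ appears; here only $\alpha\ge 2$ is assumed, so $\xi=H_\Delta/\alpha$ may exceed $1$ and the cascade must be tracked explicitly. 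The paper does this (\Cref{lem:framework-hst-det}) by strengthening the induction hypothesis to instances that also contain ``root requests'' $R'$: the requests $M_i$ forced across the root into subtree $T_i$ are reinterpreted as root requests of the sub-instance rooted at $c_i(v_r)$, the crossing count is bounded by $\gamma\le H_\Delta(|R'|+\sum_i\eta_i)$ via \Cref{lem:root-edge}, and the cost is split via the decomposition of \Cref{lem:fact-4.8}; unrolling the induction is exactly what produces the $3H_\Delta$ and the geometric factor $\sum_{\ell=0}^{j}\xi^\ell$. Your charging sketch (``each unit of discrepancy at height $j+1$ triggers a descent costing $\alpha^{-(h-j-1)}\sum_\ell\xi^\ell$'') is the right intuition, but without this induction---or an equally careful argument that mismatched requests entering a subtree act as new top-level excess there, with no double counting---the claimed constants and the inner sum are not established, and this is precisely the part of the theorem that is new relative to the adversarial analysis.
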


The bound in \Cref{thm:cost-random-subtree} admits a natural interpretation.
Suppose, for intuition, that every $\D_i$ is uniform over the leaf nodes---this turns out to be the worst case of our analysis.
To facilitate interpretation, we rewrite the bound as
\begin{align}\label{eqn:explain-rs}
    6 \sum_{j=0}^{h - 1} \sqrt{n \Delta^{h - j}} \sum_{\ell = 0}^j \frac{H_{\Delta}^{\ell + 1}}{\alpha^{h - 1 - j + \ell}}.
\end{align}
The term $\sqrt{n \Delta^{h - j}} \approx \sum_{v \in V_j} \E[|\hat{r}(v) - \hat{s}(v)|]$ corresponds to the expected discrepancy between servers and requests at level $j$, i.e., in the subtrees rooted at nodes in $V_j$.
Each discrepancy at level $j$ contributes to at most $H_{\Delta}$ mismatches made by the algorithm, and each such mismatch incurs a cost on the order of $\alpha^{j-h+1}$.
Moreover, a mismatch at level $j$ can cascade downward, creating up to $H_{\Delta}$ additional mismatches at level $j - 1$, which then trigger up to $H_{\Delta}^2$ additional mismatches at level $j-2$, and so on.
This propagation explains the second summation in \eqref{eqn:explain-rs}.

The proof of \Cref{thm:cost-random-subtree} relies on the following bound for the $\RS$ algorithm, whose cost is upper bounded in terms of the number of excess requests in each subtree.

\begin{lemma}\label{lem:framework-hst-det}
    For any $\Delta$-ary $\alpha$-HST with height $h$ and $\alpha \geq 2$, and for all $S$ and $R$,
    \begin{align*}
        \cost(\RS; S, R)
        \leq 3H_{\Delta} \sum_{j=0}^{h-1} \sum_{v \in V_j} \frac{(\hat{r}(v) - \hat{s}(v))^+}{\alpha^{h- j - 1}} \sum_{\ell = 0}^{j} \xi^{\ell},
    \end{align*}
    where $\xi := H_{\Delta} / \alpha$.
\end{lemma}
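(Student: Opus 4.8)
The plan is to track, level by level, the "mismatches" created by the $\RS$ algorithm — informally, the places where the algorithm's routing decisions send a request into a subtree that does not have enough servers to absorb all the requests routed there. I would define, for each internal node $v$ at height $j+1$ with children $c_1(v), \dots, c_\Delta(v)$ at height $j$, a notion of \emph{overflow} at $v$: the number of requests that, upon reaching $v$ during the descent, are routed into a child subtree that is already (or becomes) server-saturated and must therefore continue descending past a level they "should" have stopped at. The root cause of all overflow is the raw discrepancy $(\hat r(v) - \hat s(v))^+$ at the nodes $v$; the factor $\sum_{\ell=0}^j \xi^\ell$ with $\xi = H_\Delta/\alpha$ will come from the geometric cascade of these overflows down the tree.

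The key steps, in order, are as follows. \textbf{(1)} Fix the instance $(S, R)$ and condition on all the algorithm's random choices; it suffices to bound the cost pointwise and then this bound holds in expectation. \textbf{(2)} Decompose $\cost(\RS; S, R)$ as a sum over edges of the HST, weighting each edge by the number of requests whose matching path traverses it; an edge at height $j$ (connecting a node in $V_j$ to its parent in $V_{j+1}$) has length $\alpha^{j}/\alpha^{h-1} = \alpha^{j-(h-1)}$ after normalization, which explains the $1/\alpha^{h-j-1}$ factor. \textbf{(3)} For a node $v \in V_j$, the number of requests whose path passes \emph{below} $v$ but through an edge incident to $v$ is controlled by how many requests are "pushed down" from $v$'s ancestors plus the local discrepancy at $v$. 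Here I would set up a recursion: letting $\Phi_j$ be the total number of "displaced" requests entering level-$j$ subtrees from above, show $\Phi_{j-1} \le H_\Delta \cdot \Phi_j / \alpha + (\text{new discrepancies created at level } j)$ — the $H_\Delta$ arising because, when $\RS$ spreads $k$ surplus requests uniformly at random among at most $\Delta$ available children, the expected number that land in any particular still-needy child is at most $k \cdot H_\Delta / (\text{something})$, and more carefully the total expected number of requests forced to descend further is at most $H_\Delta$ times the number that arrived needing a server but finding their natural child saturated. \textbf{(4)} Unroll the recursion: a discrepancy originating at level $j$ contributes to levels $j, j-1, \dots, 0$ with geometrically shrinking weights $\xi^0, \xi^1, \dots, \xi^j$ (here using $\alpha \ge 2$ so that $\xi = H_\Delta/\alpha$ is the right contraction factor after accounting for edge-length shrinkage), and summing the cost of all edges traversed gives exactly $3 H_\Delta \sum_{j} \sum_{v \in V_j} \frac{(\hat r(v) - \hat s(v))^+}{\alpha^{h-j-1}} \sum_{\ell=0}^j \xi^\ell$, with the constant $3$ absorbing the slack in the uniform-random-child estimate and the "$+1$ level" of descent.

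The main obstacle I anticipate is \textbf{Step (3)}: making precise the claim that the expected number of requests forced to cascade one level deeper is at most $H_\Delta$ times the incoming displacement. The subtlety is that $\RS$ makes its random choice among \emph{currently available} children, so the set of candidate children shrinks over time as subtrees get saturated, and the displacement is not distributed cleanly — one must argue that, regardless of the adversarial order in which children saturate, if $k$ requests in total must be distributed among children that collectively can only absorb $k - m$ of them, then the expected number that get bounced to a deeper level is at most $H_\Delta \cdot m$ (the harmonic number arising from the worst case where children saturate one at a time, each absorbing a $1/\Delta, 1/(\Delta-1), \dots$ fraction of the remaining stream). This is essentially the $\RS$ analysis of \cite{DBLP:conf/icalp/GuptaL12} adapted to track a quantitative displacement bound rather than just an $O(\log \Delta)$ competitive ratio, so I would mirror their coupling/charging argument but carry the constants through explicitly. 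Everything else — the edge-length bookkeeping, the geometric sum, and the final assembly — is routine once the per-level recursion is in hand.
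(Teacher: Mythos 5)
Your overall plan is the same as the paper's: the paper proves a strengthened statement by induction on the height $h$, allowing a set $R'$ of requests located at the root (these play the role of your ``displaced requests entering from above''), and the strengthened bound carries an extra term $|R'|\sum_{j=0}^{h-1}\xi^j$ that is exactly your cascade. The two external ingredients you identify are exactly the ones the paper imports from \cite{DBLP:conf/icalp/GuptaL12}: the bound $\gamma \leq H_\Delta\bigl(|R'| + \sum_i (\hat r(c_i(v_r)) - \hat s(c_i(v_r)))^+\bigr)$ on the expected number of root-edge crossings (their Lemma~4.3), and the decomposition of the total cost into per-subtree costs plus $(2+\beta)\sum_i \E[|M_i|]$ (their Fact~4.8). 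So your Steps (2)--(4), once the displaced requests are formalized as root requests of the child subtrees and the recursion is run as an induction on $h$, reproduce the paper's argument.

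There is, however, one genuine error: your Step (1). You cannot ``condition on all the algorithm's random choices'' and prove the bound pointwise, because the right-hand side of the lemma is deterministic in $(S,R)$ while the number of edge crossings is a random variable whose worst-case realization exceeds the $H_\Delta$-per-unit-of-excess bound. Concretely, take a star with three leaves: leaf $1$ has two servers, leaf $2$ has one server, leaf $3$ has none; the first request arrives at leaf $3$, the second at leaf $2$, the third at leaf $1$. If the displaced first request is routed to leaf $2$, it consumes that server and the second request is displaced as well, giving two crossings, whereas the total excess is $1$ and $H_3 = 11/6 < 2$; only the \emph{expected} number of crossings ($3/2$ here) obeys the bound. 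In a tree of height $h$ this gap compounds geometrically, so a pointwise argument cannot yield the factor $\xi^\ell = (H_\Delta/\alpha)^\ell$. Since your Steps (3)--(4) already reason about expectations, the fix is simply to drop Step (1) and carry the expectation (over the algorithm's coins) through the induction, as the paper does --- note that this forces you to be careful that the sets $M_i$, $X_i$, $\widehat R_i$ of requests entering and absorbed by each subtree are themselves random, which is why the paper takes expectations of the subtree costs in its application of Fact~4.8 and of the inductive hypothesis. Also, your recursion $\Phi_{j-1} \leq H_\Delta \Phi_j/\alpha + \cdots$ conflates the count recursion (which has no $1/\alpha$) with the cost accounting (where $1/\alpha$ enters via the shrinking edge lengths); the final formula you state is right, but the recursion as written is not the one you want to unroll.
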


We defer the proof of \Cref{lem:framework-hst-det} to \Cref{sec:proof-lem-framework-hst-det} and proceed to finish the proof of \Cref{thm:cost-random-subtree}.

\begin{proof}[Proof of \Cref{thm:cost-random-subtree}]
By \Cref{lem:framework-hst-det}, for $\xi := H_{\Delta} / \alpha$,
\begin{align}\label{eqn:ub-exp-total-cost}
    \cost(\RS; \D, \D)
    \leq 3H_{\Delta} \sum_{j=0}^{h-1} \alpha^{j-h+1} \sum_{\ell = 0}^{j} \xi^{\ell} \sum_{v \in V_j} \E[(\hat{r}(v) - \hat{s}(v))^+].
\end{align}
Next, we upper bound the expected number of excess requests for each height $j$.

For every node $v$, since $\hat{r}(v)$ and $\hat{s}(v)$ are identically distributed,
\begin{align}
    \E[(\hat{r}(v) - \hat{s}(v))^+]
    \leq \E[|\hat{r}(v) - \E\hat{r}(v)|] + \E[|\hat{s}(v) - \E\hat{s}(v)|]
    = 2\E[|\hat{r}(v) - \E\hat{r}(v)|]
    \leq 2\cdot \std(\hat{r}(v)),
    \label{eqn:decom-excess-req}
\end{align}
where the last inequality holds by Jensen's inequality.
Moreover,
\begin{align}
    \std(\hat{r}(v))
    = \std(\pb(\mu_{\D_1}(v), \ldots, \mu_{\D_n}(v)))
    \leq \sqrt{\sum_{i=1}^n \mu_{\D_i}(v)}
    = \sqrt{\mu_{\D}(v)}.
    \label{eqn:ub-mbd-std}
\end{align}
As a result, for each $j \in \{0, 1, \ldots, h - 1\}$,
\begin{align}\label{eqn:ub-excess-req-per-level}
    \sum_{v \in V_j} \E[(\hat{r}(v) - \hat{s}(v))^+]
    \leq 2\sum_{v \in V_j} \sqrt{\mu_{\D}(v)}
    \leq 2\sqrt{n|V_j|}
    = 2\sqrt{n \Delta^{h-j}},
\end{align}
where the first inequality follows from \eqref{eqn:decom-excess-req} and \eqref{eqn:ub-mbd-std}, and the second inequality holds by the concavity of $f(t) = \sqrt{t}$ and the fact that $\sum_{v \in V_j} \mu_{\D}(v) = n$.

Finally, combining \eqref{eqn:ub-exp-total-cost} and \eqref{eqn:ub-excess-req-per-level} concludes the proof.
\end{proof}

\subsection{Proof of \Cref{lem:framework-hst-det}}
\label{sec:proof-lem-framework-hst-det}

The proof of \Cref{lem:framework-hst-det} largely follows the proof strategy of \cite[Theorem 4.1]{DBLP:conf/icalp/GuptaL12}, with two key differences.
First, their analysis assumes $\alpha = \Omega(H_{\Delta})$, ensuring the HST is sufficiently well-separated so that the costs incurred at lower levels are dominated by those at higher levels.
In contrast, we do not rely on this separation, and hence the costs incurred at lower levels must be handled explicitly.
Second, our proof is relatively simpler: since we only bound the absolute cost of the algorithm, we avoid the more involved step of characterizing the offline optimum, which is required in \cite{DBLP:conf/icalp/GuptaL12}.

    Recall that the length of each root-edge is $1$, so the length of each root-leaf path is $1 + \beta$, where $\beta \leq 1 / (\alpha - 1) \leq 1$.
    In the proof, instead of assuming all requests lie only at the leaf nodes, we allow some requests to lie at the root, and the matching process of these requests by the $\RS$ algorithm follows the same description.
    We also permit the number of servers to exceed the number of requests.
    For a server set $S$, leaf requests $R$, and root requests $R'$ with $|S| \geq |R| + |R'|$, let $\cost(\RS; S, R \cup R')$ denote the expected cost of the $\RS$ algorithm.
    We will prove the following stronger statement:
    \begin{align*}
        \cost(\RS; S, R \cup R')
        \leq 3H_{\Delta} \left( |R'| \sum_{j=0}^{h - 1} \xi^j + \sum_{j=0}^{h-1} \sum_{v \in V_j} \frac{(\hat{r}(v) - \hat{s}(v))^+}{\alpha^{h - j - 1}} \sum_{\ell = 0}^{j} \xi^{\ell} \right),
    \end{align*}
    which gives the desired bound by setting $R' = \emptyset$.

    We use $v_r$ to denote the root, and let $\gamma$ be the expected number of requests in $R \cup R'$ that traverse a root-edge in the matching produced by the $\RS$ algorithm.
    The following lemma upper bounds $\gamma$ in terms of $|R'|$ and the number of excess requests in the subtrees of $v_r$.

    \begin{lemma}[Lemma 4.3 in \cite{DBLP:conf/icalp/GuptaL12}]\label{lem:root-edge}
        It holds that
        \begin{align*}
            \gamma
            \leq H_{\Delta} \left( |R'| + \sum_{i=1}^{\Delta} (\hat{r}(c_i(v_r)) - \hat{s}(c_i(v_r)))^+ \right).
        \end{align*}
    \end{lemma}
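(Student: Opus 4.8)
The plan is to follow \cite[Lemma~4.3]{DBLP:conf/icalp/GuptaL12}, whose argument transfers verbatim: the statement concerns only the root level of the $\RS$ routing and uses nothing about the structure of requests below the children of $v_r$. The first step is to pass to a clean ``balls into bins'' reformulation. The children $c_1(v_r),\dots,c_\Delta(v_r)$ become $\Delta$ bins, bin $i$ having capacity $a_i := \hat s(c_i(v_r))$; the leaf requests inside the subtree of $c_i(v_r)$ become $n_i := \hat r(c_i(v_r))$ ``type-$i$'' balls, and the requests of $R'$ become $|R'|$ ``free'' balls, all arriving in the order induced by $R\cup R'$. The only behavior of $\RS$ that survives this projection is: a type-$i$ ball arriving while bin $i$ is non-full goes to bin $i$; a type-$i$ ball arriving while bin $i$ is full, or any free ball, is instead routed to a uniformly random non-full bin, which is exactly the event that the corresponding request traverses a root edge. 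Since $|S|\ge|R|+|R'|$ a non-full bin always exists, so the process is well defined and $\gamma$ is the expected number of such ``displaced'' balls.

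The second step is the bookkeeping. Write $D := |R'| + \sum_i (n_i - a_i)^+$, and for each bin track its ``spare'' $\mathrm{cap}_i-\mathrm{rem}_i$ (remaining capacity minus the number of type-$i$ balls not yet arrived); call a placement into bin $i$ \emph{bad} if the spare is non-positive at that instant. A short per-bin capacity accounting---every bin that ever fills receives exactly $a_i$ balls---yields the exact identity $\gamma = D + Z$, where $Z$ is the number of displaced balls landing in a bad bin; equivalently, exactly $D$ of the displaced balls land in bins with positive spare. Moreover, a bad placement into bin $i$ pushes $\mathrm{rem}_i$ strictly above $\mathrm{cap}_i$ and hence forces exactly one extra type-$i$ ball to overflow later; since bin $i$ is already full when that overflow occurs, the overflows spawned transitively by a single displaced ball land in pairwise distinct bins, so each such chain has length at most $\Delta$.

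It remains to bound $\E[Z]$, i.e.\ to show $\E[\gamma]\le H_\Delta D$, and this is the crux. The harmonic factor arises from the interplay of two facts: a displaced ball is routed uniformly among the at most $\Delta$ currently non-full bins, and a bin is removed from all future routing once it fills, so the set of non-full bins is monotonically shrinking; one then charges the expected number of bad landings during the phase with $k$ non-full bins against $D$ so that a unit of $D$ pays at most $1/k$ per phase it survives, which sums to $H_\Delta$. The main obstacle is exactly this amortization: the per-step probability of a bad landing can be arbitrarily close to $1$, so a chain-by-chain bound is too weak, and a naive induction on the number of balls with hypothesis ``$\gamma \le H_{(\#\,\text{non-full bins})}\cdot D$'' is not self-sustaining; the harmonic bound survives only when the routing randomness is used in aggregate, exploiting that every bad landing permanently consumes a capacity unit in a bin destined to fill. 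This aggregate argument is precisely \cite[Lemma~4.3]{DBLP:conf/icalp/GuptaL12}, which I would invoke directly (or reproduce in the reformulation above).
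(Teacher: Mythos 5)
Your proposal is correct and takes essentially the same route as the paper: the paper imports this bound verbatim as Lemma~4.3 of \cite{DBLP:conf/icalp/GuptaL12} without reproving it, and your argument likewise defers the crucial harmonic-amortization step to that same lemma. Your balls-into-bins reformulation and the accounting identity relating the displaced requests to $|R'|+\sum_{i}(\hat{r}(c_i(v_r))-\hat{s}(c_i(v_r)))^+$ look sound, but they add exposition rather than an independent proof, which is exactly the paper's stance.
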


    We analyze the cost of the $\RS$ algorithm by induction on $h$.
    We start from the base case where the HST is a star with $h = 1$.
    Since $\cost(\RS; S, R \cup R') \leq 2\gamma$, by \Cref{lem:root-edge},
    \begin{align*}
        \cost(\RS; S, R \cup R')
        \leq 2H_{\Delta} \left( |R'| + \sum_{i=1}^{\Delta} (\hat{r}(c_i(v_r)) - \hat{s}(c_i(v_r)))^+ \right),
    \end{align*}
    concluding the proof for the base case.

    Now, we assume that $h \geq 2$.
    For $i \in [\Delta]$, let $T_i$ be the $i$-th subtree of $v_r$, rooted at $c_i(v_r)$.
    Let $S_i$ and $R_i$ denote the servers and requests contained in $T_i$, respectively.
    Note that $S = \bigcup_{i=1}^{\Delta} S_i$ and $R = \bigcup_{i=1}^{\Delta} R_i$.
    Let $M_i$ be the set of requests outside $T_i$ that the $\RS$ algorithm matches to servers in $T_i$.
    Then, $R_i \cup M_i$ are all the requests that are either contained in $T_i$ or matched to servers within $T_i$.
    Let $k := \min\{|S_i|, |R_i \cup M_i|\}.$ Order the requests in $R_i \cup M_i$ by their arrival time, and let $X_i$ be the first $k$ of them. Since $T_i$ contains $|S_i|$ servers and the algorithm never bypasses an available server, the requests matched within $T_i$ are exactly those in $X_i$.
    Define $\widehat{R}_i := X_i \cap R_i$.
    Note that $M_i \subseteq X_i$, and $M_i$, $X_i$, and $\widehat{R}_i$ are random variables.

    We recall the following useful decomposition of $\cost(\RS; S, R \cup R')$ from \cite{DBLP:conf/icalp/GuptaL12}.

    \begin{lemma}[Fact 4.8 in \cite{DBLP:conf/icalp/GuptaL12}]\label{lem:fact-4.8}
        It holds that
        \begin{align*}
            \cost(\RS; S, R \cup R')
            \leq \sum_{i=1}^{\Delta} \E[\cost(\RS; S_i, \widehat{R}_i \cup M_i)] + \sum_{i=1}^{\Delta} \E[|M_i|] \cdot (2 + \beta),
        \end{align*}
        where the expectations are taken over the randomness of the $\RS$ algorithm.
    \end{lemma}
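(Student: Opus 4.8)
The plan is to charge the total cost of $\RS$ on the instance $(S, R\cup R')$ edge by edge, splitting each routing path that $\RS$ traces into the portion contained inside one of the subtrees $T_1,\dots,T_\Delta$ and the portion that traverses the two root-edges incident to $v_r$. The first portion will recover the recursive term $\sum_i \E[\cost(\RS; S_i, \widehat R_i\cup M_i)]$, and the second will recover the additive term $\sum_i \E[|M_i|]\cdot(2+\beta)$.

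First I would make precise the coupling already announced in the text: $\RS$ on $T$ ``restricted to $T_i$'' is the same random process as $\RS$ run on the sub-HST $T_i$ with servers $S_i$ and with request sequence $X_i = \widehat R_i\cup M_i$, where the requests of $M_i$ are placed at the root $c_i(v_r)$ of $T_i$ and the requests of $X_i$ are fed in their original order of arrival. The structural input is the invariant that $\RS$ never routes a request past an available server in its current subtree: processing $R\cup R'$ in arrival order, the requests eventually matched inside $T_i$ are therefore exactly the first $k=\min\{|S_i|,|R_i\cup M_i|\}$ of $R_i\cup M_i$, i.e. the set $X_i$; and once a request has descended into a node of $T_i$, all remaining random choices of $\RS$ depend only on the availability pattern inside $T_i$, which is identical in the two processes under the obvious coupling of the internal coin flips. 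Hence the path-segments $\RS$ keeps inside $T_i$ are distributed exactly as the routing paths $\RS$ produces on $(S_i,\widehat R_i\cup M_i)$, and in expectation these segments contribute precisely $\sum_{i=1}^{\Delta} \E[\cost(\RS; S_i,\widehat R_i\cup M_i)]$.

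It then remains to account for the cost charged to the root-edges. A request pays on a root-edge exactly when it is matched outside its home subtree --- where $r\in R_i$ has home $T_i$ and a root request $r\in R'$ has no home --- equivalently exactly when it is relocated into some $M_i$; the number of such requests is $\sum_i\E[|M_i|]$. For each of them I would list the edges it uses that Step~1 has not already counted: the root-edge into the destination subtree, the root-edge leaving its home (for $r\in R'$, the single step down from $v_r$), and, for a request relocated out of some $T_j$, the $T_j$-internal climb from its leaf up to $c_j(v_r)$. Verifying that these extra lengths total exactly $2+\beta$ for every request in $\bigcup_i M_i$ gives the additive term; combining with Step~1 and taking expectations yields the claimed identity.

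I expect the bookkeeping in this last step to be the main obstacle, since it is what pins down the constant $2+\beta$ and it requires care about where the root requests of $R'$ --- and, along the recursion, the relocated requests $M_i$ --- are taken to sit relative to the subtree roots. The coupling in Step~1 is the conceptual core but becomes routine once the ``never bypass an available server'' invariant and the locality of $\RS$'s descent are isolated.
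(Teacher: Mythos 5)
The paper offers no proof of this lemma---it is imported as a black box from Fact~4.8 of Gupta--Lewi---so there is no in-paper argument to compare against. Your decomposition (couple the restriction of $\RS$ to each $T_i$ with a fresh run of $\RS$ on $(S_i,\widehat{R}_i\cup M_i)$, then account for the extra length travelled by each relocated request) is exactly the intended argument, and your Step~1 coupling, resting on the never-bypass invariant and the locality of the descent inside $T_i$, is sound.

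The bookkeeping worry you flag yourself is, however, a real one: it is precisely where the stated \emph{equality} breaks. For a leaf request $r\in R_j$ relocated into $T_i$, the true cost is $2+2\beta$ (climb $\beta$ to $c_j(v_r)$, two root-edges, descent $\beta$), the recursive term charges only the final descent $\beta$, and the surplus is $2+\beta$, as you say. But for a root request $r'\in R'$ matched into $T_i$, the true cost is only $1+\beta$, while the recursive term---in which $r'$ sits at the root $c_i(v_r)$ of the sub-HST---already charges $\beta$; the surplus is $1$, not $2+\beta$. So the displayed identity holds with equality only when $R'=\emptyset$, and in general is only a ``$\leq$''. This is harmless for the paper, which uses the lemma purely as an upper bound inside the induction (and immediately relaxes $2+\beta$ to $3$), but your plan to ``verify that these extra lengths total exactly $2+\beta$ for every request in $\bigcup_i M_i$'' cannot be carried out as stated: you should either prove the inequality version, or charge the requests of $R'$ separately from the cross-subtree leaf requests.
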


    For each $i \in [\Delta]$, let $\eta_i := (\hat{r}(c_i(v_r)) - \hat{s}(c_i(v_r)))^+$ be the number of excess requests in $T_i$.
    By \Cref{lem:root-edge},
    \begin{align}\label{eqn:sum-EMi-ub}
        \sum_{i=1}^{\Delta} \E[|M_i|]
        = \gamma
        \leq H_{\Delta} \left( |R'| + \sum_{i=1}^{\Delta} \eta_i \right).
    \end{align}
    By \Cref{lem:fact-4.8} and \eqref{eqn:sum-EMi-ub},
    \begin{align}
        \cost(\RS; S, R \cup R')
        &\leq \sum_{i=1}^{\Delta} \E[\cost(\RS; S_i, \widehat{R}_i \cup M_i)] + \sum_{i=1}^{\Delta} \E[|M_i|] \cdot (2 + \beta) \notag \\
        &\leq \sum_{i=1}^{\Delta} \E[\cost(\RS; S_i, \widehat{R}_i \cup M_i)] + 3H_{\Delta} \left( |R'| + \sum_{i=1}^{\Delta} \eta_i \right). \label{eqn:upper-cost-rs-s-r-rprimr}
    \end{align}
    Next, we upper bound the first term in \eqref{eqn:upper-cost-rs-s-r-rprimr}.
    By the inductive hypothesis, and since the length of each root-edge of each subtree $T_i$ equals $1 / \alpha$ and $\widehat{R}_i \subseteq R_i$,
    \begin{align*}
        \sum_{i=1}^{\Delta} \E[\cost(\RS; S_i, \widehat{R}_i \cup M_i)]
        &\leq \frac{1}{\alpha} \cdot 3H_{\Delta} \left( \sum_{i=1}^{\Delta} \E[|M_i|] \sum_{j=0}^{h - 2} \xi^j + \sum_{j=0}^{h-2} \sum_{v \in V_j} \frac{(\hat{r}(v) - \hat{s}(v))^+}{\alpha^{h - j - 2}} \sum_{\ell = 0}^{j} \xi^{\ell} \right) \\
        &\leq 3H_{\Delta} \left( \left( |R'| + \sum_{i=1}^{\Delta} \eta_i \right) \sum_{j=1}^{h - 1} \xi^j + \sum_{j=0}^{h-2} \sum_{v \in V_j} \frac{(\hat{r}(v) - \hat{s}(v))^+}{\alpha^{h - j - 1}} \sum_{\ell = 0}^{j} \xi^{\ell} \right),
    \end{align*}
    where the second inequality holds by \eqref{eqn:sum-EMi-ub}.
    Combining the above two displayed equations, we get
    \begin{align*}
        \cost(\RS; S, R \cup R')
        &\leq 3H_{\Delta} \left( \left( |R'| + \sum_{i=1}^{\Delta} \eta_i \right) \sum_{j=0}^{h - 1} \xi^j + \sum_{j=0}^{h-2} \sum_{v \in V_j} \frac{(\hat{r}(v) - \hat{s}(v))^+}{\alpha^{h - j - 1}} \sum_{\ell = 0}^{j} \xi^{\ell} \right) \\
        &= 3H_{\Delta} \left( |R'| \sum_{j=0}^{h - 1} \xi^j + \sum_{j=0}^{h-1} \sum_{v \in V_j} \frac{(\hat{r}(v) - \hat{s}(v))^+}{\alpha^{h - j - 1}} \sum_{\ell = 0}^{j} \xi^{\ell} \right),
    \end{align*}
    where the equality holds by the definition of $\eta_i$'s, concluding the induction.
\section{BBGN Algorithm}

In this section, we turn from the $\RS$ algorithm to the algorithm proposed by \cite{DBLP:journals/algorithmica/BansalBGN14}, henceforth referred to as the BBGN algorithm, which outperforms the $\RS$ algorithm in certain parameter regimes.
The BBGN algorithm is $O(\log n)$-competitive for $\alpha$-HST metrics with any constant $\alpha > 1$ when both $S$ and $R$ are adversarial. Our interest, however, lies in analyzing its cost when both $S$ and $R$ are drawn from $\D$.
As before, we will later translate these guarantees to the setting where servers are adversarial and requests are drawn from $\D$ by applying \Cref{lem:stochastic-to-semi}.

We now state the main theorem of this section.

\begin{theorem}\label{thm:cost-bbgn}
    For any $\Delta$-ary $\alpha$-HST with height $h$ and constant $\alpha \geq 2$ such that $\Delta^{h - 1} \leq n/2$, and for any $\D = \prod_{i=1}^n \D_i$,
    \begin{align*}
        \cost(\BBGN; \D, \D)
        \leq O \left( \sqrt{n} \sum_{j=1}^{h} \frac{\sqrt{\Delta^{h-j+1}}}{\alpha^{h-j}} \cdot \log\frac{n}{\Delta^{h-j}} \right).
    \end{align*}
\end{theorem}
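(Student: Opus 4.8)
The plan is to mirror the structure of the proof of \Cref{thm:cost-random-subtree}: first establish a deterministic per-instance bound on $\cost(\BBGN; S, R)$ in terms of the excess counts $(\hat{r}(v) - \hat{s}(v))^+$ (or a related discrepancy quantity) at each level, then take expectations and use the fact that $\hat{r}(v)$ and $\hat{s}(v)$ are i.i.d.\ Poisson-binomials with mean $\mu_{\D}(v)$ to convert these into $\sqrt{\mu_\D(v)}$ terms, and finally apply concavity of $\sqrt{\cdot}$ together with $\sum_{v \in V_j}\mu_\D(v) = n$ to get $\sum_{v \in V_j}\sqrt{\mu_\D(v)} \le \sqrt{n|V_j|} = \sqrt{n\,\Delta^{h-j}}$. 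The extra logarithmic factor $\log(n/\Delta^{h-j})$ must come out of the BBGN analysis itself: unlike RS, the BBGN algorithm routes requests through a more global potential-based rule, and its cost per unit of discrepancy at level $j$ scales like $\alpha^{-(h-j)}\cdot\log(\text{something})$ rather than $\alpha^{-(h-j)}$ times a geometric cascade. I would dig into \cite{DBLP:journals/algorithmica/BansalBGN14} to extract a clean statement of the form: the expected cost of BBGN on a $\Delta$-ary $\alpha$-HST of height $h$ is $O\!\left(\sum_{j=1}^h \alpha^{-(h-j)}\,\log\!\big(\hat{s}(\text{root-level mass at }j)\big)\cdot (\text{level-}j\text{ discrepancy})\right)$, where the logarithm is of the number of servers below the relevant node (hence the restriction to $V_j'$ with $\hat{s}(v)\ge 2$, which guarantees the logs are well-defined and the single-server subtrees contribute nothing).

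Concretely, the key steps in order: (1) State and prove (or cite-and-adapt) a deterministic lemma analogous to \Cref{lem:framework-hst-det}, bounding $\cost(\BBGN; S, R)$ by roughly $O\!\big(\sum_{j=1}^h \sum_{v \in V_j'} \alpha^{-(h-j)}\,(\hat r(v)-\hat s(v))^+\,\log \hat s(\text{relevant ancestor/node})\big)$; the constraint $\Delta^{h-1}\le n/2$ will be used here to ensure that, in expectation, subtrees at the relevant levels are "heavy" enough (have $\gtrsim 1$ servers on average) so the level-$j$ term behaves like $\Delta^{h-j}$ nodes each with mean mass $n/\Delta^{h-j}$. (2) Take expectations over $S, R \sim \D$; by the i.i.d.\ property, $\E[(\hat r(v)-\hat s(v))^+] \le 2\,\E[|\hat r(v)-\E\hat r(v)|] \le 2\sqrt{\mu_\D(v)}$, exactly as in \eqref{eqn:decom-excess-req}--\eqref{eqn:ub-mbd-std}. (3) Handle the logarithmic weight: bound $\log \hat s(\cdot)$ by a deterministic $O(\log(n/\Delta^{h-j}))$ using a tail/concentration argument on $\hat s(v) \sim \pb(\ldots)$ with mean $\mu_\D(v)\le n$ — more carefully, one wants the expected value of $(\hat r(v)-\hat s(v))^+\cdot \log\hat s(v)$, which by Cauchy–Schwarz or by noting $\log$ grows slowly can be decoupled into $\sqrt{\mu_\D(v)}\cdot O(\log(\E\hat s(v)+2))$, and then $\log(\E \hat s(v)+2) = O(\log(n/\Delta^{h-j}))$ uniformly over $v\in V_j'$ since $\E\hat s(v) = \mu_\D(v)$ is trivially at most $n$, but a matching lower-order control needs the heaviness from $\Delta^{h-1}\le n/2$. (4) Apply concavity: $\sum_{v\in V_j'}\sqrt{\mu_\D(v)} \le \sqrt{n\,|V_j|} = \sqrt{n\,\Delta^{h-j}}$, reindex $j \mapsto h-j+1$ to match the theorem's summation $\sum_{j=1}^h \sqrt{\Delta^{h-j+1}}/\alpha^{h-j}$, and collect the $O(1)$ constants.

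The main obstacle I anticipate is Step~(1): faithfully extracting from the BBGN analysis a \emph{level-decomposed, per-instance} cost bound of the right shape, since their original argument is an amortized potential-function analysis tuned to prove an $O(\log n)$ competitive ratio against the offline optimum, not to give a clean additive bound on the algorithm's absolute cost layer by layer. In particular, pinning down exactly \emph{which} node's server-count the logarithm attaches to at level $j$ (the node itself, its parent, or the root-subtree containing it) and verifying that this logarithm is at most $O(\log(n/\Delta^{h-j}))$ in expectation — rather than $O(\log n)$, which would be too lossy — is the delicate part; this is precisely where the hypothesis $\Delta^{h-1}\le n/2$ earns its keep, by forcing the subtrees carrying nontrivial discrepancy to contain polynomially-in-$(n/\Delta^{h-j})$ many servers so that the "local" logarithm is genuinely $\log(n/\Delta^{h-j})$ and not $\log n$. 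A secondary but routine difficulty is managing the correlation between the discrepancy $(\hat r(v)-\hat s(v))^+$ and the weight $\log\hat s(v)$ inside a single expectation; I expect a short Cauchy–Schwarz or a crude "$\log$ is slowly varying, replace by $\log$ of the mean plus a constant" argument to suffice, absorbing any slack into the $O(\cdot)$.
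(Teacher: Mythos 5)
Your overall skeleton matches the paper's: extract a per-instance, level-decomposed cost bound from \cite{DBLP:journals/algorithmica/BansalBGN14} of the form $O\big(\sum_{j}\alpha^{j-h}\sum_{v\in V_j'}\log(\hat s(v))\sum_i(\hat r(c_i(v))-\hat s(c_i(v)))^+\big)$ (the paper's \Cref{lem:ub-bbgn-worst}, obtained by combining the BBGN per-request bound with the characterization of min-cost matchings on HSTs), then take expectations using $\E[(\hat r-\hat s)^+]\le 2\sqrt{\mu_\D(\cdot)}$ and concavity of $\sqrt{\cdot}$. Your step (1) is exactly what the paper does, and your anticipated difficulty there (which node's server count the log attaches to) is resolved cleanly: the log is of $\hat s(v)$ for the internal node $v$ at which the optimal match crosses, and the discrepancy is summed over $v$'s children.

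The genuine gap is in your step (3). You propose to bound the logarithmic weight per node, via $\log(\E\hat s(v)+2)=O(\log(n/\Delta^{h-j}))$ ``uniformly over $v\in V_j'$.'' This is false: the distributions $\D_i$ are arbitrary, so a single node $v$ at level $j$ can carry essentially all the mass, giving $\E\hat s(v)=\mu_\D(v)=\Theta(n)$ and hence $\log\hat s(v)=\Theta(\log n)$, not $\Theta(\log(n/\Delta^{h-j}))$. The hypothesis $\Delta^{h-1}\le n/2$ does not rescue this --- it says nothing about how the mass is distributed across subtrees, and in the paper it is used only to guarantee that $t\mapsto t\log^2(2n/t)$ is increasing on $(0,n/2]$ in the final combinatorial step. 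The correct argument is an aggregate one, not a per-node one: after reducing to $\sqrt{\Delta}\sum_{v\in V_j'}\log(\hat s(v))\sqrt{\mu_\D(v)}$, the paper applies concavity of $\sqrt{\cdot}$ with the weights $\log^2(\hat s(v))/\sum_u\log^2(\hat s(u))$ (a Cauchy--Schwarz step) to obtain $\sqrt{n\Delta}\cdot\big(\sum_{v\in V_j'}\log^2(\hat s(v))\big)^{1/2}$, and then uses concavity of $\log^2$ subject to $\sum_{v\in V_j'}\hat s(v)\le n$ to bound this by $\sqrt{n\Delta}\cdot\sqrt{|V_j|}\log(2n/|V_j|)$. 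In other words, the worst case is when \emph{both} the request mass and the server counts are spread uniformly; a concentrated configuration has a larger log on one node but far less total $\sqrt{\mu_\D}$ mass, and only the two-stage concavity argument captures this trade-off. Without it, your approach yields only the weaker factor $\log n$ in place of $\log(n/\Delta^{h-j})$ (or no valid bound at all if you insist on the per-node claim). Your secondary worry about the correlation between $\log\hat s(v)$ and the discrepancy is real but is handled by running the concavity chain uniformly over all admissible realizations of $S$, not by Cauchy--Schwarz across the expectation.
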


The rest of this section is devoted to proving \Cref{thm:cost-bbgn}.
For each leaf node $v$ of the HST and $\ell \in \{0, 1, \ldots, h\}$, let $p(v, \ell)$ be the ancestor of $v$ with height $\ell$.
The $\BBGN$ algorithm was originally developed in a \emph{restricted reassignment} online model, which permits limited reassignment of previously arrived requests, and then adapted to the standard online model.
We recall here a bound for the cost of the $\BBGN$ algorithm that applies to all $S$ and $R$, obtained by combining \cite[Lemmas 4.2, 4.3, and 4.4]{DBLP:journals/algorithmica/BansalBGN14}, and we refer to \cite{DBLP:journals/algorithmica/BansalBGN14} for a more formal description of the algorithm.

\begin{lemma}[\cite{DBLP:journals/algorithmica/BansalBGN14}]\label{lem:BBGN-cost-condition}
    Consider an $\alpha$-HST with $\alpha \geq 2$ being a constant.
    For all $S$ and $R$, in addition to the resulting (random) matching $M$ between $S$ and $R$, the $\BBGN$ algorithm also produces a (random) min-cost perfect matching $M_{\opt}$ between $S$ and $R$ such that
    \begin{align*}
        \E \left[ \sum_{r \in R} \delta(M(r), r) \; \Bigg \lvert \; M_{\opt} \right]
        \leq O \left( \sum_{r \in R} \sum_{\ell = 1}^{L(r)} \alpha^{\ell - h} \cdot \log \hat{s}(p(r, \ell)) \right),
    \end{align*}
    where $L(r)$ is the height of the least common ancestor of $r$ and $M_{\opt}(r)$.
\end{lemma}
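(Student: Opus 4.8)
This statement is a repackaging of the analysis of \cite{DBLP:journals/algorithmica/BansalBGN14}, so the plan is to recall their $\BBGN$ algorithm and combine their Lemmas 4.2, 4.3, and 4.4, translating into the notation of this paper. I would first recall that the algorithm is cleanest to describe in the \emph{restricted reassignment} model on an $\alpha$-HST: each arriving request is assigned fractionally to servers by pushing mass down the tree one level at a time (a water-filling rule), and an integral matching is maintained by a level-consistent randomized rounding of the current fractional assignment. Their Lemma 4.2 controls the expected (re)assignment cost triggered by a single arrival; the key features I would extract from it are (i) the mass created by request $r$ only ever reaches ancestors of $r$ up to the least common ancestor of $r$ and its partner $M_{\opt}(r)$ in the evolving min-cost matching tracked by the algorithm---hence only levels $\ell \le L(r)$ are charged---and (ii) the charge at level $\ell$ is, up to a constant, the length of the height-$\ell$ edge times $\log\hat{s}(p(r,\ell))$, the harmonic rounding loss inside the height-$\ell$ subtree that currently holds $\hat{s}(p(r,\ell))$ servers. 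Summing over all $r$ and over $\ell \le L(r)$ reproduces the right-hand side of the lemma, but for the cost measured in the restricted reassignment model.

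Second, I would invoke the passage from the restricted reassignment model to the standard irrevocable model (their Lemmas 4.3 and 4.4): on an $\alpha$-HST with $\alpha \ge 2$ the reassignment algorithm can be simulated online losing only a constant factor, and this simulation changes neither the tracked matching $M_{\opt}$ nor the level structure of the charges. Chaining this with the previous step, and carrying the conditioning on $M_{\opt}$ throughout (all remaining randomness is in the rounding), gives exactly the asserted inequality. A small auxiliary computation identifies the per-level factor $\alpha^{\ell-h}$ with \cite{DBLP:journals/algorithmica/BansalBGN14}'s edge-length factor: in an $\alpha$-HST normalized to root-edge length $1$, the distance from a leaf to its height-$\ell$ ancestor is $\alpha^{1-h}\sum_{i=0}^{\ell-1}\alpha^{i} = \Theta(\alpha^{\ell-h})$ for constant $\alpha \ge 2$, which is absorbed into the $O(\cdot)$.

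The main obstacle is fidelity rather than mathematical difficulty: \cite{DBLP:journals/algorithmica/BansalBGN14} state their guarantees through their own potential functions and ultimately a headline $O(\log^2 n)$ competitive ratio, so one must unpack the rounding step and the charging argument behind their Lemma 4.4 to extract the refined \emph{per-request} form above---in particular, to confirm that the inner sum truncates exactly at $L(r)$ (levels strictly above the least common ancestor of $r$ and $M_{\opt}(r)$ contribute nothing, since $r$'s fractional mass never leaves that subtree once the tracked optimum is respected) and that the logarithmic factor is the \emph{local} quantity $\log\hat{s}(p(r,\ell))$ rather than $\log n$. I would also verify that their ``$2$-HST'' arguments use only the separation $\alpha \ge 2$ and therefore extend verbatim to every constant $\alpha \ge 2$.
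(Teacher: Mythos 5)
Your proposal matches the paper's treatment: the paper does not reprove this statement but obtains it exactly as you propose, by combining Lemmas 4.2, 4.3, and 4.4 of \cite{DBLP:journals/algorithmica/BansalBGN14} (restricted-reassignment analysis plus the passage to the standard online model), translated into the paper's notation with the edge-length normalization giving the $\alpha^{\ell-h}$ factor. Your plan, including the per-request truncation at $L(r)$ and the local $\log \hat{s}(p(r,\ell))$ factor, is the same route.
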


It is well known that min-cost perfect matchings in HSTs admit a useful characterization (see, e.g., \cite[Lemma 4.1]{DBLP:journals/algorithmica/BansalBGN14}): every min-cost perfect matching between $S$ and $R$ can be obtained by a greedy algorithm that repeatedly matches the closest request-server pair and then recurses on the remaining instance.
Equivalently, the number of matched pairs whose two endpoints lie in different subtrees of a given internal node can be expressed as follows.

\begin{fact}\label{fact:hst-opt-sol}
    For any HST, in any min-cost perfect matching between $S$ and $R$, the number of matches whose two endpoints lie in different subtrees of an internal node $v$ equals
    \begin{align*}
        \min \left\{ \sum_{i=1}^{\Delta} (\hat{r}(c_i(v)) - \hat{s}(c_i(v)))^+, \sum_{i=1}^{\Delta} (\hat{s}(c_i(v)) - \hat{r}(c_i(v)))^+ \right\}.
    \end{align*}
\end{fact}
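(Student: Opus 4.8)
The plan is to leverage the greedy characterization of min-cost matchings in HST (indeed, general tree) metrics recalled above (see \cite[Lemma 4.1]{DBLP:journals/algorithmica/BansalBGN14}): every min-cost perfect matching between $S$ and $R$ is obtained by repeatedly pairing a closest request--server pair and recursing. From this I first extract the standard consequence that a min-cost perfect matching $M$ crosses every tree edge $e$ exactly $|R_e - S_e|$ times, where $R_e$ and $S_e$ count the requests and servers in the component of $T \setminus \{e\}$ not containing the root. For \emph{every} perfect matching, the number of matches crossing $e$ is at least $|R_e - S_e|$ by a Hall-type argument (whichever side of $e$ has more requests than servers must send the surplus across $e$), so the cost of $M$, which equals $\sum_e \ell(e) \cdot (\text{number of matches crossing } e)$, is at least $\sum_e \ell(e) \cdot |R_e - S_e|$ for any $M$; since the greedy matching attains this bound edge by edge, every min-cost matching must attain equality on every edge.

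Next I would localize to the internal node $v$. Let $e_0$ be the edge from $v$ to its parent (absent, in which case $\hat{r}(v) = \hat{s}(v)$, when $v$ is the root) and $e_1, \dots, e_\Delta$ the edges from $v$ to $c_1(v), \dots, c_\Delta(v)$, and classify each match of $M$ by which of $e_0, e_1, \dots, e_\Delta$ it crosses. A match whose endpoints lie in two distinct child subtrees of $v$ crosses exactly two of $e_1, \dots, e_\Delta$ and does not cross $e_0$; a match crossing $e_0$ crosses exactly one of $e_1, \dots, e_\Delta$; and every remaining match---internal to a single child subtree, or having no endpoint in the subtree rooted at $v$---crosses none of $e_0, \dots, e_\Delta$. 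Summing the crossing numbers over $e_1, \dots, e_\Delta$ therefore gives $\sum_{i=1}^\Delta (\text{matches crossing } e_i) = 2k + (\text{matches crossing } e_0)$, where $k$ is the number of matches whose endpoints lie in different child subtrees of $v$---the quantity we want. Substituting the per-edge counts from the first step, $k = \frac{1}{2}\left(\sum_{i=1}^\Delta \big|\hat{r}(c_i(v)) - \hat{s}(c_i(v))\big| - \big|\hat{r}(v) - \hat{s}(v)\big|\right)$.

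To finish, put $x_i := \hat{r}(c_i(v)) - \hat{s}(c_i(v))$, so $\sum_{i=1}^\Delta x_i = \hat{r}(v) - \hat{s}(v)$, and apply the elementary identity $\frac{1}{2}\left(\sum_i |x_i| - \big|\sum_i x_i\big|\right) = \min\big\{\sum_i x_i^+, \sum_i x_i^-\big\}$, which follows from $\sum_i |x_i| = \sum_i x_i^+ + \sum_i x_i^-$ and $\sum_i x_i = \sum_i x_i^+ - \sum_i x_i^-$. This turns the previous display into exactly $\min\big\{\sum_{i=1}^\Delta (\hat{r}(c_i(v)) - \hat{s}(c_i(v)))^+, \sum_{i=1}^\Delta (\hat{s}(c_i(v)) - \hat{r}(c_i(v)))^+\big\}$, as claimed; and since the per-edge crossing counts are forced for every min-cost matching, so is $k$, so the identity holds for all of them. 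The step demanding the most care is the match-classification double count---pinning down that a match between two child subtrees contributes $2$, a match crossing $e_0$ contributes $1$, and every other match contributes $0$ to $\sum_i (\text{matches crossing } e_i)$; the per-edge minimum-crossing fact and the algebraic identity are routine, so the proof is short.
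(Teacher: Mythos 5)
Your proof is correct, and it follows the same route the paper (implicitly) intends: the paper states this fact without proof, deriving it as a consequence of the greedy characterization of min-cost matchings on HSTs from \cite[Lemma 4.1]{DBLP:journals/algorithmica/BansalBGN14}, which is exactly your starting point. Your edge-crossing double count at $v$ together with the identity $\frac{1}{2}\bigl(\sum_i |x_i| - \bigl|\sum_i x_i\bigr|\bigr) = \min\bigl\{\sum_i x_i^+, \sum_i x_i^-\bigr\}$ is a clean and complete way to fill in the details the paper omits.
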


Combining \Cref{lem:BBGN-cost-condition} and \Cref{fact:hst-opt-sol}, we can then upper bound the expected cost of the $\BBGN$ algorithm as in the following lemma.

\begin{lemma}\label{lem:ub-bbgn-worst}
    For any $\alpha$-HST with $\alpha \geq 2$ being a constant, and for all $S$ and $R$,
    \begin{align*}
        \cost(\BBGN; S, R)
        \leq O\left( \sum_{j=1}^{h} \alpha^{j-h} \sum_{v \in V_j} \log(\hat{s}(v) + e)  \sum_{i=1}^{\Delta} |\hat{r}(c_i(v)) - \hat{s}(c_i(v))| \right).
    \end{align*}
\end{lemma}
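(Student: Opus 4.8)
\textbf{Proof proposal.} The plan is to start from the conditional bound in \Cref{lem:BBGN-cost-condition} and reorganize the double sum $\sum_{r \in R} \sum_{\ell=1}^{L(r)} \alpha^{\ell-h} \log \hat{s}(p(r,\ell))$ by grouping requests according to which internal node they ``pass through'' at each height. Concretely, for a fixed height $j \in [h]$, a request $r$ contributes a term $\alpha^{j-h}\log\hat{s}(p(r,j))$ to the sum precisely when $j \le L(r)$, i.e.\ when the least common ancestor of $r$ and $M_{\opt}(r)$ has height at least $j$; equivalently, when $r$ and $M_{\opt}(r)$ lie in different subtrees of the height-$(j{-}1)$ ancestor of $r$, or more crudely, when the matched pair $(r, M_{\opt}(r))$ is ``cut'' at some internal node $v$ of height $j$ (meaning $r$ and $M_{\opt}(r)$ are in different children-subtrees of $v$). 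For each internal node $v$ of height $j$, the number of such requests $r$ whose pair is cut at $v$ is, by \Cref{fact:hst-opt-sol}, at most $\sum_{i=1}^{\Delta}(\hat{r}(c_i(v)) - \hat{s}(c_i(v)))^+$, and every such request has $p(r,j) = v$, so its log-factor is exactly $\log\hat{s}(v)$. Summing over the (at most) one height-$j$ node cutting each pair and over all $j$ yields
\begin{align*}
    \sum_{r \in R} \sum_{\ell=1}^{L(r)} \alpha^{\ell-h}\log\hat{s}(p(r,\ell))
    \;\le\; \sum_{j=1}^{h} \alpha^{j-h} \sum_{v \in V_j} \log(\hat{s}(v)) \sum_{i=1}^{\Delta}(\hat{r}(c_i(v)) - \hat{s}(c_i(v)))^+.
\end{align*}

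Next I would restrict the inner sum from $V_j$ to $V_j'$: if $\hat{s}(v) \le 1$ then either $\log\hat{s}(v) = 0$ (when $\hat{s}(v)=1$) or the subtree is empty, so such nodes contribute nothing to the right-hand side, and we may replace $V_j$ by $V_j' = \{v \in V_j : \hat{s}(v) \ge 2\}$ without loss. (One should be slightly careful that $\log\hat{s}(v)$ is well-defined and nonnegative; interpreting $\log$ as $\log_2$ or natural log only changes the hidden constant, which is absorbed by the $O(\cdot)$.) This gives exactly the bound claimed in \Cref{lem:ub-bbgn-worst}, after taking expectations over the randomness of the algorithm (including the random choice of $M_{\opt}$) and using the tower rule, since the right-hand side is a deterministic function of $S$ and $R$.

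The main obstacle, and the step requiring the most care, is the combinatorial reindexing in the first paragraph: making precise the claim that a matched pair $(r, M_{\opt}(r))$ with least-common-ancestor height $L(r)$ is ``cut'' at a unique node of each height $j \le L(r)$ along the root-to-$r$ path, and that the total count of requests cut at a given internal node $v$ is controlled by \Cref{fact:hst-opt-sol}. The subtlety is that \Cref{fact:hst-opt-sol} counts \emph{pairs} crossing between children-subtrees of $v$, and each such crossing pair contributes \emph{two} endpoints that may lie in subtrees of $v$; but only the endpoint that is a request (namely $r$, with $p(r,j)=v$) generates the term $\alpha^{j-h}\log\hat{s}(v)$, and each crossing pair has exactly one request endpoint, so the count of relevant requests at $v$ is at most the number of crossing pairs at $v$, which is at most $\sum_{i=1}^{\Delta}(\hat{r}(c_i(v)) - \hat{s}(c_i(v)))^+$ by \Cref{fact:hst-opt-sol}. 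Once this correspondence is nailed down, the rest is a routine rearrangement of sums and an application of linearity of expectation.
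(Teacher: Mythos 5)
Your overall plan---reindex the double sum from \Cref{lem:BBGN-cost-condition} by (height, node), bound the per-node request count using the structure of min-cost matchings on HSTs, and drop nodes with $\hat{s}(v)\le 1$---is sound and close in spirit to the paper's. However, the key counting step, which you yourself flag as the delicate one, does not go through as written. For a fixed height $j$ and a request $r$ with $j < L(r)$, the server $M_{\opt}(r)$ lies \emph{entirely outside} the subtree rooted at $v = p(r,j)$; the pair $(r, M_{\opt}(r))$ is therefore not ``cut at $v$'' in your sense (endpoints in different children-subtrees of $v$), and it is not among the pairs counted by \Cref{fact:hst-opt-sol}, which counts only matches with \emph{both} endpoints under $v$. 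Consequently the set of requests you must count at $v$ (those with $p(r,j)=v$ and $L(r)\ge j$) can strictly exceed the number of crossing pairs at $v$: for instance, if no child-subtree of $v$ contains a server, \Cref{fact:hst-opt-sol} gives $\min\{\hat{r}(v),0\}=0$ crossing pairs, yet every request under $v$ has $L(r)>j$ and generates a term with $p(r,j)=v$. The first inequality in your chain ``count of relevant requests at $v$ $\le$ number of crossing pairs at $v$ $\le \sum_{i}(\hat r(c_i(v))-\hat s(c_i(v)))^+$'' is therefore false.

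What you actually need is that the number of requests in the subtree of $c_i(v)$ matched by $M_{\opt}$ to servers \emph{outside} that subtree is at most $(\hat r(c_i(v))-\hat s(c_i(v)))^+$; summing over $i$ then gives the correct per-node count. This is true (indeed with equality) for min-cost matchings on HSTs, but it follows from the greedy characterization cited just before \Cref{fact:hst-opt-sol}, not from the min formula of \Cref{fact:hst-opt-sol} itself, so it must be stated and justified separately. A second, smaller issue with your per-level grouping: at nodes $v=p(r,j)$ with $j<L(r)$ one can have $\hat s(v)=0$, so $\log\hat s(p(r,j))$ is not even well-defined, and you need a convention making such terms nonpositive before discarding them. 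The paper sidesteps both problems by a slightly different route: it first bounds $\log\hat s(p(r,\ell))\le\log\hat s(p(r,L(r)))$ and collapses the geometric sum $\sum_{\ell\le L(r)}\alpha^{\ell-h}=O(\alpha^{L(r)-h})$ using $\alpha\ge 2$, so that each request is charged only at its LCA node $v=p(r,L(r))$---where both endpoints lie under $v$, $\hat s(v)\ge 1$, and the number of such requests is exactly the quantity given by \Cref{fact:hst-opt-sol}. Your argument is repairable along the lines above, but as stated it has a genuine gap.
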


\begin{proof}
    By \Cref{lem:BBGN-cost-condition}, the $\BBGN$ algorithm produces an additional min-cost perfect matching $M_{\opt}$ between $S$ and $R$.
    For each $r \in R$, let $L(r)$ be the height of the least common ancestor of $r$ and $M_{\opt}(r)$.
    For each internal node $v$, let $q(v)$ be the number of requests $r$ such that $p(r, L(r)) = v$, which also equals the number of matches in $M_{\opt}$ whose two endpoints lie in different subtrees of $v$.
    By \Cref{fact:hst-opt-sol},
    \begin{align} \label{eqn:ub-q(v)}
        q(v)
        &= \min \left\{ \sum_{i=1}^{\Delta} (\hat{r}(c_i(v)) - \hat{s}(c_i(v)))^+, \sum_{i=1}^{\Delta} (\hat{s}(c_i(v)) - \hat{r}(c_i(v)))^+ \right\} \notag \\
        &\leq \sum_{i=1}^{\Delta} |\hat{r}(c_i(v)) - \hat{s}(c_i(v))|.
    \end{align}
    If we denote $M$ as the (random) matching resulting from the $\BBGN$ algorithm, then by \Cref{lem:BBGN-cost-condition},
    \begin{align}
        \E \left[ \sum_{r \in R} \delta(M(r), r) \; \Bigg \lvert \; M_{\opt} \right]
        &\leq O \left( \sum_{r \in R} \sum_{\ell = 1}^{L(r)} \alpha^{\ell - h} \cdot \log \hat{s}(p(r, \ell)) \right) \notag \\
        &\leq O \left( \sum_{r \in R} \sum_{\ell = 1}^{L(r)} \alpha^{\ell - h} \cdot \log \hat{s}(p(r, L(r))) \right) \notag \\
        &\leq O \left( \sum_{r \in R} \indc{L(r) \geq 1} \cdot \alpha^{L(r) - h} \cdot \log \hat{s}(p(r, L(r))) \right) \notag \\
        &\leq O \left( \sum_{j=1}^h \alpha^{j-h} \sum_{v \in V_j} \log(\hat{s}(v) + e) \cdot q(v) \right) \notag \\
        &\leq O \left( \sum_{j=1}^{h} \alpha^{j-h} \sum_{v \in V_j} \log (\hat{s}(v) + e) \sum_{i=1}^{\Delta} |\hat{r}(c_i(v)) - \hat{s}(c_i(v))| \right), \label{eqn:ub-final-match-bbgn}
    \end{align}
    where the third inequality holds since $\alpha \geq 2$, and the last inequality holds by \eqref{eqn:ub-q(v)}.
    Since \eqref{eqn:ub-final-match-bbgn} does not depend on $M_{\opt}$, it is also an upper bound for $\cost(\BBGN; S, R)$, concluding the proof.
\end{proof}

Now, we are ready to finish the proof of \Cref{thm:cost-bbgn}.

\begin{proof}[Proof of \Cref{thm:cost-bbgn}]
    By \Cref{lem:ub-bbgn-worst},
    \begin{align}
        &\cost(\BBGN; \D, \D) \notag \\
        &\leq O\left( \sum_{j=1}^{h} \alpha^{j-h} \sum_{v \in V_j}  \sum_{i=1}^{\Delta} \E[\log(\hat{s}(v) + e) \cdot |\hat{r}(c_i(v)) - \hat{s}(c_i(v))|] \right) \notag \\
        &\leq O\left( \sum_{j=1}^h \alpha^{j-h} \sum_{v \in V_j} \sqrt{\E[\log^2 (\hat{s}(v) + e)]} \sum_{i=1}^{\Delta} \sqrt{\E[(\hat{r}(c_i(v)) - \hat{s}(c_i(v)))^2]} \right), \label{eqn:cost-bbgn-init}
    \end{align}
    where the second inequality holds by the Cauchy-Schwarz inequality.

    We first bound the last summation in \eqref{eqn:cost-bbgn-init}.
    For all internal node $v$ and $i \in [\Delta]$,
    \begin{align*}
        \sqrt{\E[(\hat{r}(c_i(v)) - \hat{s}(c_i(v)))^2]}
        &\leq \sqrt{2\E[(\hat{r}(c_i(v)) - \mu_{\D}(c_i(v)))^2 + (\hat{s}(c_i(v)) - \mu_{\D}(c_i(v)))^2]} \\
        &= 2 \cdot \std(\hat{r}(c_i(v)))
        \leq 2\sqrt{\mu_{\D}(c_i(v))},
    \end{align*}
    where the first inequality follows since $(x + y)^2 \leq 2(x^2 + y^2)$, the equality holds since $\hat{r}(c_i(v))$ and $\hat{s}(c_i(v))$ are identically distributed with $\E[\hat{r}(c_i(v))] = \E[\hat{s}(c_i(v))] = \mu_{\D}(c_i(v))$, and the last inequality holds by \eqref{eqn:ub-mbd-std}.
    It follows that
    \begin{align}\label{eqn:bbgn-ub-sum-std}
        \sum_{i=1}^{\Delta} \sqrt{\E[(\hat{r}(c_i(v)) - \hat{s}(c_i(v)))^2]}
        \leq 2\sum_{i=1}^{\Delta} \sqrt{\mu_{\D}(c_i(v))}
        \leq 2\sqrt{\Delta \cdot \mu_{\D}(v)},
    \end{align}
    where the last inequality holds by the concavity of $f(t) = \sqrt{t}$ and the fact that $\sum_{i=1}^{\Delta} \mu_{\D}(c_i(v)) = \mu_{\D}(v)$.

    Next, we bound $\E[\log^2(\hat{s}(v) + e)]$ for every internal node $v$.
    Since $f(t) = \log^2 (t + e)$ is concave over $[0, +\infty)$, by Jensen's inequality,
    \begin{align}\label{eqn:bbgn-ub-log2}
        \E[\log^2(\hat{s}(v) + e)]
        \leq \log^2(\E[\hat{s}(v)] + e)
        = \log^2(\mu_{\D}(v) + e).
    \end{align}

    Finally, combining \eqref{eqn:cost-bbgn-init}, \eqref{eqn:bbgn-ub-sum-std}, and \eqref{eqn:bbgn-ub-log2},
    \begin{align*}
        \cost(\BBGN; \D, \D)
        &\leq O \left( \sqrt{\Delta} \sum_{j=1}^h \alpha^{j-h} \sum_{v \in V_j} \log(\mu_{\D}(v) + e) \sqrt{\mu_{\D}(v)} \right) \\
        &\leq O\left( \sqrt{\Delta} \sum_{j=1}^h \alpha^{j - h} \cdot |V_j| \cdot \sqrt{\frac{n}{|V_j|}} \log \left(\frac{n}{|V_j|} + e\right) \right) \\
        &\leq O\left( \sqrt{n\Delta} \sum_{j=1}^h \alpha^{j - h} \sqrt{\Delta^{h-j}} \log \frac{n}{\Delta^{h-j}} \right),
    \end{align*}
    where the second inequality holds by the concavity of $f(t) = \sqrt{t} \log (t + e)$ over $[0, +\infty)$ and the fact that $\sum_{v \in V_j} \mu_{\D}(v) = n$, and the third inequality holds since $|V_j| = \Delta^{h - j} \leq n / 2$ for every $j \in [h]$.
    This concludes the proof.
\end{proof}
\section{Euclidean Metrics}
\label{sec:euc}

In this section, we turn to the Euclidean setting, where $\calX = [0, 1]^d$ and $\delta$ is the Euclidean distance. This case is widely studied in online metric matching.
We begin by stating our paper's main theorem, which characterizes the competitive ratio achievable under $\sigma$-smooth request distributions.
Recall that a measure $\mu$ over $[0, 1]^d$ is $\sigma$-smooth for $\sigma \in (0, 1]$ if $\mu(\calX') \leq \U(\calX') / \sigma$ for every measurable subset $\calX' \subseteq [0, 1]^d$, where $\U$ is the uniform distribution over $[0, 1]^d$.

\begin{theorem}[Main Theorem]\label{thm:cp-euc}
    For $[0, 1]^d$ with the Euclidean distance, suppose that $S$ is adversarial and $R$ is drawn from $\D = \prod_{i=1}^n \D_i$, where each $\D_i$ is $\sigma$-smooth for $\sigma \in (0, 1]$.
    Moreover, suppose that the algorithm is given one sample from each $\D_i$.
    Then, there exists an algorithm $\calA_{\RS}$ with competitive ratio
    \begin{align*}
        \begin{cases}
            O(\sigma^{-1}), & d = 1,\\
            O(\sigma^{-\frac{1}{2}}n^{\frac{1}{2} - \frac{1}{2\log(25/12)}}), & d = 2,\\
            O(d^{\frac{3}{2}} \sigma^{-\frac{1}{d}}), & d \geq 3,
        \end{cases}
    \end{align*}
    and an algorithm $\calA_{\BBGN}$ with competitive ratio
    \begin{align*}
        \begin{cases}
            O(\sigma^{-1}\log n), & d = 1,\\
            O(\sigma^{-\frac{1}{2}}\log^2 n), & d = 2,\\
            O(d^{\frac{3}{2}} \sigma^{-\frac{1}{d}}), & d \geq 3.
        \end{cases}
    \end{align*}
    In particular, both algorithms do not need to know the correspondence between distributions and samples.
\end{theorem}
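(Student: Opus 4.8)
The plan is to compose the sample-to-stochastic reduction of \Cref{lem:stochastic-to-semi} with the distribution-free HST bounds of \Cref{thm:cost-random-subtree} and \Cref{thm:cost-bbgn}, transport them to $[0,1]^d$ through a fixed non-contractive dyadic embedding, and match the resulting upper bounds against the lower bound on the offline optimum for smooth distributions (\Cref{lem:lb-smooth}) by optimizing a single height parameter $h$. Fix $h \ge 1$ and let $T_h$ be the $2^d$-ary $2$-HST obtained from the canonical dyadic partition of $[0,1]^d$ into $h$ levels, with every edge length multiplied by $c_d := \lceil \sqrt d\, \rceil$; map each $x \in [0,1]^d$ to the leaf of $T_h$ whose subcube contains it. Two points whose subcubes have least common ancestor at height $j$ lie in a common cube of side $2^{j-h}$, hence at Euclidean distance at most $\sqrt d\, 2^{j-h}$, whereas their $T_h$-distance equals $c_d\, 2^{2-h}(2^{j}-1) \ge \sqrt d\, 2^{j-h}$ for every $j \ge 1$ (and both are $0$ when $j=0$); thus the embedding is non-contractive. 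Its distortion is unbounded---adjacent subcubes can be Euclidean-close while their only common ancestor is the root---but distortion never enters our argument.

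Let $\mathcal{B}_{\RS}$ (resp.\ $\mathcal{B}_{\BBGN}$) be the algorithm on $[0,1]^d$ that embeds the instance into $T_h$, runs $\RS$ (resp.\ the $\BBGN$ algorithm) on the embedded instance, and returns the induced request--server matching, charged its \emph{Euclidean} cost; let $\calA_{\RS}$ and $\calA_{\BBGN}$ be obtained by applying \Cref{lem:stochastic-to-semi} to $\mathcal{B}_{\RS}$ and $\mathcal{B}_{\BBGN}$, so $\cost(\calA_{\RS};S,\D) \le \opt(S,\D) + \cost(\mathcal{B}_{\RS};\D,\D)$ and the samples are used as a black box without needing to know the correspondence between distributions and samples. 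It remains to bound $\cost(\mathcal{B}_{\RS};\D,\D)$ and $\cost(\mathcal{B}_{\BBGN};\D,\D)$, where now $S$ is also drawn from $\D$. Split the Euclidean cost of any realized matching into within-leaf and cross-leaf pairs: there are at most $n$ within-leaf pairs, each of Euclidean cost at most $\sqrt d\, 2^{-h}$, and each cross-leaf pair has Euclidean cost at most its $T_h$-distance by non-contractivity, so its contribution is at most $c_d$ times the cost of $\RS$ (resp.\ $\BBGN$) on the corresponding instance of the \emph{unit-root-edge} HST. Since the pushforward of $\D = \prod_i \D_i$ to the leaves of $T_h$ is again a product distribution $\D'$, taking expectations gives $\cost(\mathcal{B}_{\RS};\D,\D) \le \sqrt d\, n\, 2^{-h} + c_d\cdot\cost(\RS;\D',\D')$, and likewise for $\BBGN$. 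Now specialize \Cref{thm:cost-random-subtree} to $\Delta = 2^d$, $\alpha = 2$ and write $\xi := H_{2^d}/2$: for $d=1$, $\xi = 3/4 < 1$, the geometric series converge, and $\cost(\RS;\D',\D') = O(\sqrt n)$ independently of $h$; for $d=2$, $\xi = 25/24 > 1$ and $\cost(\RS;\D',\D') = O(\sqrt n\,\xi^{h})$; for $d \ge 3$ the dominant geometric ratio is $2^{d/2}/H_{2^d} > 1$ and $\cost(\RS;\D',\D') = O\bigl(\poly(d)\cdot\sqrt n\,(2^{h})^{d/2-1}\bigr)$. Specializing \Cref{thm:cost-bbgn} (legitimate once $2^{d(h-1)} \le n/2$) gives $\cost(\BBGN;\D',\D') = O(\sqrt n\log n)$ for $d=1$, $O(\sqrt n\,h\log n)$ for $d=2$, and $O\bigl(\poly(d)\cdot\sqrt n\,(2^{h})^{d/2-1}\log n\bigr)$ for $d\ge3$---which tightens to $O\bigl(\poly(d)\cdot\sqrt n\,(2^{h})^{d/2-1}\bigr)$ at $h = \Theta\bigl(\tfrac1d\log n\bigr)$, where the logarithmic factor $\log(n/2^{d(h-1)})$ is only $O(d)$.

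By \Cref{lem:lb-smooth}, for every adversarial $S$ and every product of $\sigma$-smooth distributions $\D$, $\opt(S,\D) = \Omega(\sigma\sqrt n)$ for $d=1$, $\Omega(\sqrt{\sigma n})$ for $d=2$, and $\Omega(\sigma^{1/d}n^{1-1/d})$ for $d\ge3$; writing $\opt_{\min}$ for this bound, each of our two algorithms satisfies $\cost(\calA;S,\D)/\opt(S,\D) \le 1 + O\bigl((\sqrt d\, n\, 2^{-h} + \text{cross-leaf bound at }h)/\opt_{\min}\bigr)$, which we minimize over $h$. For $d=1$ the cross-leaf bound is $h$-independent, so taking $h = \Theta(\log n)$ large enough that $n\,2^{-h} = O(\sqrt n)$ yields competitive ratio $O(\sigma^{-1})$ for $\calA_{\RS}$ and $O(\sigma^{-1}\log n)$ for $\calA_{\BBGN}$. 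For $d\ge3$, balancing $n\,2^{-h}$ against $\sqrt n\,(2^{h})^{d/2-1}$ gives $2^{h} = \Theta(n^{1/d})$ up to $\poly(d)$ factors that stay bounded after the $(\cdot)^{2/d}$ exponentiation, hence a total algorithmic cost of $O(\poly(d)\cdot n^{1-1/d})$ and, after dividing by $\Omega(\sigma^{1/d}n^{1-1/d})$, competitive ratio $O(d^{3/2}\sigma^{-1/d})$ for both algorithms (one checks $2^{d(h-1)} \le n/2$ holds at this $h$). For $d=2$ and $\calA_{\RS}$, balancing $n\,2^{-h}$ against $\sqrt n\,\xi^{h}$ with $\xi = 25/24$ gives $(2\xi)^{h} = \sqrt n$, i.e.\ $n\,2^{-h} = n^{\,1 - 1/(2\log_2(25/12))}$, so the competitive ratio is $O\bigl(\sigma^{-1/2}\,n^{\,1/2 - 1/(2\log(25/12))}\bigr)$; for $d=2$ and $\calA_{\BBGN}$, taking $h = \Theta(\log n)$ makes the within-leaf term $O(\sqrt n)$ and the cross-leaf term $O(\sqrt n\log^2 n)$, yielding competitive ratio $O(\sigma^{-1/2}\log^2 n)$.

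I expect the main obstacle to be the parameter bookkeeping, most delicate at $d=2$: because $H_4/2 = 25/24$ exceeds $1$, the $\RS$ cost has a genuinely geometric---not polylogarithmic---dependence on $h$, and one must verify that balancing it against $n\,2^{-h}$ reproduces exactly the exponent $\tfrac12 - \tfrac{1}{2\log_2(25/12)}$ while $2^{d(h-1)} \le n/2$ remains valid; more generally one must track the $d$-dependent constants emerging from \Cref{thm:cost-random-subtree} and \Cref{thm:cost-bbgn} (e.g.\ the factors $1/(2^{d/2}-H_{2^d})$ and $\xi/(\xi-1)$) and confirm they collapse to $\poly(d)$---indeed to $O(d^{3/2})$ after the $\sqrt d$ embedding scaling and the $(\cdot)^{2/d}$ exponentiation. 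The remaining hard ingredient, the $d=1$ lower bound $\opt(S,\D) = \Omega(\sigma\sqrt n)$---proved via majorization together with the anti-concentration of $\sigma$-smooth distributions applied to the obstacle to matching at length scale $\Theta(\sigma)$---is isolated in \Cref{lem:lb-smooth} and invoked here as a black box.
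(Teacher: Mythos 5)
Your proposal is correct and follows essentially the same route as the paper: reduce to the all-stochastic setting via \Cref{lem:stochastic-to-semi}, embed $[0,1]^d$ into the dyadic $2^d$-ary $2$-HST, bound the algorithm's cost there via \Cref{thm:cost-random-subtree} and \Cref{thm:cost-bbgn} with the same height choices ($(25/12)^h=\sqrt{n}$ for $d=2$, $2^h\approx n^{1/d}$ otherwise), and divide by the lower bounds of \Cref{lem:lb-smooth}. The only cosmetic difference is that you rescale the HST edge lengths by $\lceil\sqrt d\,\rceil$ and split within-leaf from cross-leaf pairs, whereas the paper keeps unit root-edges and absorbs both the $\sqrt d$ factor and the additive $n\cdot 2^{-h}$ term into a single pointwise inequality (\Cref{lem:hg-to-hst}); these are equivalent, and your parameter bookkeeping (including the $d=2$ exponent and the $O(d)$ logarithmic factor at the dominant level of the $\BBGN$ sum for $d\ge 3$) matches the paper's appendix calculations.
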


In particular, these results show that in one dimension the dependence on $n$ is either logarithmic or absent, in two dimension the competitive ratio grows sublinearly in $n$, and in higher dimensions the bound is dimension-dependent but independent of $n$.

The proof of \Cref{thm:cp-euc} consists of three ingredients.
First, \Cref{lem:stochastic-to-semi} shows that, given sample access to $\D$, the adversarial-server setting can be reduced to the stochastic case where the servers are also drawn from $\D$.
Second, \Cref{lem:euc-alg} provides two algorithms for Euclidean metrics along with upper bounds on their expected costs. Finally, \Cref{lem:lb-smooth} establishes lower bounds on the offline optimum.
We present the latter two results next, and then combine all three ingredients to complete the proof.

\begin{lemma}\label{lem:euc-alg}
    For $[0, 1]^d$ with the Euclidean distance, and for any $\D = \prod_{i=1}^n \D_i$, there exists an algorithm $\calA_{\RS}$ such that
    \begin{align*}
        \cost(\calA_{\RS}; \D, \D)
        \leq \begin{cases}
            O(\sqrt{n}), & d = 1,\\
            O\left(n^{1 - \frac{1}{2 \log(25/12)}}\right), & d = 2,\\
            O(d^{\frac{3}{2}} n^{1-\frac{1}{d}}), & d \geq 3,
        \end{cases}
    \end{align*}
    and an algorithm $\calA_{\BBGN}$ such that
    \begin{align*}
        \cost(\calA_{\BBGN}; \D, \D)
        \leq
        \begin{cases}
            O(\sqrt{n} \log n), & d = 1,\\
            O(\sqrt{n} \log^2 n), & d = 2,\\
            O(d^{\frac{3}{2}}n^{1-\frac{1}{d}}), & d \geq 3.
        \end{cases}
    \end{align*}
\end{lemma}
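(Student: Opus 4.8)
The plan is to compose the classical embedding reduction with a single \emph{deterministic} dyadic embedding, but---crucially---to skip the distortion bound and instead carry the distribution-free HST cost bounds of \Cref{thm:cost-random-subtree} and \Cref{thm:cost-bbgn} back to $[0,1]^d$ directly. Fix a height $h$, to be optimized at the end. Let $T$ be the $2^d$-ary $2$-HST of height $h$ whose node at height $j$ is identified with a dyadic subcube of $[0,1]^d$ of side length $2^{-(h-j)}$, and let $f$ map each $x\in[0,1]^d$ to the leaf of $T$ whose (half-open) subcube contains $x$. Define $\calA_{\RS}$ (resp. $\calA_{\BBGN}$) as the algorithm that embeds $S$ and the arriving requests through $f$, runs $\RS$ (resp. $\BBGN$) on the image instance inside $T$ with $\Delta=2^d$ and $\alpha=2$, and returns the induced matching between the original servers and requests. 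Note that $S,R\sim\D=\prod_i\D_i$ implies $f(S),f(R)\sim\D'=\prod_i\D_i'$ where $\D_i'$ is the pushforward of $\D_i$ under $f$, so the distribution-free guarantees \Cref{thm:cost-random-subtree,thm:cost-bbgn} apply to the image instance with no change.

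The first real step is the cost transfer. Consider a pair $(s,r)$ matched by the algorithm whose leaves have least common ancestor at height $j$. If $j\ge 1$, then $s$ and $r$ lie in a single dyadic cube of side $2^{-(h-j)}$, so $\delta(s,r)\le\sqrt{d}\cdot 2^{-(h-j)}$; on the other hand the $T$-path between the two leaves contains the two edges incident to their height-$j$ ancestor, each of length $2^{-(h-j)}$, so $\delta_T(f(s),f(r))\ge 2\cdot 2^{-(h-j)}$, giving $\delta(s,r)\le\frac{\sqrt{d}}{2}\,\delta_T(f(s),f(r))$. If $j=0$, the pair lies in one leaf-cube of side $2^{-h}$ and contributes at most $\sqrt{d}\cdot 2^{-h}$, and there are at most $n$ such pairs in a perfect matching. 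Summing over matched pairs and taking expectations over $S,R\sim\D$ and the algorithm's randomness,
\[
\cost(\calA_{\RS};\D,\D)\ \le\ \tfrac{\sqrt{d}}{2}\,\cost(\RS;\D',\D')\ +\ n\sqrt{d}\cdot 2^{-h},
\]
and likewise with $\BBGN$ in place of $\RS$.

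It then remains to substitute $\Delta=2^d$, $\alpha=2$ into the HST bounds and optimize $h$. In \Cref{thm:cost-random-subtree} the governing quantity is $\xi=H_{2^d}/2=\Theta(d)$. For $d=1$ we have $\xi=3/4<1$, so both nested geometric sums are bounded by absolute constants and $\cost(\RS;\D',\D')=O(\sqrt{n})$; with $h=\tfrac12\log_2 n$ the rounding term $n2^{-h}$ is also $\sqrt{n}$, giving $O(\sqrt{n})$. For $d=2$ we have $\xi=25/24>1$ and the level prefactor $\sqrt{\Delta^{h-j}}/\alpha^{h-j-1}$ equals $2$ exactly, so $\cost(\RS;\D',\D')=O(\sqrt{n}\,(25/24)^h)$; balancing this against $n2^{-h}$, i.e. setting $(25/12)^h=\sqrt{n}$ and hence $h=\tfrac{\log_2 n}{2\log_2(25/12)}$, yields the stated cost $O(n^{1-1/(2\log(25/12))})$. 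For $d\ge 3$ we have $\xi=\Theta(d)>1$, so both sums are dominated by their top terms and, after also accounting for the $\sqrt{d}$ from the transfer, the $H_{2^d}=O(d)$ prefactor, and the geometric ratios (which stay bounded uniformly in $d$), $\cost(\RS;\D',\D')=O(d^{3/2}\sqrt{n}\,2^{(d/2-1)h})$; choosing $h=\tfrac1d\log_2 n$ equates $\sqrt{n}\,2^{(d/2-1)h}$ with $n2^{-h}=n^{1-1/d}$, giving $O(d^{3/2}n^{1-1/d})$. The $\BBGN$ bounds follow the same template from \Cref{thm:cost-bbgn}, with the per-level factor $\log(n/\Delta^{h-j})$ playing the role of the geometric cascade: this introduces an extra $\log n$ for $d=1$ (cost $O(\sqrt{n}\log n)$), an extra $\log^2 n$ for $d=2$ (cost $O(\sqrt{n}\log^2 n)$; the hypothesis $\Delta^{h-1}\le n/2$ is met since $4^{h-1}=n/4$ when $h=\tfrac12\log_2 n$), and the same $O(d^{3/2}n^{1-1/d})$ for $d\ge 3$ (where one may simply take $\calA_{\BBGN}=\calA_{\RS}$).

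I expect the main obstacle to be the series bookkeeping in the last step. Because the base $\xi=H_{2^d}/2$ passes through $1$ as $d$ increases from $1$ to $2$, the double sum in \Cref{thm:cost-random-subtree} switches from being bounded by a constant to being governed by its top term, so each of the three dimensional regimes calls for a separate choice of $h$; pinning down the $d=2$ exponent forces one to carry the constant $H_4=\tfrac{25}{12}$ through the optimization rather than absorbing it into $O(\cdot)$; and for $d\ge 3$ one must verify that the accumulated dependence on $d$---coming from $H_{2^d}$, the diameter factor $\sqrt{d}$, and the geometric ratios, the last of which must be shown bounded uniformly over $d\ge 3$---is only polynomial. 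For the $\BBGN$ variant there is the additional bookkeeping of checking that the hypothesis $\Delta^{h-1}\le n/2$ of \Cref{thm:cost-bbgn} is compatible with the $h$ that minimizes the transferred bound.
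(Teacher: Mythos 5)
Your proposal is correct and takes essentially the same route as the paper: the same dyadic $2^d$-ary $2$-HST embedding, the same per-pair cost transfer (in the paper, $\norm{s-r}_2 \le \sqrt{d}\,(\delta_{\calT}(f(s),f(r))/4 + 2^{-h})$, versus your equivalent split into LCA height $j\ge 1$ and same-leaf pairs), and the same substitution of $\Delta = 2^d$, $\alpha = 2$ into Theorems~\ref{thm:cost-random-subtree} and~\ref{thm:cost-bbgn} with the same (up to rounding) choices of $h$. The only deviations are cosmetic---slightly different constants, a different but equally valid $h$ for $d=1$, and setting $\calA_{\BBGN}=\calA_{\RS}$ for $d\ge 3$ rather than evaluating the BBGN sum in that regime as the paper does, which still satisfies the lemma's existential claim.
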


We defer the proof of \Cref{lem:euc-alg} to \Cref{sec:proof-lem-euc-alg} and give lower bounds for the offline optimum in the following lemma.

\begin{lemma}\label{lem:lb-smooth}
    For $[0, 1]^d$ with the Euclidean distance, and for all $S$ and $\D = \prod_{i=1}^n \D_i$ such that each $\D_i$ is $\sigma$-smooth for $\sigma \in (0, 1]$,
    \begin{align*}
        \opt(S, \D)
        \geq \begin{cases}
            \Omega(\sigma \sqrt{n}), & d = 1,\\
            \Omega(\sigma^{\frac{1}{d}} n^{1 - \frac{1}{d}}), & d \geq 2.
        \end{cases}
    \end{align*}
\end{lemma}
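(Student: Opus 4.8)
The plan is to treat the two regimes separately: the case $d \geq 2$ is a short nearest-neighbor computation, while $d = 1$ needs the more delicate ``obstacle to matching at length scale $\Theta(\sigma)$'' argument sketched in the overview.

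\emph{The case $d \geq 2$.} In any perfect matching between $S$ and $R$, each request $r_i$ pays at least its distance to the closest server, so $\opt(S, \D) \geq \sum_{i=1}^n \E_{r_i \sim \D_i}\bigl[\min_{s \in S}\delta(r_i, s)\bigr]$. Let $c_d$ be the volume of the unit Euclidean ball (so $c_d \leq 6$ for every $d$), put $\rho := (\sigma/(2 n c_d))^{1/d}$, and let $A := \bigcup_{s \in S} B(s, \rho)$. Then $\U(A) \leq \mathrm{Vol}(A) \leq n c_d \rho^d = \sigma/2$, so $\sigma$-smoothness gives $\D_i(A) \leq \U(A)/\sigma \leq 1/2$; hence with probability at least $1/2$ the request $r_i$ is at distance $\geq \rho$ from every server, and $\E[\min_{s}\delta(r_i, s)] \geq \rho/2$. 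Summing over $i$ and using $c_d^{1/d} = O(1)$ yields $\opt(S, \D) \geq \Omega(n\rho) = \Omega(\sigma^{1/d} n^{1 - 1/d})$. (The same computation is valid for $d = 1$, but there each request contributes only $\Omega(\sigma/n)$, so it only gives $\opt(S, \D) \geq \Omega(\sigma)$; we still use this for small $n$.)

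\emph{Reduction for $d = 1$.} Write $g(x) := \hat r([0, x]) - \hat s([0, x])$ for $x \in [0, 1]$. By the classical one-dimensional matching identity --- in any matching the number of matched pairs straddling a point $x$ is at least $|g(x)|$, and integrating this count over $x$ recovers the matching's total length --- we get $\opt(S, R) \geq \int_0^1 |g(x)|\, dx$, hence $\opt(S, \D) \geq \int_0^1 \E|g(x)|\, dx$. (Summing $|g(x)|$ over a length-$L$ grid is exactly the ``obstacle at scale $L$''.) Now $\hat r([0, x]) = \pb(\bfq(x))$ with $q_i(x) := \D_i([0, x])$, while $\hat s([0, x])$ is an arbitrary constant determined by the adversary; symmetrizing, $\E|g(x)| = \E\lvert \pb(\bfq(x)) - \hat s([0,x])\rvert \geq \tfrac12\, \E\lvert \pb(\bfq(x)) - \pb'(\bfq(x))\rvert$, where $\pb'(\bfq(x))$ is an independent copy --- so the adversarial servers cost us only a factor $2$.

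\emph{Anti-concentration and the $\sqrt n$ gain.} The variable $Y(x) := \pb(\bfq(x)) - \pb'(\bfq(x))$ is a sum of independent, symmetric, $\{-1, 0, 1\}$-valued terms with $\E Y(x)^2 = 2\nu(x)$ and $\E Y(x)^4 \leq 2\nu(x) + 12\nu(x)^2$, where $\nu(x) := \sum_{i=1}^n q_i(x)(1 - q_i(x))$. The H\"older (Paley--Zygmund) inequality $\E|Y| \geq (\E Y^2)^{3/2}/(\E Y^4)^{1/2}$ then gives $\E|g(x)| \gtrsim \min(\nu(x), \sqrt{\nu(x)})$; alternatively one may route this through \Cref{cor:convex-order-mean-abs-devia}, which via majorization shows the worst case is attained when each $\D_i$ is as concentrated as smoothness permits, i.e.\ uniform on a window of length $\sigma$, reducing to an explicit computation with ramp functions. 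For the smoothness input, for each $i$ set $a_i := \sup\{x : q_i(x) \leq 1/4\}$ and $b_i := \inf\{x : q_i(x) \geq 3/4\}$; since $\sigma$-smooth distributions have no atoms, $\D_i((a_i, b_i)) = 1/2 \leq (b_i - a_i)/\sigma$, so $b_i - a_i \geq \sigma/2$, and $q_i(x)(1 - q_i(x)) \geq 1/16$ for $x \in (a_i, b_i)$. Hence $\nu(x) \geq \tfrac1{16} N(x)$ with $N(x) := \#\{i : x \in (a_i, b_i)\}$, and $\int_0^1 N(x)\, dx = \sum_i (b_i - a_i) \geq n\sigma/2$. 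Finally, by concavity of $\sqrt{\,\cdot\,}$ we have $\sqrt{N(x)} \geq N(x)/\sqrt n$ (as $N(x) \leq n$), and the analogous comparison holds in the range $\nu(x) \leq 1$ once $n$ exceeds an absolute constant; putting this together, $\int_0^1 \E|g(x)|\, dx \gtrsim \tfrac1{\sqrt n}\int_0^1 N(x)\, dx \gtrsim \sigma\sqrt n$. The remaining regime where $n$ is below that constant is covered by $\opt(S, \D) = \Omega(\sigma)$ from the first paragraph, since then $\sigma\sqrt n = O(\sigma)$.

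\emph{Main obstacle.} The routine pieces are the $d \geq 2$ case, the one-dimensional matching identity, and the ``no atoms $\Rightarrow$ spread $\geq \sigma/2$'' step. The crux is the $d = 1$ anti-concentration: one must rule out that the Poisson--binomial counts $\hat r([0, x])$ concentrate so tightly that $\E|g(x)|$ collapses below $\sqrt{\nu(x)}$, and the per-point bound genuinely degrades from $\sqrt{\nu(x)}$ to $\nu(x)$ once $\nu(x) = O(1)$ --- which is why one needs both the fourth-moment (or majorization) estimate and the $\ell_1$-versus-$\ell_2$ conversion $\sqrt{N(x)} \geq N(x)/\sqrt n$. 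This last step is precisely what turns the total ``active mass'' $\int_0^1 N(x)\,dx \asymp n\sigma$ into the advertised $\Omega(\sigma\sqrt n)$ rather than merely $\Omega(\sigma)$.
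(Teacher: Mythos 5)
Your proposal is correct, and for $d=1$ it takes a genuinely different route from the paper. For $d \ge 2$ the paper simply cites the nearest-neighbor bound of \Cref{lemma:near-neigh-dist}; your ball-volume argument is a self-contained proof of essentially the same estimate, so that part is routine. For $d=1$ the paper works with sliding windows $[x, x+L]$ of length $L = \sigma/4$, uses the inequality $\hat m(x) + \hat m(x+L) \ge |\hat s(x) - \hat r(x)|$ to relate window imbalance to crossings, and lower-bounds the per-window mean absolute deviation by majorization (\Cref{cor:convex-order-mean-abs-devia}): smoothness forces each $w_i(x) \le 1/2$, so the worst case is a $\bin(\lfloor 2W(x)\rfloor, 1/2)$ count, for which an explicit MAD bound gives $\Omega(\sqrt{W(x)}) - O(1)$. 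You instead use the exact prefix identity $\opt(S,R) \ge \int_0^1 |g(x)|\,dx$, dispose of the adversarial server prefix counts by symmetrization (losing only a factor $2$), and lower-bound $\E|\pb(\bfq(x)) - \pb'(\bfq(x))|$ by the second/fourth-moment (Paley--Zygmund) inequality, with smoothness entering through the observation that each $\D_i$ contributes variance $\ge 1/16$ on an interval of length $\ge \sigma/2$. Both proofs finish with the same concavity conversion $\sqrt{\,\cdot\,} \ge (\cdot)/\sqrt n$. The trade-off: the paper's window decomposition localizes directly at scale $\sigma$ and the convex-order comparison identifies the extremal (maximally concentrated) distribution in one stroke, whereas your argument avoids the window bookkeeping and the boundary-mass computation, needs no external binomial MAD bound, and handles the adversarial constant $\hat s([0,x])$ via the airtight factor-$2$ symmetrization (the paper's corresponding step invokes Jensen to compare against the deviation about the mean, which is really a statement about the median; your version sidesteps that subtlety). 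One cosmetic remark: your small-$n$ fallback is unnecessary, since in the regime $\nu(x) < 1$ one has $\nu(x) \ge N(x)/16 \ge N(x)/(16\sqrt n)$ for all $n \ge 1$, so the bound $\min(\nu,\sqrt\nu) \gtrsim N/\sqrt n$ holds unconditionally.
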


We defer the proof of \Cref{lem:lb-smooth} to \Cref{sec:proof-lem-lb-smooth}.
Now, we have collected all the necessary ingredients to finish the proof of \Cref{thm:cp-euc}.

\begin{proof}[Proof of \Cref{thm:cp-euc}]
    By \Cref{lem:stochastic-to-semi}, given an algorithm $\calA$, which is provided with a sample from each $\D_i$, there exists an algorithm $\calA'$, which does not need to know the correspondence between distributions and samples, such that $\cost(\calA'; S, \D) \leq \opt(S, \D) + \cost(\calA; \D, \D)$.
    Hence, it suffices to give an algorithm $\calA$ such that $\cost(\calA; \D, \D) / \opt(S, \D)$ is upper bounded by the desired competitive ratio.
    The theorem then follows by applying the algorithms $\calA_{\RS}$ and $\calA_{\BBGN}$ given in \Cref{lem:euc-alg}, and the lower bounds for $\opt(S, \D)$ given in \Cref{lem:lb-smooth}.
\end{proof}

\subsection{Proof of \Cref{lem:euc-alg}}
\label{sec:proof-lem-euc-alg}

Define a hierarchical decomposition $\calH_0, \calH_1, \ldots, \calH_h$ of $[0, 1]^d$, where $h$ will be determined later, such that
\begin{align*}
    \calH_i := \left\{\prod_{\ell = 1}^d I(i, \lambda_{\ell}) \; \Big \lvert \; \lambda_1, \ldots, \lambda_d \in [2^{h-i}] \right\}, \quad \text{where} \quad 
    I(i, \lambda) :=
    \begin{cases}
        [2^{i-h}(\lambda-1), 2^{i-h} \lambda), & \lambda < 2^{h-i} ,\\
        [2^{i-h}(\lambda-1), 2^{i-h} \lambda], & \lambda = 2^{h-i}.
    \end{cases}
\end{align*}
In other words, $\calH_i$ is the partition of $[0, 1]^d$ into $2^{d(h-i)}$ subcubes with side-length $2^{i-h}$.
This gives a laminar family\footnote{Recall that a \emph{laminar family} $\calF$ is a family of subsets such that for any $X, Y \in \calF$, one of the three following cases holds: (1) $X \subseteq Y$, (2) $Y \subseteq X$, or (3) $X \cap Y = \emptyset$.} $\calH := \calH_0 \cup \calH_1 \cup \ldots \cup \calH_h$.
See \Cref{fig:hg} for an illustration with $d = 2$.
To prove \Cref{lem:euc-alg}, we first construct a $2^d$-ary $2$-HST with height $h$ from $\calH$, and then we show that it suffices to upper bound the cost of an algorithm for the resulting HST metric, which enables us to apply our algorithmic results for HST metrics.

\begin{figure}
    \centering
    \includegraphics[width=0.8\linewidth]{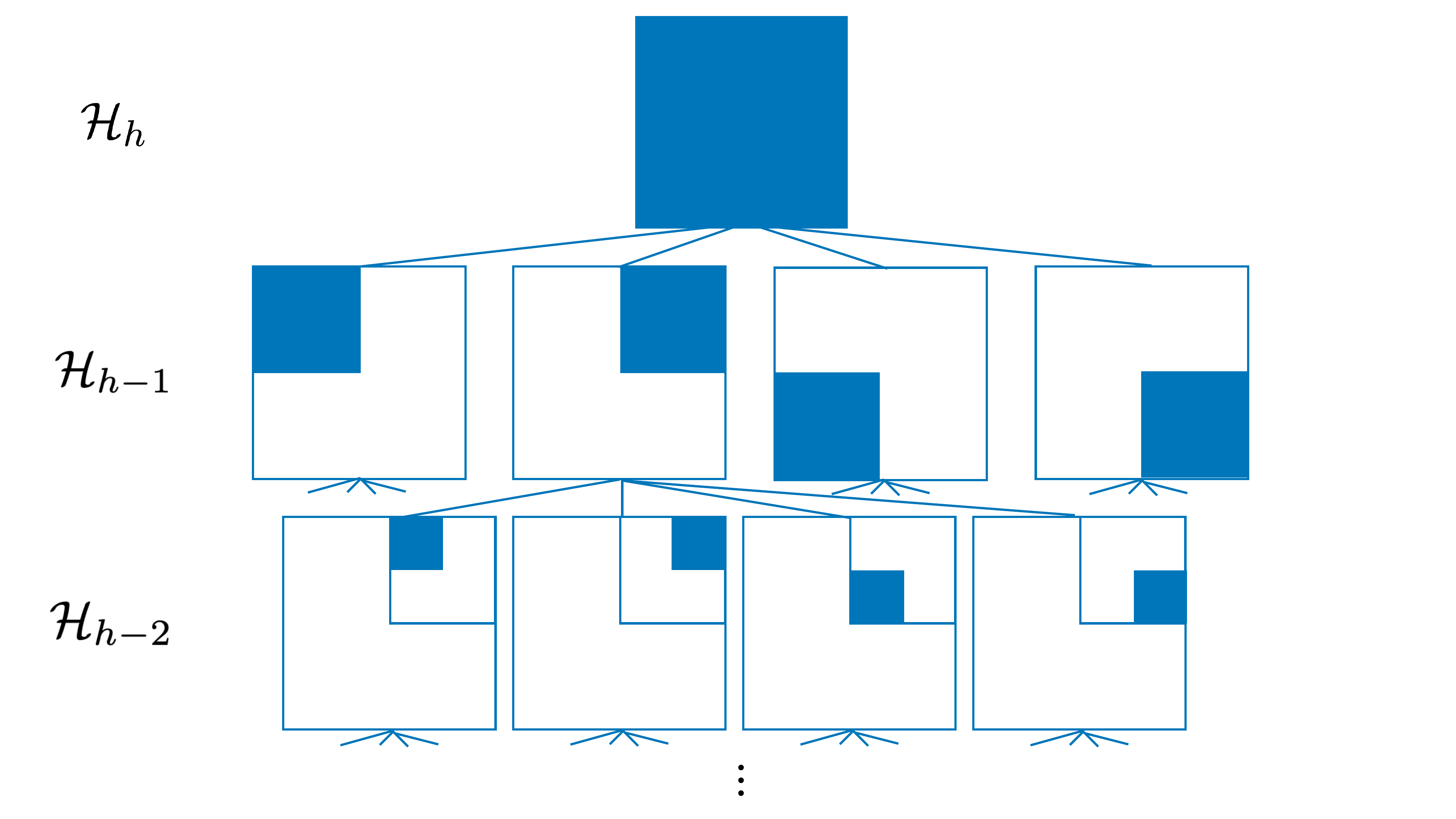}
    \caption{Hierarchical decomposition for $[0, 1]^2$.} 
    \label{fig:hg}
\end{figure}

We construct a $2^d$-ary $2$-HST with height $h$, denoted as $\calT$, from $\calH$ as follows: each cube in $\calH$ corresponds to a node, and the children of a node corresponding to $H \in \calH$ are the nodes corresponding to maximal subsets of $H$ in $\calH$.
For every $x \in [0, 1]^d$, let $\calT(x)$ be the leaf node corresponding to the (unique) cube in $\calH_0$ that contains $x$.
For $S \subseteq [0, 1]^d$, define $\calT(S) := \{\calT(s) \mid s \in S\}$.
We show in the following lemma that a cost upper bound for an algorithm on $\calT$ gives rise to a cost upper bound for the corresponding algorithm on $[0, 1]^d$.

\begin{lemma}\label{lem:hg-to-hst}
    Given an algorithm $\calA$ on $\calT$, there exists an algorithm $\calA'$ on $[0, 1]^d$ such that for all $S$ and $R$, $\cost(\calA'; S, R) \leq \sqrt{d} \cdot (\cost(\calA; \calT(S), \calT(R)) / 4 + n \cdot 2^{-h})$.
\end{lemma}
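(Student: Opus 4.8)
The plan is to define $\calA'$ by \emph{simulating} $\calA$ on the HST $\calT$: when a request $r \in [0,1]^d$ arrives, feed $\calT(r)$ to $\calA$, observe which leaf server location $\calT(s)$ it selects, and match $r$ to the corresponding server $s \in S$ (breaking ties among servers in the same leaf cube arbitrarily but consistently). This is well-defined because $\calT(S)$ as a multiset has the same multiplicity structure as $S$: a leaf of $\calT$ holding $k$ servers of $\calT(S)$ corresponds to exactly $k$ servers of $S$ lying in that cube, so any matching $\calA$ produces on $(\calT(S),\calT(R))$ pulls back to a valid perfect matching on $(S,R)$. The whole point is then to compare, edge by edge, the cost $\delta(r,s)$ in $[0,1]^d$ against $\delta(\calT(r),\calT(s))$ in the HST metric.

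The key comparison has two regimes depending on whether $r$ and $s$ land in the same leaf cube. First I would fix the scaling: in $\calT$ the root-edges have length $1$ by normalization, so since $\calT$ is a $2$-HST of height $h$, the leaf-edges have length $2^{1-h}$ and the diameter of the whole HST metric is $2(1+2^{-1}+\dots+2^{1-h}) < 4$; more usefully, two leaves whose least common ancestor is at height $j$ are at HST-distance $\Theta(2^{j-h})$ — precisely $2\sum_{\ell=0}^{j-1} 2^{\ell-h} = 2(2^j-1)2^{-h}$. On the Euclidean side, if $\calT(r)$ and $\calT(s)$ have least common ancestor at height $j \ge 1$, then $r$ and $s$ lie in a common cube of side-length $2^{j-h}$, hence $\delta(r,s) \le \sqrt{d}\cdot 2^{j-h}$, which is at most $\sqrt d$ times roughly $\tfrac14$ of the HST-distance (using $2^j - 1 \ge 2^{j-1}$ for $j\ge1$, i.e. $2^{j-h} \le \tfrac{1}{2}\cdot 2(2^j-1)2^{-h}\cdot\tfrac{2^{j-1}}{2^j-1}\le \tfrac14 \cdot 2(2^j-1)2^{-h}$ after a clean constant check). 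This yields $\delta(r,s) \le \tfrac{\sqrt d}{4}\,\delta_{\calT}(\calT(r),\calT(s))$ in the distinct-leaf case. Second, if $\calT(r) = \calT(s)$, i.e. $r$ and $s$ are in the same cube of $\calH_0$ with side-length $2^{-h}$, then the HST contributes nothing, but $\delta(r,s) \le \sqrt d \cdot 2^{-h}$; summing this slack over the (at most $n$) matched pairs gives the additive $\sqrt d\cdot n\cdot 2^{-h}$ term.

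Putting these together: for each matched pair in $\calA'$'s matching, $\delta(r,s) \le \tfrac{\sqrt d}{4}\,\delta_{\calT}(\calT(r),\calT(s)) + \sqrt d\cdot 2^{-h}$ (the first term being zero when they share a leaf, the second covering that case), and summing over all $n$ pairs and taking expectations over the internal randomness of $\calA$ gives $\cost(\calA';S,R) \le \tfrac{\sqrt d}{4}\cost(\calA;\calT(S),\calT(R)) + \sqrt d\cdot n\cdot 2^{-h}$, which is the claim. The main obstacle is getting the constant in the first regime to come out to exactly $1/4$ rather than something slightly larger — this requires being careful that the worst case is the least common ancestor at height $j=1$ (where the Euclidean cube has side $2^{1-h}$ and the HST distance is $2\cdot 2^{1-h} = 2^{2-h}$, giving ratio exactly $\sqrt d/4$ after the $\sqrt d$ diagonal factor) and checking that larger $j$ only improves the ratio because the HST distance grows geometrically like $2^j$ while the Euclidean diameter of the enclosing cube grows only like $2^j$ as well but with the right leading constant; a one-line monotonicity check on $2^{j-h} / \big(2(2^j-1)2^{-h}\big) = \tfrac{1}{2}\cdot\tfrac{2^j}{2^j-1}$, which is decreasing in $j$ and equals $1$ at... wait, equals $\tfrac12\cdot 2 = 1$ at $j=1$ — so actually the bound is $\delta(r,s)\le \sqrt d\cdot\tfrac12\,\delta_\calT$ and one needs the extra factor of $2$ from somewhere, most likely by redefining/rescaling so root-edges of $\calT$ correspond to the diameter $\sqrt d$ of $[0,1]^d$ rather than $1$, which is presumably how the authors set up $\calT(S)$ and where the clean $1/4$ originates. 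Resolving exactly how $\calT$ is normalized relative to $\calH$ is therefore the one place I would slow down and check against \Cref{fig:hg}.
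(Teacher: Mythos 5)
Your overall strategy is the same as the paper's: simulate $\calA$ on $\calT(S)$, pull the matching back to $[0,1]^d$, and compare the two distances pair by pair. However, the key distance comparison contains a genuine gap. First, your HST-distance formula is off by a factor of $2$: with root-edges normalized to length $1$, the leaf-edges have length $2^{1-h}$ (as you correctly state), so the one-sided path from a leaf to its height-$j$ ancestor has length $\sum_{\ell=1}^{j}2^{\ell-h}=2(2^j-1)2^{-h}$, and the distance between two leaves with least common ancestor at height $j$ is $4(2^j-1)2^{-h}=2^{j-h+2}-2^{2-h}$, not $2(2^j-1)2^{-h}$. Second, and more fundamentally, even with the corrected formula the purely multiplicative bound you want for the distinct-leaf case is false: the worst-case ratio of Euclidean to HST distance is $\sqrt{d}\cdot 2^{j-h}\big/\bigl(4(2^j-1)2^{-h}\bigr)=\tfrac{\sqrt{d}}{4}\cdot\tfrac{2^j}{2^j-1}$, which exceeds $\tfrac{\sqrt{d}}{4}$ for every finite $j$ and equals $\tfrac{\sqrt{d}}{2}$ at $j=1$. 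You notice this discrepancy at the end but misdiagnose it as a normalization issue to be fixed by rescaling $\calT$; no rescaling is involved in the paper.

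The correct repair is to spend the additive slack on \emph{every} matched pair, not only on same-leaf pairs. Writing $k$ for the height of the least common ancestor of $\calT(s)$ and $\calT(r)$, one has $\norm{s-r}_2\le\sqrt{d}\cdot 2^{k-h}$ and $\delta(\calT(s),\calT(r))=2^{k-h+2}-2^{2-h}$, hence
\begin{align*}
\norm{s-r}_2\;\le\;\sqrt{d}\cdot 2^{k-h}\;=\;\frac{\sqrt{d}}{4}\left(\delta(\calT(s),\calT(r))+2^{2-h}\right)\;=\;\frac{\sqrt{d}}{4}\,\delta(\calT(s),\calT(r))+\sqrt{d}\cdot 2^{-h},
\end{align*}
which also covers the case $k=0$. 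Summing over the $n$ matched pairs and taking expectations yields exactly the claimed bound $\cost(\calA';S,R)\le\sqrt{d}\,(\cost(\calA;\calT(S),\calT(R))/4+n\cdot 2^{-h})$. So the additive $n\cdot 2^{-h}$ term in the lemma is not merely there to handle collisions within a single leaf cube; it is what makes the constant $1/4$ attainable at all.
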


\begin{proof}
    Given the server set $S = \{s_1, \ldots, s_n\}$, the algorithm $\calA'$ initializes the given algorithm $\calA$ with the server set being $\calT(S)$.
    For each arriving request $r \in [0, 1]^d$, if $\calA$ matches request $\calT(r)$ to server $\calT(s)$ for $s \in S$, then $\calA'$ matches $r$ to $s$.
    To establish the desired upper bound for $\cost(\calA'; S, R)$, it suffices to show that $\norm{s - r}_2 \leq \sqrt{d} \cdot (\delta(\calT(s), \calT(r)) / 4 + 2^{-h})$ for all $s, r \in [0, 1]^d$, where $\delta(v, v')$ denotes the distance between $v$ and $v'$ on $\calT$.

    Fix $s, r \in [0, 1]^d$.
    Let $k$ be the smallest integer in $\{0, 1, \ldots, h\}$ such that there exists $H \in \calH_k$ that contains both $s$ and $r$.
    In other words, $H$ is the (unique) smallest cube in $\calH$ that contains both $s$ and $r$, which implies $\norm{s - r}_2 \leq \diam(H) = \sqrt{d} \cdot 2^{k - h}$.
    Note that the node corresponding to $H$ is the least common ancestor of $\calT(s)$ and $\calT(r)$, and the height of this node is $k$.
    Therefore,
    \begin{align*}
        \delta(\calT(s), \calT(r))
        = 2\sum_{i=h-k}^{h-1} 2^{-i}
        = 2^{k-h+2} - 2^{2-h}
        \geq \frac{4\norm{s - r}_2}{\sqrt{d}} - 2^{2-h},
    \end{align*}
    concluding the proof.
\end{proof}

By \Cref{lem:hg-to-hst}, it suffices to provide algorithms for any $2^d$-ary $2$-HST with a certain height $h$, for which we apply Theorems~\ref{thm:cost-random-subtree} and~\ref{thm:cost-bbgn}, where the height $h$ is chosen to minimize the resulting cost for $[0, 1]^d$.
We establish upper bounds for the expected cost of the $\RS$ and $\BBGN$ algorithms for the specified HST in the following two corollaries, whose proofs only involve mechanical calculation and are deferred to Appendices~\ref{sec:proof-cor-RS-special-hst} and~\ref{sec:proof-cor-bbgn-specific-hst}, respectively.

\begin{corollary}\label{cor:RS-special-hst}
    For the $2^d$-ary $2$-HST with height
    \begin{align*}
        h := \begin{cases}
            \left \lfloor \frac{\log n}{2 \log(25/12)} \right \rfloor, & d = 2,\\
            \lfloor \log (n) / d \rfloor, & d \neq 2,
        \end{cases}
    \end{align*}
    and for any $\D = \prod_{i=1}^n \D_i$,
    \begin{align*}
        \cost(\RS; \D, \D)
        \leq 
        \begin{cases}
            O(\sqrt{n}), & d = 1,\\
            O\left(n^{1 - \frac{1}{2 \log(25/12)}}\right), & d = 2,\\
            O(dn^{1-\frac{1}{d}}), & d \geq 3.
        \end{cases}
    \end{align*}
\end{corollary}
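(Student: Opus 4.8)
The plan is to instantiate Theorem~\ref{thm:cost-random-subtree} with the specific $2^d$-ary $2$-HST of height $h$ stated in the corollary and then simplify the resulting double sum. Setting $\Delta = 2^d$ and $\alpha = 2$, we have $\xi = H_{2^d}/2 = \Theta(d)$, so the inner sum $\sum_{\ell=0}^j \xi^\ell = \Theta(\xi^j)$ when $\xi > 1$ (i.e.\ $d \geq 2$, after checking $H_{2^d} > 2$), and is $O(j) = O(h)$ when $\xi \le 1$ (i.e.\ $d = 1$, where $H_2 = 3/2$, so actually $\xi = 3/4 < 1$ and the geometric sum is $O(1)$). Substituting into the bound gives
\begin{align*}
    \cost(\RS; \D, \D) \leq 6 H_{2^d} \sqrt{n} \sum_{j=0}^{h-1} \frac{\sqrt{2^{d(h-j)}}}{2^{h-j-1}} \cdot O\!\left(\textstyle\sum_{\ell=0}^j \xi^\ell\right),
\end{align*}
and the key is to recognize $\sqrt{2^{d(h-j)}}/2^{h-j-1} = 2 \cdot (2^{d/2-1})^{h-j} = 2 \cdot 2^{(d-2)(h-j)/2}$, which is a geometric term in $(h-j)$ that is decreasing for $d = 1$, constant for $d = 2$, and increasing for $d \geq 3$.

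First I would handle $d = 1$: here $H_2 = 3/2$, $2^{(d-2)(h-j)/2} = 2^{-(h-j)/2}$ is summable, and $\sum_\ell \xi^\ell = O(1)$, so the whole sum telescopes to $O(\sqrt{n})$ regardless of $h$; picking $h = \lfloor \log n \rfloor$ is fine. Next, for $d \geq 3$ I would note that both the "width'' factor $2^{(d-2)(h-j)/2}$ and the "propagation'' factor $\xi^j$ must be controlled; reindexing by $m = h - j$ the summand becomes roughly $2^{(d-2)m/2} \xi^{h-m} = \xi^h (2^{(d-2)/2}/\xi)^m$, and since $\xi = H_{2^d}/2 \approx (d\ln 2)/2$ while $2^{(d-2)/2}$ grows exponentially in $d$, the ratio exceeds $1$ for all $d \ge 3$, so the sum over $m$ is dominated by its largest term $m = h$, giving $O(H_{2^d} \sqrt{n} \cdot 2^{(d-2)h/2})$. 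With $h = \lfloor \log(n)/d \rfloor$ we get $2^{(d-2)h/2} \approx n^{(d-2)/(2d)}$ and $\sqrt{n} \cdot n^{(d-2)/(2d)} = n^{1 - 1/d}$; combined with $H_{2^d} = O(d)$ this yields $O(d\, n^{1-1/d})$, as claimed. (I should double-check that the $m=0$ boundary term and the $\sum_\ell$ truncation don't cost more than a constant factor — they contribute a convergent geometric tail.)

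The $d = 2$ case is the most delicate and I expect it to be the main obstacle. Here $2^{(d-2)(h-j)/2} = 1$, so the width factor disappears and the bound reduces to $O(H_4 \sqrt{n} \sum_{j=0}^{h-1} \sum_{\ell=0}^j \xi^\ell)$ with $\xi = H_4/2 = (25/12)/2 = 25/24 > 1$. The double sum is then $\Theta(\xi^h)$, so we get $\cost(\RS; \D,\D) = O(\sqrt{n}\, \xi^h)$, and we must choose $h$ to balance this against the discretization error $n \cdot 2^{-h}$ that enters later via Lemma~\ref{lem:hg-to-hst}. Wait — actually the corollary bounds only $\cost(\RS;\D,\D)$, not the combined quantity; but the height $h = \lfloor \log n / (2\log(25/12)) \rfloor$ is clearly chosen so that $\xi^h = (25/24)^h$... let me recompute: we need $\sqrt{n}\,\xi^h$ to match $n^{1 - 1/(2\log(25/12))}$, i.e.\ $\xi^h = n^{1/2 - 1/(2\log(25/12))}$. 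Hmm, with $\xi = 25/24$ this does not obviously work, so I suspect the intended reading is that the relevant geometric ratio is actually $H_4 = 25/12$ itself (not $H_4/2$), arising because for $d=2$ the factor $\sqrt{\Delta^{h-j}}/\alpha^{h-j-1} \cdot \xi^j$ combines as $2(H_4/2)^j \cdot 1 \cdot \sqrt{\cdot}$... I would carefully retrace the algebra: the $j$-th term is $2 \cdot H_4^{j+1}/\alpha^{h-1} \cdot \sqrt{n\cdot\Delta^{h-j}}/\Delta^{(h-j)/2}$-type expression, and collecting powers of $2$ correctly should produce a clean geometric series with ratio $25/12$, whose partial sum up to $h$ is $\Theta((25/12)^h)$; then $h = \lfloor \log n/(2\log(25/12))\rfloor$ gives $(25/12)^h = \sqrt{n}^{\,1/\log(25/12)}$... which still needs reconciling with the target exponent $1 - 1/(2\log(25/12))$ on $n$. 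The correct bookkeeping — tracking exactly how the $\sqrt{n\Delta^{h-j}}$, the $\alpha^{-(h-j-1)}$, and the $\sum_\ell \xi^\ell$ factors combine, and verifying the stated $h$ makes $\sqrt{n}\cdot(\text{geometric sum}) = \Theta(n^{1-1/(2\log(25/12))})$ — is exactly the "mechanical calculation'' the paper defers, and it is where all the care is needed; the $d=1$ and $d \geq 3$ cases are comparatively routine once the geometric-ratio dichotomy ($\xi \lessgtr 1$, and width factor $\lessgtr 1$) is identified.
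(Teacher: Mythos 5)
Your instantiation of Theorem~\ref{thm:cost-random-subtree} and your handling of $d=1$ (where $\xi = 3/4 < 1$ makes both the inner geometric sum and the width factor $2^{-(h-j)/2}$ summable) and of $d\geq 3$ (reindexing by $m=h-j$ and observing that $2^{(d-2)/2}/\xi > 1$ so the sum is dominated by the $m=h$ term, giving $O(H_{2^d}\sqrt{n}\,2^{(d-2)h/2}) = O(d\,n^{1-1/d})$) match the paper's argument. But the $d=2$ case, which you explicitly leave unresolved, is a genuine gap, and your suspicion that "the relevant geometric ratio is actually $H_4 = 25/12$" is a wrong turn. Your first computation was correct: for $d=2$ the width factor $\sqrt{\Delta^{h-j}}/\alpha^{h-j-1}=2$ is constant, $\xi = H_4/2 = 25/24$, and the double sum is $\Theta\bigl((25/24)^h\bigr)$ (the factor $\xi/(\xi-1)=25$ is an absolute constant), so the bound is $O\bigl(\sqrt{n}\,(25/24)^h\bigr)$ exactly as the paper derives.

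The step you failed to carry out is the one-line exponent check, which hinges on $\log$ being base $2$: since $\log(25/24) = \log(25/12) - \log 2 = \log(25/12) - 1$, the choice $h = \bigl\lfloor \tfrac{\log n}{2\log(25/12)} \bigr\rfloor$ gives
\begin{align*}
    \left(\frac{25}{24}\right)^h
    = 2^{\,h\log(25/24)}
    \leq n^{\frac{\log(25/12)-1}{2\log(25/12)}}
    = n^{\frac{1}{2} - \frac{1}{2\log(25/12)}},
\end{align*}
and multiplying by the $O(\sqrt{n})$ prefactor (with $H_4$ a constant) yields exactly the claimed $O\bigl(n^{1 - \frac{1}{2\log(25/12)}}\bigr)$. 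So the stated height does reconcile with the stated exponent under the ratio $25/24$; no recombination into a ratio of $25/12$ is needed, and your aside about balancing against the discretization error $n\cdot 2^{-h}$ is irrelevant here (that trade-off appears only later, in Lemma~\ref{lem:euc-alg}). With this computation inserted, your proof is complete and follows the paper's route.
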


\begin{corollary}\label{cor:bbgn-specific-hst}
    For the $2^d$-ary $2$-HST with height $h := \lfloor \log (n) / d \rfloor$, and for any $\D = \prod_{i=1}^n \D_i$,
    \begin{align*}
        \cost(\BBGN; \D, \D)
        \leq
        \begin{cases}
            O(\sqrt{n} \log n), & d = 1,\\
            O(\sqrt{n} \log^2 n), & d = 2,\\
            O(dn^{1-\frac{1}{d}}), & d \geq 3.
        \end{cases}
    \end{align*}
\end{corollary}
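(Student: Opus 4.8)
The plan is to obtain \Cref{cor:bbgn-specific-hst} as a direct specialization of \Cref{thm:cost-bbgn} to $\Delta = 2^d$ and $\alpha = 2$, followed by a case analysis on $d$ that reduces to evaluating a geometric sum. First I would verify the hypothesis of \Cref{thm:cost-bbgn}: writing $\log$ for $\log_2$, with $h = \lfloor \log n / d \rfloor$ we have $dh \le \log n$, hence $\Delta^h = 2^{dh} \le n$ and therefore $\Delta^{h-1} \le n/2^d \le n/2$. Thus \Cref{thm:cost-bbgn} applies and, after reindexing the sum by $k := h - j$, gives
\[
  \cost(\BBGN; \D, \D) \le O\!\left( \sqrt n \sum_{k=0}^{h-1} \frac{\sqrt{\Delta^{k+1}}}{\alpha^k} \cdot \log\frac{n}{\Delta^k} \right)
  = O\!\left( \sqrt n \sum_{k=0}^{h-1} 2^{(d/2-1)k + d/2}\,\log\frac{n}{\Delta^k} \right),
\]
using $\sqrt{\Delta^{k+1}}/\alpha^k = 2^{d(k+1)/2}/2^k = 2^{(d/2-1)k+d/2}$. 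So the entire argument comes down to bounding $\Sigma_d := \sum_{k=0}^{h-1} 2^{(d/2-1)k+d/2}\log(n/\Delta^k)$.

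Next I would split into three regimes according to the sign of $d/2 - 1$. For $d = 1$ the prefactor is $\sqrt 2\cdot 2^{-k/2}$, a convergent geometric series, so bounding $\log(n/\Delta^k)\le\log n$ term by term yields $\Sigma_1 = O(\log n)$ and hence $\cost(\BBGN;\D,\D) = O(\sqrt n\log n)$. For $d = 2$ the prefactor is the constant $2$, and since there are $h = O(\log n)$ terms, each of size $O(\log n)$, we get $\Sigma_2 = O(\log^2 n)$ and $\cost(\BBGN;\D,\D) = O(\sqrt n\log^2 n)$. The delicate case is $d \ge 3$, where the geometric prefactor $2^{(d/2-1)k}$ \emph{grows} in $k$ while $\log(n/\Delta^k)$ shrinks; here the crude bound $\log(n/\Delta^k)\le\log n$ is too lossy. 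Instead I would reindex from the top by $m := h - k$, so the prefactor becomes $2^{(d/2-1)h}\cdot 2^{\,m-(d/2)(m-1)}$, and the new exponent $m - (d/2)(m-1) \le 1$ is decreasing in $m$ for $d\ge 3$, making this a decaying weight. Pairing it with the sharp estimate $\log(n/\Delta^{h-m}) = O\!\big(d\,(m+1)\big)$, which follows from $2^{dh} > n/2^d$, the series $\sum_{m\ge 1}(m+1)\,2^{\,m-(d/2)(m-1)}$ converges to an absolute constant uniformly in $d\ge 3$, so $\Sigma_d = O\!\big(d\cdot 2^{(d/2-1)h}\big)$. Finally $2^{dh/2}\le\sqrt n$ and $2^h \ge n^{1/d}/2$ give $2^{(d/2-1)h} = 2^{dh/2}/2^h \le 2\,n^{1/2-1/d}$, so $\cost(\BBGN;\D,\D) = O(\sqrt n)\cdot O\!\big(d\,n^{1/2-1/d}\big) = O(d\,n^{1-1/d})$, as claimed.

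The only genuine obstacle is the $d\ge 3$ case: the two factors in the summand move in opposite directions, so the estimate hinges on telescoping the geometric sum from the correct end and on the (mildly non-obvious) fact that the combined weight $2^{\,m-(d/2)(m-1)}$ decays fast enough that the linear factor $(m+1)$ contributed by the logarithm does not accumulate — and that this holds with an absolute constant, not one blowing up with $d$. Everything else (the hypothesis check and the $d=1,2$ sums) is mechanical.
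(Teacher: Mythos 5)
Your proposal is correct and follows essentially the same route as the paper: verify the hypothesis $\Delta^{h-1}\le n/2$, specialize \Cref{thm:cost-bbgn} to $\Delta=2^d$, $\alpha=2$, and split into cases by the sign of $d/2-1$, using $\log n - dh \le d$ to tame the logarithm in the $d\ge 3$ regime. The only (cosmetic) difference is that for $d\ge 3$ the paper evaluates the geometric sums $\sum_j 2^{(d/2-1)j}$ and $\sum_j j\,2^{(d/2-1)j}$ in closed form, whereas you reindex from the top and dominate by a convergent series with weights $(m+1)2^{m-(d/2)(m-1)}$; both correctly extract the dominant term $O\bigl(d\,2^{(d/2-1)h}\bigr)=O\bigl(d\,n^{1/2-1/d}\bigr)$.
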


For $\D = \prod_{i=1}^n \D_i$, where each $\D_i$ is supported on $[0, 1]^d$, denote $\calT(\D)$ as the distribution followed by $\calT(S)$, where $S \sim \D$.
Denote the algorithm for $[0, 1]^d$ obtained by applying \Cref{lem:hg-to-hst} to the $\RS$ algorithm as $\calA_{\RS}$, which implies $\cost(\calA_{\RS}; \D, \D) \leq \sqrt{d} \cdot (\cost(\RS; \calT(\D), \calT(\D)) / 4 + n \cdot 2^{-h})$ for any $\D = \prod_{i=1}^n \D_i$.
Next, we upper bound the cost of $\calA_{\RS}$ by \Cref{cor:RS-special-hst}.
For $d = 1$,
\begin{align*}
    \cost(\mathcal A_{\RS}; \D, \D) & \leq O( \cost(\RS; \calT(\D), \calT(\D)) + n \cdot 2^{-\floor{\log n}}) \leq O(\sqrt{n}).
\end{align*}
For $d = 2$,
\begin{align*}
    \cost(\mathcal A_{\RS}; \D, \D) \leq O\left( \cost(\RS; \calT(\D), \calT(\D)) + n \cdot 2^{-\floor{\frac{\log n}{2 \log (15/12)}}} \right)
    \leq O\left(n^{1 - \frac{1}{2 \log(25/12)}}\right).
\end{align*}
For $d \geq 3$, 
\begin{align*}
    \cost(\mathcal A_{\RS}; \D, \D) \leq O(\sqrt{d} \cdot (\cost(\RS; \calT(\D), \calT(\D)) + n \cdot 2^{-\floor{\log (n)/d}}))
    \leq O(d^{\frac{3}{2}}n^{1 - \frac{1}{d}}).
\end{align*}

Denote the algorithm for $[0, 1]^d$ obtained by applying \Cref{lem:hg-to-hst} to the $\BBGN$ algorithm as $\calA_{\BBGN}$, which implies $\cost(\calA_{\BBGN}; \D, \D) \leq \sqrt{d} \cdot (\cost(\BBGN; \calT(\D), \calT(\D)) / 4 + n \cdot 2^{-h})$ for any $\D = \prod_{i=1}^n \D_i$.
Next, we upper bound the cost of $\calA_{\BBGN}$ by \Cref{cor:bbgn-specific-hst}.
For $d = 1$,
\begin{align*}
    \cost(\mathcal A_{\BBGN}; \D, \D) \leq O(\cost(\BBGN; \calT(\D), \calT(\D)) + n \cdot 2^{-\floor{\log n}})
    \leq O(\sqrt{n} \log n).
\end{align*}
For $d=2$, 
\begin{align*}
    \cost(\mathcal A_{\BBGN}; \D, \D) \leq O(\cost(\BBGN; \calT(\D), \calT(\D)) + n \cdot 2^{-\floor{\log (n)/2}})
    \leq O(\sqrt{n} \log^2 n).
\end{align*}
For $d \geq 3$,
\begin{align*}
    \cost(\mathcal A_{\BBGN}; \D, \D) \leq O(\sqrt{d} \cdot (\cost(\BBGN; \calT(\D), \calT(\D)) + n \cdot 2^{-\floor{\log (n)/d}}))
    \leq O(d^{\frac{3}{2}} n^{1-\frac{1}{d}}),
\end{align*}
concluding the proof of \Cref{lem:euc-alg}.

\subsection{Proof of \Cref{lem:lb-smooth}}
\label{sec:proof-lem-lb-smooth}

The proof for $d \geq 2$ relies on the following lower bound for the minimum cost of matching each request, which is established by employing nearest-neighbor-distance.

\begin{lemma}[Lemma 14 in \cite{DBLP:journals/ior/YangY26}]\label{lemma:near-neigh-dist}
    Let $\D_i$ be a $\sigma$-smooth distribution over $[0, 1]^d$ with $\sigma \in (0, 1]$.
    For all $d \geq 1$ and finite $S \subseteq [0, 1]^d$, 
    \begin{align*}
        \E_{r_i \sim \D_i} \left[ \min_{s \in S} \norm{s - {r_i}}_2 \right] \geq \Omega \left( \sigma^{\frac{1}{d}}|S|^{-\frac{1}{d}} \right).
    \end{align*}
\end{lemma}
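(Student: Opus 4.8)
The plan is to combine a volumetric bound on the $\rho$-neighborhood of the server set with the anti-concentration inherent in $\sigma$-smoothness, and then convert the resulting probability bound into a lower bound on the expected nearest-neighbor distance by a one-line tail argument.

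First I would fix $\rho > 0$ and let $N_\rho(S) := \{x \in [0,1]^d : \min_{s \in S}\norm{x - s}_2 \le \rho\}$ be the $\rho$-neighborhood of $S$. Since $N_\rho(S)$ is covered by the $n$ Euclidean balls of radius $\rho$ centered at the points of $S$, its Lebesgue volume is at most $n\omega_d\rho^d$, where $\omega_d := \pi^{d/2}/\Gamma(d/2 + 1)$ is the volume of the unit ball in $\R^d$; here it is worth noting that balls around points near the boundary of $[0,1]^d$ may protrude out of the cube, but this only shrinks $\U(N_\rho(S))$, so the bound is unaffected. As $\U$ is the uniform (Lebesgue) measure on $[0,1]^d$, we get $\U(N_\rho(S)) \le n\omega_d\rho^d$, and the $\sigma$-smoothness of $\D_i$ then yields
\begin{align*}
    \Pr_{r_i \sim \D_i}\left[\min_{s \in S}\norm{s - r_i}_2 \le \rho\right]
    = \D_i\bigl(N_\rho(S)\bigr)
    \le \frac{\U(N_\rho(S))}{\sigma}
    \le \frac{n\omega_d\rho^d}{\sigma}.
\end{align*}

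Second, I would set $\rho_0 := \bigl(\sigma/(2n\omega_d)\bigr)^{1/d}$, so that the probability above is at most $1/2$, i.e.\ $\Pr[\min_{s \in S}\norm{s - r_i}_2 \ge \rho_0] \ge 1/2$. Writing $Z := \min_{s \in S}\norm{s - r_i}_2$, the trivial pointwise bound $Z \ge \rho_0 \cdot \indc{Z \ge \rho_0}$ gives $\E[Z] \ge \rho_0 \cdot \Pr[Z \ge \rho_0] \ge \rho_0/2$, which finishes the estimate provided we can bound $\rho_0$ below by $\Omega(\sigma^{1/d}n^{-1/d})$. This reduces to checking that $\omega_d$ is bounded by an absolute constant; indeed $\omega_d \le \omega_5 < 6$ for every $d$ (the sequence $\omega_d$ peaks at $d=5$ and then decays to $0$), so $\rho_0 \ge (\sigma/(12n))^{1/d} = \Omega(\sigma^{1/d}n^{-1/d})$, and hence $\E[Z] = \Omega(\sigma^{1/d}n^{-1/d})$, as claimed.

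The argument is essentially routine; the only subtlety is the dimension dependence hidden in $\omega_d$. Since $\omega_d^{1/d} = \Theta(d^{-1/2})$, treating $\omega_d$ as $O(1)$ suffices for the stated bound---in fact one even gains a $\Theta(\sqrt d)$ factor in $\rho_0$, which is immaterial for \Cref{lemma:near-neigh-dist} but is consistent with the fact that the eventual bounds in \Cref{lem:lb-smooth} carry only a $\poly(d)$ overhead. The remaining care is bookkeeping: $N_\rho(S)$ must be measured inside $[0,1]^d$ so that $\U$ and $\D_i$ remain honest probability measures, which is exactly the boundary remark made above.
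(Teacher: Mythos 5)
Your proof is correct. The paper itself does not reprove this lemma (it is imported as Lemma 15 of \cite{yang2024online}), and your argument is the standard one such a statement admits: cover the $\rho$-neighborhood of $S$ by $n$ balls to bound its uniform measure by $n\omega_d\rho^d$, invoke $\sigma$-smoothness to get $\Pr[\min_{s\in S}\norm{s-r_i}_2\le\rho]\le n\omega_d\rho^d/\sigma$, and convert the resulting median-type bound at $\rho_0=(\sigma/(2n\omega_d))^{1/d}$ into $\E[Z]\ge\rho_0/2$; your handling of the constant via $\omega_d\le\omega_5<6$ (indeed $\omega_d^{1/d}=\Theta(d^{-1/2})$ only helps) is also sound, so the bound holds with an absolute constant, matching the statement.
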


To see that \Cref{lemma:near-neigh-dist} implies the lower bound for $d \geq 2$, note that
\begin{align*}
    \opt(S, \D)
    \geq \sum_{i=1}^n \E_{r_i \sim \D_i} \left[ \min_{s \in S} \norm{s-r_i}_2 \right]
    \geq \Omega \left( \sigma^{\frac{1}{d}} n^{1 - \frac{1}{d}} \right),
\end{align*}
as desired.

The rest of this subsection is devoted to proving the lower bound for $d = 1$.
Fix $S$ and $\D = \prod_{i=1}^n \D_i$, and let $R \sim \D$.
Fix an arbitrary min-cost perfect matching $M$ between $S$ and $R$.
For each $x \in [0, 1]$, define $\hat{m}(x)$ as the number of matches in $M$ that ``cross'' $x$, i.e., one endpoint of the match is in $[0, x]$, and the other endpoint is in $[x, 1]$.
Note that
\begin{align*}
    \opt(S, \D)
    = \E \left[ \int_0^1 \hat{m}(x) \mathrm{d} x \right].
\end{align*}
Let $L := \sigma / 4$.
For each $x \in [0, 1 - L]$, define $\hat{s}(x) := |S \cap [x, x + L]|$ as the number of servers in $[x, x + L]$ and $\hat{r}(x) := |R \cap [x, x + L]|$ as the (random) number of requests in $[x, x + L]$.
Note that if $\hat{s}(x) > \hat{r}(x)$, then at least $\hat{s}(x) - \hat{r}(x)$ servers in $[x, x + L]$ have to be matched to requests outside of $[x, x + L]$; similarly, if $\hat{s}(x) < \hat{r}(x)$, then at least $\hat{r}(x) - \hat{s}(x)$ requests in $[x, x + L]$ have to be matched to servers outside of $[x, x + L]$.
Hence, for each $x \in [0, 1 - L]$, $\hat{m}(x) + \hat{m}(x + L) \geq |\hat{s}(x) - \hat{r}(x)|$.
As a result,
\begin{align*}
    \int_0^{1 - L} |\hat{s}(x) - \hat{r}(x)| \mathrm{d} x
    \leq \int_0^{1 - L} (\hat{m}(x) + \hat{m}(x + L)) \mathrm{d} x
    \leq 2 \int_0^1 \hat{m}(x) \mathrm{d} x.
\end{align*}
Hence, it suffices to show that
\begin{align}\label{eqn:lb-expect-sum-abs-diff}
    \E \left[ \int_0^{1 - L}  |\hat{s}(x) - \hat{r}(x)| \mathrm{d} x \right]
    \geq \Omega(\sigma \sqrt{n}).
\end{align}

Fix $x \in [0, 1 - L]$, and we analyze $\E [|\hat{s}(x) - \hat{r}(x)|]$.
For each $i \in [n]$, we use $\mu_{\D_i}$ to denote the density of $\D_i$ with respect to the uniform distribution over $[0, 1]$.
Note that $\hat{r}(x) \sim \pb(\bfw(x))$, where $w_i(x) := \int_{x}^{x + L} \mu_{\D_i}(y) \mathrm{d} y$ for each $i \in [n]$.
Define $W(x) := \norm{\bfw(x)}_1$, which implies $\E[\hat{r}(x)] = W(x)$.
The following lemma lower bounds $\E[|\hat{s}(x) - \hat{r}(x)|]$.

\begin{lemma}\label{lem:bound-mean-devia}
    $\E[|\hat{s}(x) - \hat{r}(x)|] \geq \Omega(\sqrt{W(x)}) - O(1)$.
\end{lemma}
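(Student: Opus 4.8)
The plan is to exploit two facts: $\hat s(x)$ is a \emph{deterministic} integer while $\hat r(x)\sim\pb(\bfw(x))$ has mean $W(x)$, and $\sigma$-smoothness forces each coordinate $w_i(x)=\D_i([x,x+L])$ to be small, namely $w_i(x)\le L/\sigma=1/4$ since $L=\sigma/4$. The first observation lets me discard the constant $\hat s(x)$ by symmetrization: if $\hat r'(x)$ is an independent copy of $\hat r(x)$, then for \emph{any} fixed real $m$, $\E[|\hat r(x)-\hat r'(x)|]\le\E[|\hat r(x)-m|]+\E[|m-\hat r'(x)|]=2\,\E[|\hat r(x)-m|]$, so taking $m=\hat s(x)$ reduces the claim to lower bounding $\E[|Z|]$ for $Z:=\hat r(x)-\hat r'(x)$. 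Writing $Z=\sum_{i=1}^n\zeta_i$ with $\zeta_i:=X_i-X_i'$ and $X_i,X_i'\sim\ber(w_i(x))$ all independent makes $Z$ a sum of independent, symmetric, mean-zero random variables supported on $\{-1,0,1\}$ with $\E[\zeta_i^2]=2w_i(x)(1-w_i(x))$.

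From here I would run a second-versus-fourth moment argument. The variance is $\sigma_Z^2:=\sum_i 2w_i(x)(1-w_i(x))\ge \frac32\sum_i w_i(x)=\frac32 W(x)$, where the inequality uses $1-w_i(x)\ge 3/4$ (the one place the bound $w_i(x)\le1/4$, i.e.\ the choice $L=\sigma/4$, enters). For the fourth moment, independence and $\E\zeta_i=0$ kill the odd cross terms, and since $\zeta_i^4=\zeta_i^2$ one gets $\E[Z^4]\le\sum_i\E[\zeta_i^2]+3(\sum_i\E[\zeta_i^2])^2=\sigma_Z^2+3\sigma_Z^4$. Hölder's inequality with exponents $(3/2,3)$ gives $\E[Z^2]\le(\E|Z|)^{2/3}(\E[Z^4])^{1/3}$, hence $\E|Z|\ge(\E[Z^2])^{3/2}/(\E[Z^4])^{1/2}=\sigma_Z^2/\sqrt{1+3\sigma_Z^2}$. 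When $W(x)\ge 2/3$ we have $\sigma_Z^2\ge1$, so $\sqrt{1+3\sigma_Z^2}\le2\sigma_Z$ and $\E|Z|\ge\sigma_Z/2\ge\frac12\sqrt{3W(x)/2}=\Omega(\sqrt{W(x)})$; combining with the symmetrization step yields $\E[|\hat s(x)-\hat r(x)|]\ge\frac12\E|Z|=\Omega(\sqrt{W(x)})$. When $W(x)<2/3$, the target bound $\Omega(\sqrt{W(x)})-O(1)$ is nonpositive for a suitable choice of constants and so holds trivially; this is exactly what the additive $-O(1)$ slack absorbs.

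The only delicate point is the anti-concentration: a variance lower bound alone does not bound $\E|Z|$ from below, since a heavy-tailed variable can have large variance yet tiny mean absolute deviation. The fourth-moment estimate $\E[Z^4]=O(\sigma_Z^4)$—valid because each $\zeta_i$ is bounded—is what rules this out and makes the Hölder step effective; the boundedness of the $\zeta_i$ is in turn automatic once we work with $Z=\hat r(x)-\hat r'(x)$ rather than $\hat r(x)$ directly. An alternative to the fourth-moment route would be to invoke a local-limit-type bound $\max_k\Pr[\pb(\bfw(x))=k]=O(1/\sqrt{W(x)})$ (which again needs the $w_i(x)$ bounded away from $1$), but the symmetrization-plus-Hölder argument above is self-contained and avoids quoting such a statement.
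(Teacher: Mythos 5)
Your proof is correct, but it takes a genuinely different route from the paper's. The paper also begins by removing $\hat s(x)$, reducing to the mean absolute deviation of $\hat r(x)$ about its mean $W(x)$; its core argument is then a majorization step: since smoothness gives $w_i(x)\le L/\sigma\le 1/2$, the vector that packs total mass $W(x)$ into coordinates equal to $1/2$ majorizes $\bfw(x)$, so the convex-order comparison (\Cref{cor:convex-order-mean-abs-devia}) shows the worst case is essentially $\bin(\lfloor 2W(x)\rfloor,\tfrac12)$, to which the binomial bound $\E[|Z-\E Z|]\ge \std(Z)/\sqrt2$ of \cite{berend2013sharp} is applied, yielding $\Omega(\sqrt{W(x)})-O(1)$. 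You instead symmetrize with an independent copy and lower bound $\E|Z|$ for the bounded, symmetric, independent sum $Z=\hat r(x)-\hat r'(x)$ via the second-versus-fourth-moment (H\"older) inequality, with smoothness entering only through $1-w_i(x)\ge 3/4$ so that $\operatorname{Var}(Z)\ge \tfrac32 W(x)$; your fourth-moment estimate $\E[Z^4]\le\sigma_Z^2+3\sigma_Z^4$ and the resulting bound $\E|Z|\ge\sigma_Z^2/\sqrt{1+3\sigma_Z^2}$ check out, and the case $W(x)<2/3$ is indeed absorbed by the $-O(1)$ slack. What each approach buys: yours is self-contained (no convex order, no external binomial MAD estimate) at the cost of slightly worse constants and a small case split; the paper's leverages machinery it has already set up (\Cref{lemma:convex_order}) and gets a cleaner explicit constant. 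A minor point in your favor: your triangle-inequality symmetrization $\E[|\hat r(x)-\hat r'(x)|]\le 2\,\E[|\hat r(x)-\hat s(x)|]$ rigorously discards the constant $\hat s(x)$, whereas the paper's one-line ``Jensen'' step $\E[|\hat s(x)-\hat r(x)|]\ge\E[|\hat r(x)-W(x)|]$ is, as literally stated, only valid up to a factor $2$ (the minimizer of $c\mapsto\E[|\hat r(x)-c|]$ is a median, not the mean); this does not affect the $\Omega(\cdot)$ conclusion in either argument.
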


\begin{proof}
    By the triangle inequality,
    \begin{align*}
        \E[|\hat{r}(x) - W(x)|]
        \leq \E[|\hat{r}(x) - \hat{s}(x)|] + |\hat{s}(x) - W(x)|
        \leq 2\E[|\hat{s}(x) - \hat{r}(x)|],
    \end{align*}
    where the second inequality holds by Jensen's inequality.
    Consequently, it suffices to show that
    \begin{align}\label{eqn:lb-mean-abs-devia-crude}
        \E[|\hat{r}(x) - W(x)|]
        = \E[|\pb(\bfw(x)) - W(x)|]
        \geq \Omega(\sqrt{W(x)}) - O(1).
    \end{align}
    For each $i \in [n]$, by the $\sigma$-smoothness of $\D_i$,
    \begin{align*}
        w_i(x)
        = \int_{x}^{x + L} \mu_{\D_i}(y) \mathrm{d} y
        \leq \frac{L}{\sigma}
        \leq \frac{1}{2}.
    \end{align*}
    Let $\bfw' \in \R^n_{\geq 0}$ satisfy
    \begin{align*}
        w_i' =
        \begin{cases}
            1/2, & i \leq \lfloor 2W(x) \rfloor, \\
            W(x) - \lfloor 2W(x) \rfloor / 2, & i = \lfloor 2W(x) \rfloor + 1, \\
            0, & \text{otherwise},
        \end{cases}
    \end{align*}
    which gives $\norm{\bfw'}_1 = W(x)$ and $\norm{\bfw'}_{\infty} \leq 1/2$.
    Since $\bfw' \succ \bfw(x)$, by \Cref{cor:convex-order-mean-abs-devia},
    \begin{align*}
        \E[|\pb(\bfw(x)) - W(x)|]
        \geq \E[|\pb(\bfw') - W(x)|].
    \end{align*}
    This would imply \eqref{eqn:lb-mean-abs-devia-crude} since
    \begin{align*}
        \E[|\pb(\bfw') - W(x)|]
        &\geq \E\left[ \left| \bin\left(\lfloor 2W(x) \rfloor, \frac{1}{2} \right) - \frac{\lfloor 2W(x) \rfloor}{2} \right| \right] - 1 \\
        &\geq \frac{\sqrt{\lfloor 2W(x) \rfloor}}{2 \sqrt{2}} - 1
        \geq \Omega(\sqrt{W(x)}) - O(1),
    \end{align*}
    where the second inequality holds by the following probabilistic bound.

    \begin{claim}[\cite{berend2013sharp}]\label{cla:binom-pro-bound}
        Let $Z \sim \bin(n, p)$, with $n \geq 2$ and $p \in [1 / n, 1 - 1 / n]$.
        Then, we have
        \begin{align*}
            \E[|Z - \E Z|] \geq \std(Z) / \sqrt{2}.
        \end{align*}
    \end{claim}
\end{proof}

By \Cref{lem:bound-mean-devia},
\begin{align*}
    \E \left[ \int_{0}^{1 - L}  |\hat{s}(x) - \hat{r}(x)| \mathrm{d} x \right]
    = \int_{0}^{1 - L}  \E[|\hat{s}(x) - \hat{r}(x)|] \mathrm{d} x
    \geq \Omega \left( \int_{0}^{1 - L} \sqrt{W(x)} \mathrm{d} x \right) - O(1),
\end{align*}
and \eqref{eqn:lb-expect-sum-abs-diff} follows from the following lemma.

\begin{lemma}
It holds that
\begin{align*}
    \int_{0}^{1 - L} \sqrt{W(x)} \mathrm{d} x
    \geq \Omega(\sigma \sqrt{n}).
\end{align*}
\end{lemma}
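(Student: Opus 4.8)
The plan is to pass from the sum $W(x)=\norm{\bfw(x)}_1$ to the aggregate density $\rho(y):=\sum_{i=1}^n p_i(y)$, for which $\int_0^1 \rho(y)\,\mathrm{d}y=n$ and, crucially, $\sigma$-smoothness of each $\D_i$ gives the pointwise bound $\rho(y)\le n/\sigma$ almost everywhere. Since $W(x)=\int_x^{x+L}\rho(y)\,\mathrm{d}y$ and $L=\sigma/4$, this immediately yields the uniform upper bound $W(x)\le L\cdot n/\sigma=n/4$ for every $x\in[0,1-L]$.

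The key step is to use this uniform bound to linearize the concave integrand: because $\sqrt{t}\ge t/\sqrt{M}$ whenever $0\le t\le M$, we get $\sqrt{W(x)}\ge \tfrac{2}{\sqrt n}\,W(x)$, and hence
\begin{align*}
\int_0^{1-L}\sqrt{W(x)}\,\mathrm{d}x \;\ge\; \frac{2}{\sqrt n}\int_0^{1-L}W(x)\,\mathrm{d}x .
\end{align*}
So it suffices to show $\int_0^{1-L}W(x)\,\mathrm{d}x=\Omega(n\sigma)$. First I would apply Fubini to write $\int_0^{1-L}W(x)\,\mathrm{d}x=\int_0^1 \rho(y)\,\ell(y)\,\mathrm{d}y$, where $\ell(y)$ is the length of $\{x\in[0,1-L]: x\le y\le x+L\}$; one checks $\ell(y)=L$ for all $y\in[L,1-L]$ (and $\ell(y)\ge 0$ always), so $\int_0^{1-L}W(x)\,\mathrm{d}x\ge L\int_L^{1-L}\rho(y)\,\mathrm{d}y$. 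Then a one-line anti-concentration argument finishes it: the mass $\rho$ can place outside $[L,1-L]$ is at most $2L\cdot(n/\sigma)=n/2$, so $\int_L^{1-L}\rho\ge n/2$ and therefore $\int_0^{1-L}W(x)\,\mathrm{d}x\ge Ln/2=n\sigma/8$. Combining, $\int_0^{1-L}\sqrt{W(x)}\,\mathrm{d}x\ge \tfrac{2}{\sqrt n}\cdot\tfrac{n\sigma}{8}=\tfrac{\sigma\sqrt n}{4}=\Omega(\sigma\sqrt n)$.

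I do not expect a serious obstacle here; the argument is short. The only non-routine move is the ``reverse concavity'' step, where the smoothness-derived uniform cap $W(x)\le n/4$ is used to replace $\int\sqrt{W}$ by $\int W$ — and it is worth noting that the same smoothness bound reappears in a second role, as the anti-concentration estimate bounding the $\rho$-mass near the two endpoints of $[0,1]$. (A slightly sharper constant is available by keeping the exact profile $\ell(y)=\min(y,L,1-y)$ and a level-set/water-filling argument, but the weaker bound above already suffices.)
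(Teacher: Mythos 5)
Your proposal is correct and follows essentially the same route as the paper: Fubini to get $\int_0^{1-L} W(x)\,\mathrm{d}x \ge L\int_L^{1-L}\sum_i p_i(y)\,\mathrm{d}y$, smoothness to show the boundary intervals carry at most half the mass, and the concavity-based linearization $\sqrt{W(x)} \ge W(x)/\sqrt{M}$ with a uniform cap on $W$. The only cosmetic differences are that you aggregate the densities into $\rho$ and use the sharper cap $W(x)\le n/4$ from smoothness, whereas the paper argues per-distribution and uses the trivial cap $W(x)\le n$; these do not change the argument.
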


\begin{proof}
By the definition of $W(x)$,
\begin{align*}
    \int_{0}^{1 - L} W(x) \mathrm{d} x
    &= \int_{0}^{1 - L} \sum_{i=1}^n \int_{x}^{x + L} \mu_{\D_i}(y) \mathrm{d} y \mathrm{d} x
    = \sum_{i=1}^n \int_{0}^1 \mu_{\D_i}(y) \int_{0}^{1 - L} \indc{y \in [x, x + L]} \mathrm{d} x \mathrm{d} y \\
    &= \sum_{i=1}^n \int_{0}^1 \mu_{\D_i}(y) \cdot \min\{L, y, 1 - y\} \mathrm{d} y
    \geq L \sum_{i=1}^n \int_{L}^{1 - L} \mu_{\D_i}(y) \mathrm{d} y.
\end{align*}
For each $i \in [n]$, by the $\sigma$-smoothness of $\D_i$,
\begin{align*}
    \int_{L}^{1 - L} \mu_{\D_i}(y) \mathrm{d} y
    = 1 - \int_{0}^L \mu_{\D_i}(y) \mathrm{d} y - \int_{1 - L}^1 \mu_{\D_i}(y) \mathrm{d} y
    \geq 1 - \frac{2L}{\sigma}
    = \frac{1}{2}.
\end{align*}
Combining the above two displayed equations,
\begin{align}\label{eqn:lb-int-wx}
    \int_{0}^{1 - L} W(x)
    \geq \frac{Ln}{2}
    = \Omega(\sigma n).
\end{align}
For every $x \in [0, 1 - L]$, since $W(x) \in [0, n]$, we have $\sqrt{W(x)} \geq W(x) / \sqrt{n}$.
It follows that
\begin{align*}
    \int_{0}^{1 - L} \sqrt{W(x)} \mathrm{d} x
    \geq \frac{1}{\sqrt{n}} \int_0^{1 - L} W(x) \mathrm{d} x
    \geq \Omega(\sigma \sqrt{n}),
\end{align*}
where the second inequality holds by \eqref{eqn:lb-int-wx}, concluding the proof.
\end{proof}
\section{Discussion and Future Directions}

In this paper, we study the online metric matching problem for the Euclidean space $[0, 1]^d$ when servers are adversarial and requests are independently drawn from distinct smooth distributions. We present an $O(1)$-competitive algorithm for $[0, 1]^d$ with $d \neq 2$, given a single sample from each request distribution.
A key feature of our approach is that, by directly upper-bounding the algorithm's cost after a simple deterministic metric embedding, we bypass the $\Omega(\log n)$ competitive-ratio barrier that arises in the adversarial setting due to metric distortion.
Since metric embeddings into HSTs have already been proven extremely effective for related online problems such as $k$-server~\cite{DBLP:journals/jacm/BansalBMN15,DBLP:conf/stoc/BubeckCLLM18}, $k$-taxi~\cite{DBLP:conf/soda/GuptaKP24}, and several variants of online metric matching~\cite{DBLP:conf/stoc/EmekKW16,DBLP:conf/soda/BhoreFT24}, a natural and exciting future direction is to extend our techniques to these problems.

Our guarantees rely on requests being independently sampled. An intriguing direction would be to see what forms of correlation among requests might still permit an $o(\log n)$ competitive ratio.
As a starting point, recent breakthroughs in smoothed analysis of online learning~\cite{DBLP:conf/colt/BlockDGR22,DBLP:journals/jacm/HaghtalabRS24} allow each arrival's distribution---while required to be smooth---to depend on the realized history of arrivals and algorithmic decisions, and their techniques may extend to our setting.
In addition, the correlation models studied in online stochastic matching~\cite{DBLP:conf/sigecom/AouadM23} and prophet inequalities~\cite{DBLP:journals/teco/ImmorlicaSW23} present additional promising avenues.

\section*{Acknowledgment}
This work was supported in part by NSF grant CCF-2338226.
We would like to thank the anonymous reviewers for their many helpful comments and suggestions.
Refine.ink was used to check the paper for consistency and clarity.

\bibliographystyle{alpha}
\bibliography{references}

\clearpage
\appendix

\section{Proof of \Cref{lem:stochastic-to-semi}}
\label{sec:proof-lem-stochastic-to-semi}

We first recall the following lemma from \cite{DBLP:journals/ior/YangY26}.

\begin{lemma}[Theorem 1 in \cite{DBLP:journals/ior/YangY26}]\label{lem:thm-1-yy26}
    Given an algorithm $\calA$ and a set $I \subseteq \calX$ with $|I| = n$, there exists an algorithm $\calA'$ such that $\cost(\calA'; S, R) \leq \opt(S, I) + \cost(\calA; I, R)$.
\end{lemma}

Let $I$ be the set of provided samples drawn from request distributions.
By \Cref{lem:thm-1-yy26}, there exists an algorithm $\calA'$ such that
\begin{align*}
    \cost(\calA'; S, R)
    \leq \opt(S, I) + \cost(A; I, R).
\end{align*}
Since $I \sim \D$ and $R \sim \D$ are independent, taking expectations on both sides yields
\begin{align*}
    \cost(\calA; S, \D)
    \leq \opt(S, \D) + \cost(A; \D, \D),
\end{align*}
as desired.
\section{Proof of \Cref{cor:RS-special-hst}}
\label{sec:proof-cor-RS-special-hst}

By \Cref{thm:cost-random-subtree},
\begin{align*}
    \cost(\RS; \D, \D)
    &\leq O \left( d \sqrt{n} \sum_{j=0}^{h-1} \frac{2^{(h-j)d/2}}{2^{h-j-1}} \sum_{\ell = 0}^{j} \left( \frac{H_{2^d}}{2} \right)^{\ell} \right) \\
    &= O \left( d \sqrt{n} \sum_{j=0}^{h-1} 2^{(h-j)(\frac{d}{2}-1)} \cdot \frac{(H_{2^d} / 2)^{j+1}-1}{H_{2^d}/2-1} \right)
\end{align*}
for the $2^d$-ary $2$-HST with height $h$.
When $d = 1$, $H_{2^d} = 3/2$, and
\begin{align*}
    \sum_{j=0}^{h-1} 2^{(h-j)(\frac{d}{2}-1)} \cdot \frac{(H_{2^d} / 2)^{j+1}-1}{H_{2^d}/2-1}
    = O \left( \sum_{j=0}^{h - 1} 2^{-(h-j)/2} \right)
    = O(1).
\end{align*}
When $d = 2$, $H_{2^d} = 25 / 12$, and
\begin{align*}
    \sum_{j=0}^{h-1} 2^{(h-j)(\frac{d}{2}-1)} \cdot \frac{(H_{2^d} / 2)^{j+1}-1}{H_{2^d}/2-1}
    = O \left( \sum_{j=0}^{h-1} \left( \frac{25}{24} \right)^{j+1} \right)
    = O \left( \left( \frac{25}{24} \right)^h \right)
    = O(n^{\frac{1}{2} - \frac{1}{2 \log (25/12)}}).
\end{align*}
When $d \geq 3$, note that $H_{2^d} / 2^{d/2} \leq c$ for some constant $0 < c < 1$, and it follows that
\begin{align*}
    \sum_{j=0}^{h-1} 2^{(h-j)(\frac{d}{2}-1)} \cdot \frac{(H_{2^d} / 2)^{j+1}-1}{H_{2^d}/2-1}
    &= O \left( \sum_{j=0}^{h-1} 2^{(h-j)(\frac{d}{2}-1)} \cdot \left( \frac{H_{2^d}}{2} \right)^j \right) \\
    &= O \left( 2^{h(\frac{d}{2}-1)} \sum_{j=0}^{h-1} \left( \frac{H_{2^d}}{2^{d/2}} \right)^j \right)
    = O \left( 2^{h(\frac{d}{2}-1)} \right)
    = O(n^{\frac{1}{2} - \frac{1}{d}}),
\end{align*}
as desired.
\section{Proof of \Cref{cor:bbgn-specific-hst}}
\label{sec:proof-cor-bbgn-specific-hst}

Note that
\begin{align*}
    2^{d(h-1)}
    \leq n \cdot 2^{-d}
    \leq \frac{n}{2},
\end{align*}
as required by \Cref{thm:cost-bbgn}.
By \Cref{thm:cost-bbgn},
\begin{align*}
    \cost(\BBGN; \D, \D)
    &\leq O \left( \sqrt{n} \sum_{j=1}^{h} \frac{2^{(h-j+1)d/2}}{2^{h-j}} \cdot \log\frac{n}{2^{(h-j)d}} \right) \\
    &= O \left( \sqrt{n} 2^{d/2} \left( \log(n) \sum_{j=1}^h 2^{(\frac{d}{2}-1)(h-j)} - d \sum_{j=1}^h (h-j) 2^{(\frac{d}{2}-1)(h-j)} \right) \right) \\
    &= O \left( \sqrt{n} 2^{d/2} \left( \log(n) \sum_{j=0}^{h-1} 2^{(\frac{d}{2}-1)j} - d \sum_{j=0}^{h-1} j 2^{(\frac{d}{2}-1)j} \right) \right)
\end{align*}
for the $2^d$-ary $2$-HST with height $h$.
When $d = 1$,
\begin{align*}
    2^{d/2} \left( \log(n) \sum_{j=0}^{h-1} 2^{(\frac{d}{2}-1)j} - d \sum_{j=0}^{h-1} j 2^{(\frac{d}{2}-1)j} \right)
    = O \left( \log(n) \sum_{j=0}^{h-1} 2^{-j/2} - \sum_{j=0}^{h-1} j 2^{-j/2} \right)
    = O(\log n).
\end{align*}
When $d = 2$,
\begin{align*}
    2^{d/2} \left( \log(n) \sum_{j=0}^{h-1} 2^{(\frac{d}{2}-1)j} - d \sum_{j=0}^{h-1} j 2^{(\frac{d}{2}-1)j} \right)
    = O \left( h\log(n) - h(h-1) \right)
    = O(\log^2 n).
\end{align*}
When $d \geq 3$,
\begin{align*}
    &2^{d/2} \left( \log(n) \sum_{j=0}^{h-1} 2^{(\frac{d}{2}-1)j} - d \sum_{j=0}^{h-1} j 2^{(\frac{d}{2}-1)j} \right) \\
    &= 2^{d/2} \left( \log(n) \cdot \frac{2^{(\frac{d}{2}-1)h} - 1}{2^{\frac{d}{2}-1} - 1} - d \left( \frac{(h-1)2^{(\frac{d}{2}-1)h}}{2^{\frac{d}{2}-1}-1} - \frac{2^{(\frac{d}{2}-1)h} - 2^{\frac{d}{2}-1}}{(2^{\frac{d}{2}-1} - 1)^2} \right) \right) \\
    &\leq 2^{d/2} \left( (\log(n) - dh) \cdot \frac{2^{(\frac{d}{2}-1)h}}{2^{\frac{d}{2}-1} - 1} + \frac{d2^{(\frac{d}{2}-1)h}}{2^{\frac{d}{2}-1}-1} + \frac{d2^{(\frac{d}{2}-1)h}}{(2^{\frac{d}{2}-1} - 1)^2} \right) \\
    &= O(d2^{(\frac{d}{2}-1)h})
    = O(dn^{\frac{1}{2} - \frac{1}{d}}),
\end{align*}
where the second equality holds since $h \geq \log(n) / d - 1$, concluding the proof.

\end{document}